\newif\ifSHORTversion
\newcommand{\emath}[1]{\ensuremath{#1}}
\newcommand{\ivar}{\emath{i}}
\newcommand{\jvar}{\emath{j}}
\newcommand{\kvar}{\emath{k}}
\newcommand{\xvar}{\emath{x}}
\newcommand{\yvar}{\emath{y}}
\newcommand{\probNotHold}{q}
\newcommand{\probability}{\emath{\mathbb{P}}} 
\newcommand{\prob}[1]{ \mbox{$\mathbb{P}[#1]$} }
\newcommand{\probcond}[2]{\mbox{$\mathbb{P}[#1| \,#2]$}}
\newcommand{\expectation}[1]{\mbox{$\mathbb{E}\left[#1\right]$}}
\newcommand{\condexpectation}[2]{\mbox{$\mathbb{E}\left[#1\,|\, #2\right]$}} 
\newcommand{\expect}{\emath{\mathbb{E}}}
\newcommand{\pdistr}{\emath{\varphi}}
	\newtheorem{theorem}{Theorem}[section]
	\newtheorem{problem}{Problem}
	\newtheorem{lemma}[theorem]{Lemma}
	\newtheorem{remark}[theorem]{Remark}
	\newtheorem{prop}[theorem]{Proposition}
\hskip \labelsep {\bfseries #1}]
\newcommand{\reals}{ \emath{\mathbb{R}} }
\newcommand{\vecnorm}[1]{\emath{\left\|#1\right\|}}
\newcommand{\probtext}{\operatorname*{prob}}
\newcommand{\xpose}{\emath{ {\rm T} }}
\newcommand{\locvar}{\emath{x}}
\newcommand{\locvartwo}{\emath{y}}
\newcommand{\picktag}{\emath{ {\rm P} }}
\newcommand{\delvtag}{\emath{ {\rm D} }}
\newcommand{\env}{\emath{\Omega}}
\newcommand{\diam}{\mathrm{D}(\env)}
\newcommand{\zb}{\mathcal Z}
\newcommand{\arrivalrate}{\emath{\lambda}}
\newcommand{\utilization}{\emath{\varrho}}
\newcommand{\numveh}{\emath{m}}
\newcommand{\den}{\varphi}
\newcommand{\avglink}{ \travelbetween }
\newcommand{\algog}{\mathcal M}
\newcommand{\indicator}[2]{ { I_{#1}{ \left(#2\right) } } }
\newcommand{\numdem}{\emath{n}}
\newcommand{\oneover}[1]{\emath{\frac{1}{#1}}}
\newcommand{\travelbetween}{ \emath{l} }
\newcommand{\probdim}{\emath{d}}
\newcommand{\tourlen}{\emath{L}}
\newcommand{\sidelength}{\emath{L}}
\newcommand{\sidecuts}{\emath{r}}
\newcommand{\workcell}{\emath{C}}
\newcommand{\cellset}{\emath{\mathcal{C}}}
\newcommand{\transport}{\emath{A}}
\newcommand{\Xvar}{\emath{X}}
\newcommand{\Yvar}{\emath{Y}}
\newcommand{\subn}{\emath{N}}
\newcommand{\cyclevar}{\mathcal L}
\newcommand{\permcyclecount}{N}
\newcommand{\batchsize}{\emath{n}}
\newcommand{\batchset}{\emath{\mathcal S}}
\newcommand{\batchvar}{\emath{s}}
\newcommand{\permset}{\emath{\Pi}}
\newcommand{\matchperm}{\emath{ {\permset^*} }}
\newcommand{\permvar}{\emath{\sigma}}
\newcommand{\M}{L_{\text{M}}}
\newcommand{\eps}{\varepsilon}
\newcommand{\pickset}{{\mathcal X}}
\newcommand{\delvset}{{\mathcal Y}}
\newcommand{\ESCPinst}{ {\text{ESCP}(\numdem,\den_\picktag,\den_\delvtag)} }
\newcommand{\transportset}{ {\mathcal T} }
\newcommand{\spatialjoint}{{\den_\picktag\den_\delvtag}}
\newcommand{\permelarg}[1]{ {\permvar\left({#1}\right)} }
\newcommand{\permelem}[1]{ { \permelarg{#1} } }
\newcommand{\batchsetdesc}{\env^{2n}}
\definecolor{gold}{rgb}{.75,.54,.25}
\tikzstyle{pickup}=[circle,draw=red!100] 
\tikzstyle{delivr}=[circle,dashed,draw=blue!100] 
\tikzstyle{p2d}=[->,thick]
\tikzstyle{stacker}=[->,thick,dashed,color=olive]
\tikzstyle{match}=[->,thick,dashed,color=blue]
\tikzstyle{primary}=[color=green,->,thick,dashed]
\tikzstyle{node-done} = [circle,draw=gray!100] 
\tikzstyle{edge-done} = [color=gray,->,thick]
\newcommand{\instsize}{6}
\newcommand{\instmatching}{ 1/2, 2/3, 3/1, 4/5, 5/6, 6/4 }
\newcommand{\randperm}{ 1/2, 2/5, 3/1, 4/4, 5/6, 6/3 }
\newcommand{\stackerlinks}{ 1/2, 2/3, 4/5, 6/4, 3/6 }
\newcommand{\stackerplus}{ 1/2, 2/3, 4/5, 6/4, 3/6, 5/1 }
\newcommand{\drawnodes}{
\foreach \x in { 1,...,\instsize }{
  \node [delivr] (delv\number\x) at (d\number\x) {$\x$} ;
  \node [pickup] (pick\number\x) at (p\number\x) {$\x$} ;
} }
\newcommand{\drawlinks}{
\foreach \x in { 1,...,\instsize }{
  \draw [p2d] (pick\number\x) -- (delv\number\x) ;
} }
\newcommand{\drawpnodes}{
\foreach \x / \y in \randperm {
  \node [delivr] (delv\number\y) at (d\number\y) {$\y$} ;
  \node [pickup] (pick\number\x) at (p\number\x) {$\y$} ;
} }
\newcommand{\drawplinks}{
\foreach \x / \y in \randperm {
  \draw [p2d] (pick\number\x) -- (delv\number\y) ;
} }
\newcommand{\drawmatching}{
\foreach \x / \y in { 1/2, 2/3, 3/1, 4/5, 5/6, 6/4 }{
  \draw [match] (delv\number\x) -- (pick\number\y) ;
} }
\newcommand{\drawstacker}{
\foreach \x / \y in \stackerlinks {
  \draw [stacker] (delv\number\x) -- (pick\number\y) ;
} }
\newcommand{\drawstackerplus}{
\foreach \x / \y in \stackerplus {
  \draw [stacker] (delv\number\x) -- (pick\number\y) ;
} }
\newcommand{\concat}{\operatorname{concat}}
\begin{document}

\title{Asymptotically Optimal Algorithms for Pickup and Delivery Problems with Application to Large-Scale Transportation Systems}

\author{Kyle Treleaven, Marco Pavone, Emilio Frazzoli
\thanks{Kyle Treleaven and Emilio Frazzoli are with the Laboratory for Information and Decision Systems, Department of Aeronautics and
Astronautics, Massachusetts Institute of Technology, Cambridge, MA 02139 {\tt\small \{ktreleav, frazzoli\}@mit.edu}.}%
\thanks{Marco Pavone is with the Department of Aeronautics and Astronautics, Stanford University, Stanford, CA 94305  {\tt\small pavone@stanford.edu}.}
\thanks{This research was supported in part by the Future Urban Mobility project of the Singapore-MIT Alliance for Research and Technology (SMART) Center, with funding from Singapore's National Research Foundation. }}
\maketitle

\begin{abstract}
Pickup and delivery problems (PDPs), in which objects or people have to be transported between specific locations, are among the most common combinatorial problems in
real-world logistical operations.
A widely-encountered type of PDP is the Stacker Crane Problem (SCP), where each commodity/customer is associated with a pickup location and a delivery location,
and the objective is to find a minimum-length tour visiting all locations with the constraint that each pickup location and its associated delivery location are visited in consecutive order.
The SCP is NP-Hard and the best known approximation algorithm only provides a 9/5 approximation ratio.
The objective of this paper is threefold. First, by embedding the problem within a stochastic framework, we present a novel algorithm for the SCP that:
(i) is asymptotically optimal, i.e., it produces, almost surely, a solution approaching the optimal one as the number of pickups/deliveries goes to infinity;
and (ii) has computational complexity $O(n^{2+\eps})$, where $n$ is the number of pickup/delivery pairs and $\eps$ is an arbitrarily small positive constant.
Second, we asymptotically characterize the length of the optimal SCP tour.
Finally, we study a dynamic version of the SCP, whereby pickup and delivery requests arrive according to a Poisson process, and which serves as a model for large-scale demand-responsive transport (DRT) systems. For such  a dynamic counterpart of the SCP, we derive a necessary and sufficient condition for the existence of stable vehicle routing policies, which depends only on the workspace
geometry, the stochastic distributions of pickup and delivery points, the arrival rate of requests, and the number of vehicles.
Our results leverage a novel connection between the Euclidean Bipartite Matching Problem and the theory of random permutations, and, for the dynamic setting, exhibit novel
features that are absent in traditional spatially-distributed queueing systems.

\end{abstract}

\section{Introduction}

Pickup and delivery problems (PDPs) constitute an important class of vehicle routing problems in which objects or people have to be transported between locations in a physical environment.
These problems arise in many contexts such as logistics, transportation systems, and robotics, among others. Broadly speaking, PDPs can be divided into three classes \cite{Berbeglia.Cordeau.ea:TOP07}: 1) Many-to-many PDPs, characterized by several origins and destinations for each commodity/customer; 2) one-to-many-to-one PDPs, where commodities are initially available at a depot and are destined to customers' sites, and commodities available at customers' sites are destined to the depot (this is the typical case for the collection of empty cans and bottles);
and 3) one-to-one PDPs, where each commodity/customer has a given origin and a given destination. 

When one adds capacity constraints to transportation vehicles and disallows transshipments, the one-to-one PDP is commonly referred to as the Stacker Crane Problem (SCP). The SCP is a route optimization problem at the core of several transportation systems, including demand-responsive transport (DRT) systems, where users formulate requests for transportation from a pickup point to a delivery point \cite{CJF-GL-ea:07, WJM-CEB-LDB:10}. Despite its importance, current algorithms for its solution are either of exponential complexity or come with quite poor guarantees on their performance; furthermore, most of the literature on the SCP does not consider the dynamic setting where pickups/deliveries are revealed sequentially in time. 
Broadly speaking, the objective of this paper is to devise polynomial-time algorithms for the SCP with \emph{probabilistic optimality guarantees}, and derive stability conditions for its \emph{dynamic} counterpart (where pickup/delivery requests are generated by an exogenous Poisson process and that serves as a model for DRT systems).

\emph{Literature overview}. The SCP, being a generalization of the Traveling Salesman Problem, is NP-Hard \cite{GNF.DJG:JC93}. The problem is NP-Hard even on trees, since the Steiner Tree Problem can be reduced to it \cite{FG:93}. In \cite{FG:93}, the authors present several approximation algorithms for tree graphs with a worst-case performance ratio ranging from 1.5 to around 1.21. The 1.5 worst-case algorithm, based on a Steiner tree approximation, runs in linear time. Recently, one of the polynomial-time algorithms presented in \cite{FG:93} has been shown to provide an optimal solution on almost all inputs (with a 4/3-approximation in the worst case) \cite{Krumke:JA06}. Even though the problem is NP-hard on general trees, the problem is in P on paths \cite{atallah}. For general graphs, the best approximation ratio is 9/5 and is achieved by an algorithm in \cite{FHK:S76}. Finally, an average case analysis of the SCP on trees has been examined for the special case of caterpillars as underlying graphs \cite{CO:P03}.

Dynamic SCPs are generally referred to in the literature as dynamic PDPs with unit-capacity vehicles (1-DPDPs); in the dynamic setting pickup/delivery requests are generated by an exogenous Poisson process and the objective is to minimize the waiting times of the requests. 1-DPDPs represent effective models for one-way vehicle sharing systems, which constitute a leading paradigm for future urban mobility \cite{WJM-CEB-LDB:10}. They are generally treated as a sequence of static subproblems and their performance properties, such as stability conditions, are, in general, not characterized analytically.
Thorough surveys on heuristics, metaheuristics  and online algorithms for 1-DPDPs can be found in \cite{GB-JFC-GL:10} and \cite{SNP-KFD-RFH:08}. Even though these algorithms are quite effective in addressing 1-DPDPs, alone they do not give any indication of fundamental limits of performance. To the best of our knowledge, the only \emph{analytical} studies for 1-DPDPs are \cite{Swihart.Papastavrou:99} and \cite{Treleaven.Pavone.ea:10}. Specifically, in \cite{Swihart.Papastavrou:99} the authors study the single vehicle case of the problem under the constraint that pickup and delivery distributions are
uniform; in \cite{Treleaven.Pavone.ea:10} the authors derive bounds for the more general case of multiple vehicles and general distributions, however under the quite unrealistic assumption of three-dimensional workspaces and identical distributions of pickup and delivery sites.

\emph{Contributions}. In this paper, we embed the SCP within a probability framework where origin/destination pairs are identically and independently distributed random variables within an Euclidean environment. Our random model is general in the sense that we consider potentially non-uniform distributions of points, with an emphasis on the case that the distribution of pickup sites is distinct from that of delivery sites; furthermore, the graph induced by the origin/destination pairs does not have any specific restrictions. We refer to this version of the SCP as the \emph{stochastic}  SCP.

The contribution of this paper is threefold. First, we present the \ref{alg:splicing} algorithm,
a polynomial-time algorithm for the stochastic SCP,
which is asymptotically optimal almost surely;
that is, except on a set of instances of zero probability,
the cost of the tour produced by this algorithm approaches the optimal cost as the number $\batchsize$ of origin/destination pairs goes to infinity.
In practice, convergence is very rapid and \ref{alg:splicing} computes solutions for the SCP within $5\%$ of the optimal cost for a number of pickup/delivery pairs as low as $100$.
The \ref{alg:splicing} algorithm has complexity of the order $O(\batchsize^{2+\epsilon})$ (for arbitrarily small positive constant $\epsilon$), where $\batchsize$ is the number of pickup/delivery pairs.
From a technical standpoint, these results leverage a novel connection between the Euclidean Bipartite Matching Problem and the theory of random permutations.

Second, we provide bounds for the cost of the optimal SCP solution, and also for the solution delivered by \ref{alg:splicing} (these bounds are asymptotic and hold almost surely).

Finally, by leveraging the previous results, we derive a necessary and sufficient stability condition for 1-DPDPs for the general case of multiple vehicles and possibly \emph{different} distributions for pickup and delivery sites. We show that when such distributions are different, our stability condition presents an additional (somewhat surprising) term compared to stability conditions for traditional spatially-distributed queueing systems.
This stability condition depends only on the workspace geometry, the stochastic distributions of pickup and delivery points, the demand arrival rate, and the number of vehicles.

\emph{Structure of the paper}.
This paper is structured as follows.
In Section~\ref{sec:background} we provide some background on the Euclidean Bipartite Matching Problem and on some notions in probability theory and transportation theory.
In Section~\ref{sec:problem} we rigorously state the stochastic SCP, the 1-DPDP, and the objectives of the paper;
in Section~\ref{sec:scpoptimal} we introduce and analyze the \ref{alg:splicing} algorithm, a polynomial-time, asymptotically optimal algorithm for the stochastic SCP.
In Section~\ref{sec:scpbounds} we derive a set of analytical bounds on the cost of a stochastic SCP tour,
and in Section~\ref{sec:dynamicversion} we use our results to obtain a general necessary and sufficient condition for the existence of stable routing policies for 1-DPDPs. Then, in Section~\ref{sec:sim} we present simulation results
corroborating our findings. Finally, in Section~\ref{sec:conclusion}, we draw some conclusions and discuss some directions for future work.

\section{Background Material}
\label{sec:background}

In this section we summarize the background material used in the paper.
Specifically, we review some results in permutation theory, the stochastic Euclidean Bipartite Matching Problem (EBMP), a related concept in transportation theory, and a generalized Law of Large Numbers.

\subsection{Permutations}\label{subsec:cycle}

A permutation is a rearrangement of the elements of an ordered set $\batchset$ according to a \emph{bijective} correspondence $\permvar :\batchset \to \batchset$.
As an example, a particular permutation of the
set $(1,2,3,4)$
is $\permvar(1) = 3$, $\permvar(2) = 1$, $\permvar(3) = 2$, and $\permvar(4) = 4$, which leads to the reordered set $(3,1,2,4)$.
The number of distinct permutations on a set of $n$ elements is given by $n!$ (factorial).
We denote the set of permutations over the $n$-element ordered set $(1,\ldots,n)$ by $\permset_n$.
A permutation can be conveniently represented in a two-line notation, where one lists the elements of $\batchset$ in the first row and their images in the second row, with the property that a first-row element and its image are in the same column.
For the previous example, one would write:
\begin{equation}\label{eq:experm}
\left[ \begin{array}{cccc}
1 & 2& 3 & 4\\
3 & 1& 2 & 4
 \end{array} \right].
\end{equation}
The \emph{identity permutation} maps every element of a set $\batchset$ to itself and will be denoted by $\permvar_1$.
We will use the following elementary properties of permutations, which follow from the fact that permutations are bijective correspondences:
\begin{description}
\item[Prop.] 1: Given two permutations $\permvar, \, \hat  \permvar \in \permset_n$, the composition
$\permvar \hat \permvar$ is again a permutation.
\item[Prop.] 2: Each permutation $\permvar \in \permset_n$ has an inverse permutation $\permvar^{-1}$, with the property that $\permvar(x) = y$ if and only if $\permvar^{-1}(y)= x$. (Thus, note that $\permvar^{-1} \permvar = \permvar_1$.)
\item[Prop.] 3: For any $\hat \permvar \in \permset_n $, it holds $\permset_n = \{\permvar \hat \permvar, \, \permvar\in \permset_n \}$; in other words,  for a given permutation $\hat \permvar$ playing the role of basis, $\permset_n$ can be expressed in terms of composed permutations.
\end{description}

A permutation $\permvar \in \permset_n$ is said to have a \emph{cycle} $\cyclevar \subseteq \batchset$ if the objects in $\cyclevar$ form an orbit under the sequence 
$l_{t+1} = \permvar(l_t)$, i.e.,  

\[
\permvar(l_t) = l_{t+1}  \quad  \text{ for } t = 1, \ldots, T-1, \quad \text{ and } \permvar(l_T) = l_1,
\]
where $l_t \in \cyclevar$ for all $t$ and $T$ is the orbit size (a natural number);
given a permutation $\permvar$, the partition of $\batchset$ into disjoint cycles is \emph{uniquely} determined apart from cyclic reordering of the elements within each cycle (see Figure~\ref{fig:cycle}).
Henceforth, we denote by $\permcyclecount(\permvar)$ the number of distinct cycles of $\permvar$.
In the example in equation \eqref{eq:experm}, there are two cycles, namely $(1, 3, 2)$, which corresponds to $\permvar(1) = 3$, $\permvar(3) = 2$, $\permvar(2) = 1$, and $(4)$, which corresponds to $\permvar(4) = 4$ (see Figure~\ref{fig:cycle}).

Suppose that all elements of $\permset_\batchsize$ are assigned probability $1/\batchsize!$, i.e., 
\[
\prob{\permvar}:= \prob{\text{One selects } \permvar} = \frac{1}{n!}, \qquad \text{ for all } \permvar \in \permset_n.
\]
Let $\permcyclecount_n$ denote the number of cycles of a random permutation with the above probability assignment.
It is shown in~\cite{Shepp:1966} that the number of cycles $\permcyclecount_\batchsize$ has expectation $\expectation{\permcyclecount_\batchsize} = \log(\batchsize) + O(1)$ and variance 
$\text{var}(\permcyclecount_\batchsize) = \log(\batchsize) + O(1)$; here $\log$ denotes the natural logarithm.

\begin{figure}
\centering
\includegraphics[width=.5\linewidth]{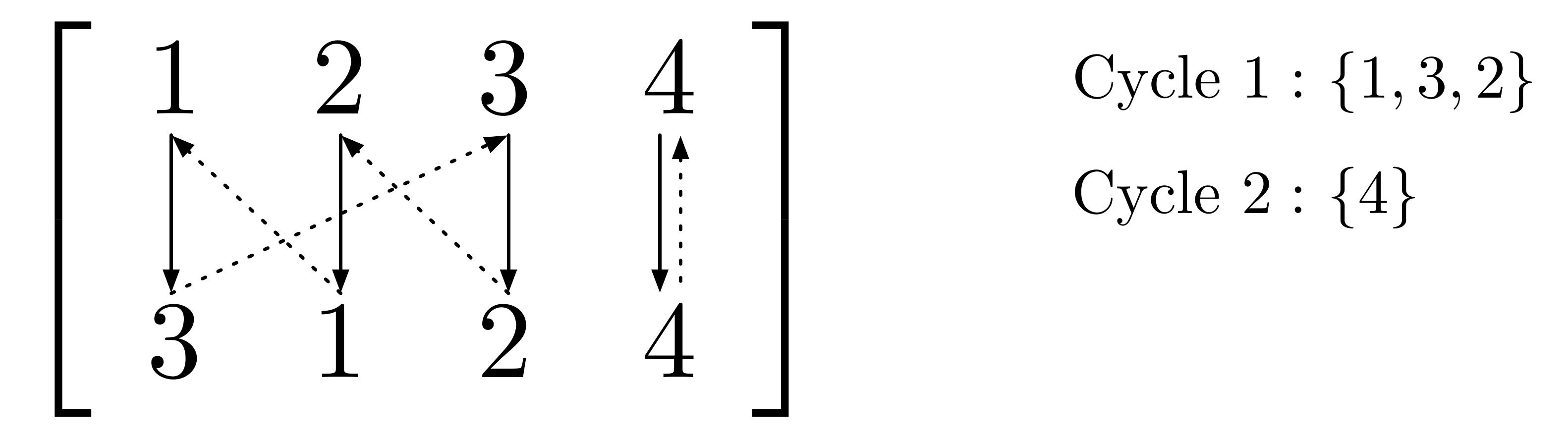}
\caption{The two cycles corresponding to the permutation: $\permvar(1) = 3$, $\permvar(2) = 1$, $\permvar(3) = 2$, and $\permvar(4) = 4$.
Cycle~1 can equivalently be expressed as $(2,1,3)$ or $(3,2,1)$.
Apart from this cyclic reordering, the decomposition into disjoint cycles is unique.}
\label{fig:cycle}
\end{figure}

\subsection{The Euclidean Bipartite Matching Problem}
\label{subsec:EBMP}

Let $\pickset_\numdem = \{x_1, \ldots, x_n\}$ and $\delvset_\numdem = \{y_1, \ldots, y_n\}$ be two sets of points in $\reals^\probdim$.
The Euclidean Bipartite Matching Problem is to find a permutation $\permvar^* \in \Pi_n$ (not necessarily unique)
such that the sum of the Euclidean distances between the matched pairs
$\{\text{$(y_i,x_{\permvar^*(i)})$ for $i = 1,\ldots,\numdem$}\}$
is minimized, i.e.:
\[
\sum_{i=1}^n \, \|x_{\permvar^*(i)} - y_i \| = \min_{\permvar \in \permset_n} \, \sum_{i=1}^n \, \|x_{\permvar(i)} - y_i \|,
\]
where $\|\cdot\|$ denotes the Euclidean norm and $\permset_\numdem$ denotes the set of all permutations over $\numdem$ elements.
Let $Q_\numdem := (\pickset_\numdem, \delvset_\numdem)$; we refer to the left-hand side in the above equation as the optimal bipartite matching cost $\M(Q_\numdem)$;
we refer to $\avglink_\text{M}(Q_\numdem) := \M(Q_\numdem)/\numdem$ as the \emph{average} match cost.

The Euclidean bipartite matching problem is generally solved by the ``Hungarian'' method~\cite{Kuhn1955}, which runs in $O(n^3)$ time.
The $O(n^3)$ barrier was indeed broken by Agarwal et al.~\cite{Agarwal:1995}, who presented a class of algorithms running in $O(\batchsize^{2+\epsilon})$, where $\epsilon$ is an arbitrarily small positive constant.
Additionally, there are also several approximation algorithms: among others, 
the algorithm presented in~\cite{Agarwal:2004} produces a $O(\log(1/\epsilon))$ optimal solution in expected runtime $O(\batchsize^{1+\epsilon})$, where, again, $\epsilon$ is an arbitrarily small positive constant.

The EBMP over \emph{random} sets of points enjoys some remarkable properties.
Specifically, let $\pickset_\numdem = \{ X_1, \ldots, X_n \}$ be a set of $n$ points in a compact set $\env \subset \reals^d$, $d\geq 3$,
that are independently, identically distributed (i.i.d.) according to a probability distribution with density $\den : \env \to \reals_{\geq 0}$;
let $\delvset_\numdem = \{ Y_1, \ldots, Y_n \}$ be a set of $n$ points in $\env$ that are i.i.d. according to the same probability distribution.
In \cite{VD-JEY:95} it is shown that there
exists a constant $\beta_{\mathrm{M},d}$ such that the optimal bipartite matching cost $\M(Q_\numdem) = \min_{\permvar \in \permset_n} \sum_{i=1}^n \| X_{\permvar(i)} - Y_i \|$ has limit behavior
\begin{equation}
\label{eq:EBMP_common}
\begin{aligned}
\lim_{n\rightarrow+\infty} \frac{\M(Q_\numdem)}{n^{1-1/d}} =
\beta_{\mathrm{M},d} \int_\env
\overline
\den(x)^{1-1/d}\;dx,
\end{aligned}
\end{equation}
almost surely,
where $\overline \den$ is the density of the absolutely continuous part of the point distribution.
The constant $\beta_{\mathrm{M},3}$ has been estimated numerically as $\beta_{\mathrm{M},3} \approx
0.7080 \pm 0.0002$~\cite{JH-JHBM-OCM:98}.

In the case $\probdim=2$ (i.e., the planar case) the following weaker result~\cite{TAL:92} holds
with high probability as $n \to +\infty$ (i.e. with probability $1-o(1)$):
\begin{equation}
\label{eq:matching2}
\M(Q_\batchsize) / (\batchsize\log\batchsize)^{1/2} \leq \gamma
\end{equation}
for some positive constant $\gamma$. If the probability distribution
is uniform, it also holds with high probability that $\M(Q_\batchsize) / (\batchsize\log\batchsize)^{1/2} $ is bounded below by a positive constant \cite{Ajtai:1984}.

To the best of our knowledge, there have been no similar results in the literature that apply when the distributions of $\mathcal X$ points is different from the distributions of $\mathcal Y$ points (which is the typical case for transportation systems).

\subsection{Euclidean Wasserstein distance}

As noted, little is known about the growth order of the EBMP matching when the distributions of $\mathcal X$ points is different from the distributions of $\mathcal Y$ points.
One of the main contributions of the paper is to extend the results in~\cite{VD-JEY:95} to this general case, for which the following notion of transportation complexity will prove useful.

Let $\den_1$ and $\den_2$ be two probability densities over $\env \subset \reals^\probdim$.
The \emph{Euclidean Wasserstein distance} between $\den_1$ and $\den_2$ is defined as
\begin{equation}
\label{eq:wasserstein}
  W(\den_1,\den_2)
  = \inf_{ \gamma \in \Gamma( \den_1,\den_2 ) } \int_{x,y \in \env} \vecnorm{y-x} \ d\gamma(x,y),
\end{equation}
where $\Gamma( \den_1,\den_2 )$ denotes the set of measures over the product space $\env \times \env$ having marginal densities $\den_1$ and $\den_2$, respectively.
The Euclidean Wasserstein distance is a continuous version of the so-called Earth Mover's distance; properties of the generalized version are discussed in~\cite{Ruschendorf1985}.

\subsection{The Strong Law of Absolute Differences}
\label{subsec:absolute_diff}
The last bit of background is a slightly more general version of the well-known Strong Law of Large Numbers (SLLN).
Let $X_1, \ldots, X_n$ be a sequence of scalar random variables that are i.i.d. with mean $\expect X$ and finite variance.
Then the sequence of cumulative sums $S_n = \sum_{i=1}^n X_i$ has the property (discussed, e.g., in~\cite{Baum:1965}) that
%
\begin{equation*}
\lim_{n\to\infty} \frac{ S_n - \expect S_n }{ n^\alpha } = 0, \quad \quad \text{almost surely},
\end{equation*}
for any $\alpha > 1/2$.
Note that the SLLN is the special case where $\alpha = 1$.

\section{Problem Statement} \label{sec:problem}
In this section, we first rigorously state the two routing problems that will be the subject of this paper, and then we state our objectives.

\subsection{The Euclidean Stacker Crane Problem}

Let $\pickset_\numdem = \{ x_1, \ldots, x_\numdem \}$ and
$\delvset_\numdem = \{ y_1, \ldots, y_\numdem \}$ be two sets of points in the $d$-dimensional Euclidean space $\reals^d$, where $d\geq 2$.
The Euclidean Stacker Crane Problem (ESCP) is to find a minimum-length tour through the points in $\pickset_\numdem \cup \delvset_\numdem$ with the property
that each point $x_i$ (which we call the $i$th \emph{pickup}) is immediately followed by the point $y_i$ (the $i$th \emph{delivery}); in other words, the pair $(x_i,y_i)$ must be visited in consecutive order (see Figure~\ref{fig:SCP}).
The length of a tour is the sum of all Euclidean distances along the tour.
We will refer to any feasible tour (i.e., one satisfying the pickup-to-delivery constraints) as a \emph{stacker crane tour}, and to the minimum-length such tour as the optimal stacker crane tour.
Note that the ESCP is a \emph{constrained} version of the well-known Euclidean Traveling Salesman Problem.

\begin{figure}
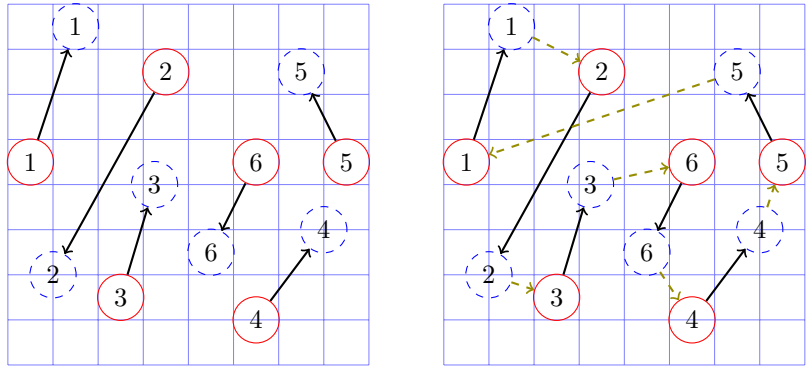

\centering
\subfigure[Six pickup/delivery pairs are generated in the Euclidean plane.]{
\begin{tourmaker}
\drawnodes
\drawlinks
\end{tourmaker}
}
\quad
\subfigure[Dashed arrows \emph{combined} with the solid arrows represent a stacker crane tour.]{
\begin{tourmaker}
\drawnodes
\drawlinks
\drawstackerplus
\end{tourmaker}
}
\caption{Example of Euclidean Stacker Crane Problem in two dimensions.
Solid and dashed circles denote pickup and delivery points, respectively; solid arrows denote the routes from pickup points to their delivery points.}
\label{fig:SCP}
\end{figure}

In this paper we focus on a \emph{stochastic} version of the ESCP.
Let $\pickset_\numdem = \{ X_1, \ldots, X_n \}$ be a set of points in a compact set $\env \subset \reals^d$
that are i.i.d. according to a distribution with density $\den_\picktag : \env \to \reals_{\geq 0}$;
let $\delvset_\numdem = \{ Y_1, \ldots, Y_n \}$ be a set of points in $\env$ that are i.i.d. according to a distribution with density $\den_\delvtag : \env \to \reals_{\geq 0}$.
To obtain the relevant transportation problem, we interpret each pair $(X_i,Y_i)$ as the pickup and delivery sites, respectively, of some transportation demand, and we seek to determine the cost of an optimal stacker crane tour through all points.
We will refer to this stochastic version of ESCP as $\ESCPinst$, and we will write $\pickset_\numdem,\delvset_\numdem \sim \ESCPinst$
to mean that $\pickset_\numdem$ contains $\numdem$ pickup sites i.i.d. with density $\den_\picktag$,
and $\delvset_\numdem$ contains $\numdem$ delivery sites i.i.d. with density $\den_\delvtag$.
An important contribution of this paper will be to characterize the behavior of the optimal stacker crane tour through $\pickset_\numdem$ and $\delvset_\numdem$ as a function of the parameters $n$, $\den_\picktag$, and $\den_\delvtag$;
throughout the paper we will assume that densities $\pdistr_\picktag$ and $\pdistr_\delvtag$ are absolutely continuous.

Despite a close relation between the stochastic ESCP and the stochastic EBMP, the asymptotic cost of a stochastic ESCP has not been characterized to date.

\subsection{Dynamic Pickup Delivery Problems with Unit-Capacity Vehicles}
The 1-DPDP is defined as follows. A total of $m$ vehicles travel at unit velocity within a workspace $\env$; the vehicles have unlimited range and \emph{unit} capacity (i.e., they can transport at most one demand at a time).
Demands are generated according to a time-invariant Poisson process, with time intensity $\lambda \in \reals_{>0}$.
A newly arrived demand has an associated pickup location which is independent and identically distributed in $\env$ according to a density $\den_\picktag$.
Each demand must be transported from its pickup location to its delivery location, at which time it is removed from the system.
The delivery locations are also i.i.d. in $\env$ according to a density $\den_\delvtag$. 
A policy for routing the vehicles is said to be \emph{stabilizing} if the expected number of demands in the system remains uniformly bounded at all times; the objective is to find a stabilizing and causal routing policy that minimizes the asymptotic expected waiting times (i.e., the elapsed time between the arrival of a demand and its delivery) of demands.

This problem has been studied in~\cite{Treleaven.Pavone.ea:10} under the restrictive assumptions $\pdistr_\delvtag = \pdistr_\picktag := \varphi$ and $d\geq 3$;
in that paper, it has been shown that if one defines the ``load factor'' as
\[
\utilization \doteq \lambda\,  \expect_{\varphi} \vecnorm{Y-X}/m,
\]
where $Y$ and $X$ are two independent random points in $\env$ with a distribution of density $\den$, then the condition $\utilization < 1$ is necessary and sufficient for a stabilizing policy to exist. However, that analysis---and indeed the result itself---is no longer valid if  $\pdistr_\delvtag \neq \pdistr_\picktag$. This paper will show how the definition of load factor has to be modified for the more realistic case  $\pdistr_\delvtag \neq \pdistr_\picktag$. Pivotal in our approach is to characterize, with almost sure analytical bounds, the scaling of the optimal solution of $\ESCPinst$ with respect to the problem size.

\subsection{Objectives of the Paper}
In this paper we aim at solving the following three problems: 
\begin{description}
\item[\bf P1] Find a polynomial-time algorithm $\mathcal A$ for the ESCP which is asymptotically optimal almost surely, i.e.,
\[\lim_{n\to+\infty} L_{\text{$\mathcal A$}}(n) / L^*(n)
= 1,\]
where $n$ is the size (number of demands) of the stochastic instance, $L_{\text{$\mathcal A$}}(n)$ is the length of the stacker crane tour produced by algorithm $\mathcal A$, and $L^*(n)$
is the length of the optimal stacker crane tour.
\item[\bf P2] For the general case $\pdistr_\delvtag \neq \pdistr_\picktag$, characterize the growth (with respect to the problem size) in the cost of the ESCP with almost sure analytical bounds. 
\item[\bf P3] Find a necessary and sufficient condition for the existence of stabilizing policies
for the 1-DPDP as a function of the problem parameters,
i.e., $\arrivalrate$, $\numveh$, $\den_\picktag$, $\den_\delvtag$, $\env$.
\end{description}  

The solutions to the above three problems collectively lead to a new class of robust, polynomial-time, provably-efficient algorithms for vehicle routing in large-scale transportation systems.

\section{An Asymptotically Optimal Polynomial-Time Algorithm for the Stochastic ESCP}
\label{sec:scpoptimal}

In this section we present an asymptotically optimal, polynomial-time algorithm for the stochastic ESCP, which we call~\ref{alg:splicing}.
The key idea behind \ref{alg:splicing} is to connect the tour from delivery sites back to pickup sites in accordance with an optimal bipartite matching between the sets of pickup and delivery sites.
Unfortunately, this procedure is likely to generate a certain number of \emph{disconnected subtours} (see Figure~\ref{subfig:sub_disc}), and so, in general, the result is not a stacker crane tour.
The key property we prove is that the number of disconnected subtours
grows quite slowly, i.e., sublinearly, with the size of the problem.
Then, by using a greedy algorithm to connect such subtours, one obtains an asymptotically optimal solution to the ESCP with a polynomial number of operations
(since an optimal matching can be computed in polynomial time).

\subsection{The \ref{alg:splicing} Algorithm}

The algorithm~\ref{alg:splicing} is described in pseudo-code.
In line \ref{line:perm}, the algorithm $\algog$ is \emph{any} algorithm that computes optimal bipartite matchings. After the pickup-to-delivery links and the optimal bipartite matching links are added (lines \ref{line:pd}-\ref{line:ml}), there might be a number of disconnected subtours (they do, however, satisfy the pickup-to-delivery contraints).
In that case (i.e., when $\subn>1$), links between subtours are added, e.g., by using a nearest-neighbor rule\footnote{
In this paper we use a simple nearest-neighbor heuristic (in~\ref{alg:splicing}, lines~\ref{line:connectingheuristic} to~\ref{line:endconnectingheuristic}) for adding connecting links to form an SCP tour.
However, the results of this paper do not depend on this choice, and \emph{any} connecting heuristic can be used. 
}.
\ifthenelse{ \isundefined{\hidesplicesample} }{
Figure~\ref{fig:matchsplice} shows a sample execution of the algorithm; we refer to the delivery-to-pickup links added in lines \ref{line:con} and \ref{line:clo} (the green links in Figure~\ref{fig:matchsplice}) as \emph{connecting links}, since they connect  the subtours.
}{
We refer to the delivery-to-pickup links added in lines \ref{line:con} and \ref{line:clo} as \emph{connecting links}, since they connect the subtours.
}
%
The complexity of~\ref{alg:splicing} is dominated by the construction of the optimal bipartite matching, which takes time $O(\batchsize^{2+\epsilon})$.
\begin{algorithm}
\renewcommand{\thealgorithm}{SPLICE}
\caption{}	
\label{alg:splicing}
\begin{algorithmic}[1]
\renewcommand{\algorithmicrequire}{\textbf{Input:}}
\renewcommand{\algorithmicensure}{\textbf{Output:}}
\renewcommand{\algorithmiccomment}[1]{ // \emph{#1} }
\REQUIRE a set of \emph{demands} $\batchset = \left\{ (\xvar_1,\yvar_1),\ldots,(\xvar_\batchsize,\yvar_\batchsize) \right\}$, $n>1$.
\ENSURE a Stacker Crane tour through $\batchset$.
\STATE {\bf initialize} $\permvar \gets$ solution to Euclidean bipartite matching problem between sets
  $\mathcal X=\left\{\xvar_1,\ldots,\xvar_\batchsize\right\}$ and
  $\mathcal Y=\left\{\yvar_1,\ldots,\yvar_\batchsize\right\}$ computed by using a bipartite matching algorithm $\algog$. \label{line:perm}
\STATE Add the $n$ pickup-to-delivery links $x_i \to y_i$, $i=1, \ldots, n$. \label{line:pd}
\STATE Add the $n$ matching links $y_i \to x_{\permvar(i)}$, $i=1, \ldots, n$. \label{line:ml}
\STATE $\subn \gets$ number of disconnected subtours.  \label{line:disc}
\IF{$\subn>1$} \label{line:connectingheuristic}
\STATE Arbitrarily order the $\subn$ subtours $\mathcal S_j$, $j=1,\ldots,n$, into an ordered set $\mathcal S:= \{\mathcal S_1, \ldots, \mathcal S_\subn\}$.
\STATE \texttt{base} $\gets $ index of an arbitrary delivery site in $\mathcal S_1$.
\STATE \texttt{prev} $\gets$ \texttt{base}.
  \FOR{$k=1 \to \subn-1$} \label{line:for}
 \STATE  Remove link $y_{\text{\texttt{prev}}}\to x_{\text{\permvar(\texttt{prev}})}$.\label{line:beg}  
 \STATE \texttt{next} $\gets$ index of pickup site in $\mathcal S_{k+1}$ that is closest to $y_{\text{\texttt{prev}}}$.
 \STATE Add link $y_{\text{\texttt{prev}}} \to x_{\text{\texttt{next}}}$. \label{line:con}
 \STATE \texttt{prev} $\gets \permvar^{-1}(\texttt{next})$. 
  \ENDFOR \label{line:efor}
\STATE Add link $y_{\text{\texttt{prev}}} \to x_{\permvar(\text{\texttt{base}})}$. \label{line:clo}
 \ENDIF  \label{line:endconnectingheuristic}
 \end{algorithmic}
\end{algorithm}

\ifthenelse{ \isundefined{\hidesplicesample} }{

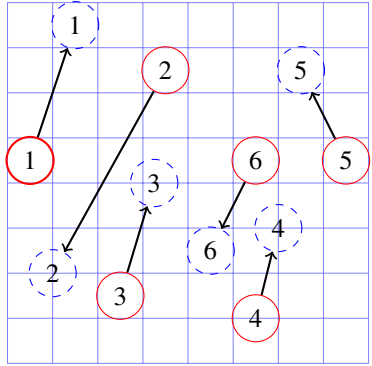
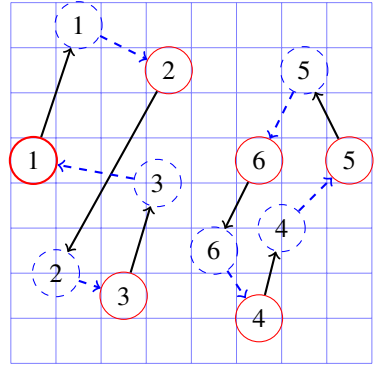
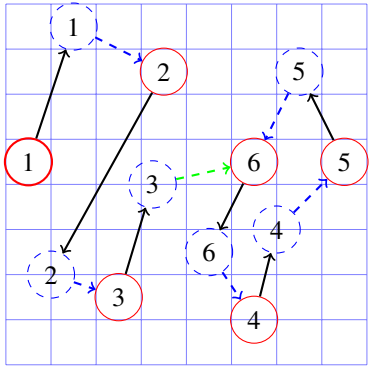
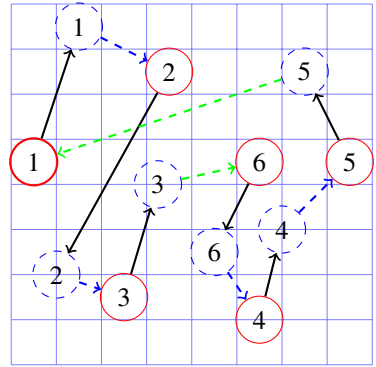
\begin{figure}
\centering
\subfigure[Line \ref{line:pd}: 6 pickup-to-delivery links are added.]{
\definecolor{gold}{rgb}{.85,.64,.125}

\begin{tikzpicture}[scale=.6]
\tikzstyle{pickup}=[circle,draw=red!100] 
\tikzstyle{delivr}=[circle,dashed,draw=blue!100] 
\tikzstyle{p2d}=[->,thick]
\tikzstyle{stacker}=[->,thick,dashed,color=gold]
\tikzstyle{match}=[->,thick,dashed,color=blue]
\tikzstyle{primary}=[color=green,->,thick,dashed]
\tikzstyle{node-done} = [circle,draw=gray!100] 
\tikzstyle{edge-done} = [color=gray,->,thick]

\draw[step=1,color=blue!50,very thin] (-4,-4) grid (4,4) ;		

\coordinate (p1) at (-3.5,.5) ;		\coordinate (d1) at (-2.5, 3.5) ;	
\coordinate (p2) at (-.5,2.5) ;		\coordinate (d2) at (-3, -2) ;		
\coordinate (p3) at (-1.5,-2.5) ;	\coordinate (d3) at (-.75, 0) ;		
\coordinate (p4) at (1.5,-3) ;		\coordinate (d4) at (2, -1) ;		
\coordinate (p5) at (3.5,.5) ;		\coordinate (d5) at (2.5, 2.5) ;	
\coordinate (p6) at (1.5,.5) ;		\coordinate (d6) at (.5, -1.5) ;


\node [delivr] (delv1) at (d1) {1} ;
\node [pickup,thick] (pick1) at (p1) {1} ; \draw [p2d] (pick1) -- (delv1) ;
\node [delivr] (delv2) at (d2) {2} ;
\node [pickup] (pick2) at (p2) {2} ; \draw [p2d] (pick2) -- (delv2) ;
\node [delivr] (delv3) at (d3) {3} ;
\node [pickup] (pick3) at (p3) {3} ; \draw [p2d] (pick3) -- (delv3) ;
\node [delivr] (delv4) at (d4) {4} ;
\node [pickup] (pick4) at (p4) {4} ; \draw [p2d] (pick4) -- (delv4) ;
\node [delivr] (delv5) at (d5) {5} ;
\node [pickup] (pick5) at (p5) {5} ; \draw [p2d] (pick5) -- (delv5) ;
\node [delivr] (delv6) at (d6) {6} ;
\node [pickup] (pick6) at (p6) {6} ; \draw [p2d] (pick6) -- (delv6) ;




\end{tikzpicture}
}\qquad \qquad \qquad
\subfigure[Line \ref{line:ml}: 6 matching links are added. The number of disconnected subtours is $N=2$.]{
\definecolor{gold}{rgb}{.85,.64,.125}

\begin{tikzpicture}[scale=.6]
\tikzstyle{pickup}=[circle,draw=red!100] 
\tikzstyle{delivr}=[circle,dashed,draw=blue!100] 
\tikzstyle{p2d}=[->,thick]
\tikzstyle{stacker}=[->,thick,dashed,color=gold]
\tikzstyle{match}=[->,thick,dashed,color=blue]
\tikzstyle{primary}=[color=green,->,thick,dashed]
\tikzstyle{node-done} = [circle,draw=gray!100] 
\tikzstyle{edge-done} = [color=gray,->,thick]

\draw[step=1,color=blue!50,very thin] (-4,-4) grid (4,4) ;		

\coordinate (p1) at (-3.5,.5) ;		\coordinate (d1) at (-2.5, 3.5) ;	
\coordinate (p2) at (-.5,2.5) ;		\coordinate (d2) at (-3, -2) ;		
\coordinate (p3) at (-1.5,-2.5) ;	\coordinate (d3) at (-.75, 0) ;		
\coordinate (p4) at (1.5,-3) ;		\coordinate (d4) at (2, -1) ;		
\coordinate (p5) at (3.5,.5) ;		\coordinate (d5) at (2.5, 2.5) ;	
\coordinate (p6) at (1.5,.5) ;		\coordinate (d6) at (.5, -1.5) ;


\node [delivr] (delv1) at (d1) {1} ;
\node [pickup,thick] (pick1) at (p1) {1} ; \draw [p2d] (pick1) -- (delv1) ;
\node [delivr] (delv2) at (d2) {2} ;
\node [pickup] (pick2) at (p2) {2} ; \draw [p2d] (pick2) -- (delv2) ;
\node [delivr] (delv3) at (d3) {3} ;
\node [pickup] (pick3) at (p3) {3} ; \draw [p2d] (pick3) -- (delv3) ;
\node [delivr] (delv4) at (d4) {4} ;
\node [pickup] (pick4) at (p4) {4} ; \draw [p2d] (pick4) -- (delv4) ;
\node [delivr] (delv5) at (d5) {5} ;
\node [pickup] (pick5) at (p5) {5} ; \draw [p2d] (pick5) -- (delv5) ;
\node [delivr] (delv6) at (d6) {6} ;
\node [pickup] (pick6) at (p6) {6} ; \draw [p2d] (pick6) -- (delv6) ;

\draw [match] (delv1) -- (pick2) ;
\draw [match] (delv2) -- (pick3) ;
\draw [match] (delv3) -- (pick1) ;
\draw [match] (delv4) -- (pick5) ;
\draw [match] (delv5) -- (pick6) ;
\draw [match] (delv6) -- (pick4) ;



\end{tikzpicture}
\label{subfig:sub_disc}}\\
\subfigure[Line \ref{line:beg}. Algorithm state: \texttt{prev} = \texttt{base} = 3, $k=1$. The link $y_3 \to x_1$ is removed, \texttt{next} is assigned the value 6, the link $y_3 \to x_6$ is added, \texttt{prev} is assigned the value 5.]{
\definecolor{gold}{rgb}{.85,.64,.125}

\begin{tikzpicture}[scale=.6]
\tikzstyle{pickup}=[circle,draw=red!100] 
\tikzstyle{delivr}=[circle,dashed,draw=blue!100] 
\tikzstyle{p2d}=[->,thick]
\tikzstyle{stacker}=[->,thick,dashed,color=gold]
\tikzstyle{match}=[->,thick,dashed,color=blue]
\tikzstyle{primary}=[color=green,->,thick,dashed]
\tikzstyle{node-done} = [circle,draw=gray!100] 
\tikzstyle{edge-done} = [color=gray,->,thick]

\draw[step=1,color=blue!50,very thin] (-4,-4) grid (4,4) ;		

\coordinate (p1) at (-3.5,.5) ;		\coordinate (d1) at (-2.5, 3.5) ;	
\coordinate (p2) at (-.5,2.5) ;		\coordinate (d2) at (-3, -2) ;		
\coordinate (p3) at (-1.5,-2.5) ;	\coordinate (d3) at (-.75, 0) ;		
\coordinate (p4) at (1.5,-3) ;		\coordinate (d4) at (2, -1) ;		
\coordinate (p5) at (3.5,.5) ;		\coordinate (d5) at (2.5, 2.5) ;	
\coordinate (p6) at (1.5,.5) ;		\coordinate (d6) at (.5, -1.5) ;


\node [delivr] (delv1) at (d1) {1} ;
\node [pickup,thick] (pick1) at (p1) {1} ; \draw [p2d] (pick1) -- (delv1) ;
\node [delivr] (delv2) at (d2) {2} ;
\node [pickup] (pick2) at (p2) {2} ; \draw [p2d] (pick2) -- (delv2) ;
\node [delivr] (delv3) at (d3) {3} ;
\node [pickup] (pick3) at (p3) {3} ; \draw [p2d] (pick3) -- (delv3) ;
\node [delivr] (delv4) at (d4) {4} ;
\node [pickup] (pick4) at (p4) {4} ; \draw [p2d] (pick4) -- (delv4) ;
\node [delivr] (delv5) at (d5) {5} ;
\node [pickup] (pick5) at (p5) {5} ; \draw [p2d] (pick5) -- (delv5) ;
\node [delivr] (delv6) at (d6) {6} ;
\node [pickup] (pick6) at (p6) {6} ; \draw [p2d] (pick6) -- (delv6) ;

\draw [match] (delv1) -- (pick2) ;
\draw [match] (delv2) -- (pick3) ;
\draw [match] (delv4) -- (pick5) ;
\draw [match] (delv5) -- (pick6) ;
\draw [match] (delv6) -- (pick4) ;
\draw [primary] (delv3) -- (pick6) ;



\end{tikzpicture}
}\qquad \qquad \qquad
\subfigure[Line \ref{line:clo}. Algorithm state: \texttt{prev} = 5, \texttt{base} = 3. The link $y_5\to x_1$ is added an the tour is completed.]{
\definecolor{gold}{rgb}{.85,.64,.125}

\begin{tikzpicture}[scale=.6]
\tikzstyle{pickup}=[circle,draw=red!100] 
\tikzstyle{delivr}=[circle,dashed,draw=blue!100] 
\tikzstyle{p2d}=[->,thick]
\tikzstyle{stacker}=[->,thick,dashed,color=gold]
\tikzstyle{match}=[->,thick,dashed,color=blue]
\tikzstyle{primary}=[color=green,->,thick,dashed]
\tikzstyle{node-done} = [circle,draw=gray!100] 
\tikzstyle{edge-done} = [color=gray,->,thick]

\draw[step=1,color=blue!50,very thin] (-4,-4) grid (4,4) ;		

\coordinate (p1) at (-3.5,.5) ;		\coordinate (d1) at (-2.5, 3.5) ;	
\coordinate (p2) at (-.5,2.5) ;		\coordinate (d2) at (-3, -2) ;		
\coordinate (p3) at (-1.5,-2.5) ;	\coordinate (d3) at (-.75, 0) ;		
\coordinate (p4) at (1.5,-3) ;		\coordinate (d4) at (2, -1) ;		
\coordinate (p5) at (3.5,.5) ;		\coordinate (d5) at (2.5, 2.5) ;	
\coordinate (p6) at (1.5,.5) ;		\coordinate (d6) at (.5, -1.5) ;


\node [delivr] (delv1) at (d1) {1} ;
\node [pickup,thick] (pick1) at (p1) {1} ; \draw [p2d] (pick1) -- (delv1) ;
\node [delivr] (delv2) at (d2) {2} ;
\node [pickup] (pick2) at (p2) {2} ; \draw [p2d] (pick2) -- (delv2) ;
\node [delivr] (delv3) at (d3) {3} ;
\node [pickup] (pick3) at (p3) {3} ; \draw [p2d] (pick3) -- (delv3) ;
\node [delivr] (delv4) at (d4) {4} ;
\node [pickup] (pick4) at (p4) {4} ; \draw [p2d] (pick4) -- (delv4) ;
\node [delivr] (delv5) at (d5) {5} ;
\node [pickup] (pick5) at (p5) {5} ; \draw [p2d] (pick5) -- (delv5) ;
\node [delivr] (delv6) at (d6) {6} ;
\node [pickup] (pick6) at (p6) {6} ; \draw [p2d] (pick6) -- (delv6) ;

\draw [match] (delv1) -- (pick2) ;
\draw [match] (delv2) -- (pick3) ;
\draw [match] (delv4) -- (pick5) ;
\draw [match] (delv6) -- (pick4) ;
\draw [primary] (delv3) -- (pick6) ;
\draw [primary] (delv5) -- (pick1) ;



\end{tikzpicture}
}
\caption{Sample execution of the~\ref{alg:splicing} algorithm. The solution to the EBMP is $\permvar(1) = 2$, $\permvar(2) = 3$, $\permvar(3) = 1$, $\permvar(4) = 5$, $\permvar(5) = 6$, and $\permvar(6) = 4$.
Demands are labeled with integers.
Pickup and delivery sites are represented by solid and dashed circles, respectively.
Pickup-to-delivery links are shown as black arrows.
Matching links are dark dashed arrows.
Subtour connections are shown as lighter, dashed arrows.
The resulting tour is $1\to2\to3\to6\to4\to5\to1$.}
\label{fig:matchsplice}
\end{figure}

}{}	

\subsection{Asymptotic Optimality of \ref{alg:splicing}}
\label{subsec:asymptotic_optimality}

In general the \ref{alg:splicing} algorithm produces a number of \emph{connecting links} between disconnected subtours (i.e., in general $\subn>1$; see Figure~\ref{fig:matchsplice}).
Thus a first step in proving asymptotic optimality of the \ref{alg:splicing}~algorithm is to characterize the growth order for the number of subtours with respect to $n$, the size of the problem instance, and thus the cost of the extra connecting links.
First we observe an equivalence between the number of subtours $\subn$ produced by line~\ref{line:ml} and the number of cycles for the permutation $\permvar$ in line \ref{line:perm}.
\begin{lemma}[Permutation cycles and subtours]\label{lemma:sub2perm}
The number $\subn$ of subtours produced by the \ref{alg:splicing} algorithm in line \ref{line:ml} is equal to $\permcyclecount(\permvar)$,
where $\permcyclecount(\permvar)$ is the number of cycles of the permutation $\permvar$ computed in line \ref{line:perm}.
\end{lemma}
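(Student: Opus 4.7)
The plan is to exhibit an explicit bijection between the subtours produced after line~\ref{line:ml} and the cycles of the permutation $\permvar$ computed in line~\ref{line:perm}. The key observation is that the graph constructed through lines~\ref{line:pd} and~\ref{line:ml} has a very rigid structure: every pickup vertex $x_i$ has exactly one outgoing edge (to $y_i$) and exactly one incoming edge (from $y_{\permvar^{-1}(i)}$), and every delivery vertex $y_i$ has exactly one outgoing edge (to $x_{\permvar(i)}$) and exactly one incoming edge (from $x_i$). Hence the digraph is a disjoint union of directed cycles, and each subtour is well-defined.

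Starting the traversal at an arbitrary pickup $x_i$, the sequence of vertices visited is
\[
x_i \;\to\; y_i \;\to\; x_{\permvar(i)} \;\to\; y_{\permvar(i)} \;\to\; x_{\permvar^2(i)} \;\to\; y_{\permvar^2(i)} \;\to\; \cdots
\]
so that after $2k$ steps one lands on $x_{\permvar^k(i)}$. The subtour closes (returns to $x_i$) at the smallest $k \geq 1$ for which $\permvar^k(i)=i$, which is exactly the length of the cycle of $\permvar$ containing the index $i$. Consequently, the set of pickup indices appearing in the subtour through $x_i$ is precisely the cycle of $\permvar$ containing $i$.

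This induces a well-defined map from subtours to cycles of $\permvar$: associate to each subtour the cycle of $\permvar$ formed by the pickup indices it visits. Because the cycles of $\permvar$ partition $\{1,\ldots,\batchsize\}$, and because every index $i$ lies in exactly one subtour (each pickup vertex is visited by exactly one subtour, given the in/out-degree-one structure above), this map is a bijection. Therefore the number of subtours equals $\permcyclecount(\permvar)$.

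I do not anticipate any real obstacle here; the argument is essentially bookkeeping on the alternating pickup/delivery sequence. The only subtlety is to be careful that the traversal restricted to pickup indices exactly reproduces iteration of $\permvar$, which is immediate from the definitions of the two types of links added in lines~\ref{line:pd} and~\ref{line:ml}.
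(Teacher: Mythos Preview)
Your proposal is correct and takes essentially the same approach as the paper: both arguments identify each subtour with the set of indices it visits (you use pickup indices, the paper uses delivery indices) and observe that this set is exactly a cycle of $\permvar$, whence the partition of $\{1,\ldots,\batchsize\}$ into subtours coincides with the cycle decomposition of $\permvar$. Your version is slightly more explicit about the in/out-degree-one structure and the traversal $x_i \to y_i \to x_{\permvar(i)} \to \cdots$, but the substance is the same.
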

\begin{proof}
Let $\mathcal D_k$ be the set of delivery sites for subtour $k$ ($k=1,\ldots,\subn$).
By construction, the indices in $\mathcal D_k$ constitute a cycle of the permutation $\permvar$. For example, in Figure~\ref{fig:matchsplice}, the indices of the delivery sites in the subtour $x_1\to y_1\to x_2 \to y_2 \to x_3 \to y_3 \to x_1$ are $\{1,2,3\}$, and they constitute a cycle for $\permvar$ since $\permvar(1)=2$, $\permvar(2)=3$, and $\permvar(3)=1$.
Since the subtours are disconnected, and every index is contained by some subtour, then the sets $\mathcal D_k$ ($k=1,\ldots,\subn$) represent a partition of $\{1,\ldots,n\}$ into the disjoint cycles of $\permvar$.
This implies that the number of subtours $\subn$ is equal to $\permcyclecount(\permvar)$.  
\end{proof}
By the lemma above, characterizing the number of subtours generated during the execution of the algorithm is equivalent to characterizing the number of cycles for the permutation $\permvar$.
Leveraging  the i.i.d. structure in our problem setup, one can argue intuitively that all permutations should be equiprobable.
In fact, the statement withstands rigorous proof.
\begin{lemma}[Equiprobability of permutations]\label{lemma:equi}
Let $Q_\numdem = (\pickset_\numdem, \delvset_\numdem)$ be a random instance of the EBMP, where $\pickset_\numdem,\delvset_\numdem \sim \ESCPinst$.
Then
\[
\prob{\sigma} =  \frac{1}{n!}	\qquad \text{for all $\sigma \in \permset$,}
\]
where $\prob{\sigma}$ denotes the probability that an optimal bipartite matching algorithm $\algog$ produces as a result the permutation $\sigma$.
\end{lemma}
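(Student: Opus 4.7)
The plan is to prove equiprobability by exploiting the exchangeability built into the i.i.d.\ hypothesis, i.e., by showing that a relabeling of the pickup points induces a bijection on optimal matchings while leaving the joint distribution of the instance invariant.

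First I would observe that since $\pdistr_\picktag$ and $\pdistr_\delvtag$ are absolutely continuous, the optimal bipartite matching is almost surely unique: the event that two distinct permutations attain the same cost $\sum_i \|X_{\sigma(i)} - Y_i\|$ is contained in a finite union of zero-measure algebraic varieties in $\env^{2n}$. Hence, up to a null set, the map $Q_n \mapsto \permvar^*(Q_n)$ is well-defined regardless of the tie-breaking rule used by $\algog$, and the probability $\prob{\permvar}$ in the statement is unambiguous.

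Next I would fix an arbitrary permutation $\tau \in \permset_n$ and consider the relabeled instance $Q_n^\tau := (\{X_{\tau(1)}, \ldots, X_{\tau(n)}\}, \{Y_1, \ldots, Y_n\})$. Because the $X_i$'s are i.i.d., $(X_{\tau(1)}, \ldots, X_{\tau(n)})$ is distributed identically to $(X_1, \ldots, X_n)$, so $Q_n^\tau$ has the same joint distribution as $Q_n$. On the other hand, the geometric matching that pairs each $Y_i$ with the same physical pickup point in both instances corresponds, in the relabeled indexing, to the permutation $\tau^{-1} \permvar^*(Q_n)$: indeed, if $Y_i$ was matched to $X_{\permvar^*(i)}$ in $Q_n$, the same pickup is now indexed as $\tau^{-1}(\permvar^*(i))$ in $Q_n^\tau$. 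Since the cost of any matching is invariant under relabeling, the optimal matching of $Q_n^\tau$ is $\tau^{-1} \permvar^*(Q_n)$ (almost surely).

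Combining these two observations yields
\[
\prob{\permvar^*(Q_n) = \permvar} \;=\; \prob{\permvar^*(Q_n^\tau) = \tau^{-1}\permvar} \;=\; \prob{\permvar^*(Q_n) = \tau^{-1}\permvar}
\]
for every $\tau \in \permset_n$. By Property~3 of Section~\ref{subsec:cycle}, as $\tau$ ranges over $\permset_n$, $\tau^{-1}\permvar$ ranges over all of $\permset_n$; hence $\prob{\permvar'}$ is the same for every $\permvar' \in \permset_n$, and since these probabilities sum to $1$, each equals $1/n!$. The main subtlety, which I would address carefully in the write-up, is the tie-breaking issue: one must verify that the null set on which the optimal matching is non-unique does not inflate the probability of any particular $\permvar$, which follows immediately from the absolute continuity assumption on $\pdistr_\picktag$ and $\pdistr_\delvtag$.
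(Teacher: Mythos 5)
Your proof is correct and takes essentially the same approach as the paper's: almost-sure uniqueness of the optimal matching (the tie set being Lebesgue-null), followed by an exchangeability argument showing that relabeling the pickup indices by an arbitrary $\tau$ is measure-preserving and shifts the optimal permutation by left-composition with $\tau^{-1}$. The paper carries out the exchangeability step as an explicit change of variables in the integral $\int_{\hat s \in \batchset_{\hat \permvar}} \pdistr(\hat s)\,d\hat s$ via a permutation matrix of unit determinant, proving the two inclusions $g_{\hat\permvar}(\batchset_{\hat\permvar}) \subseteq \batchset_{\permvar_1}$ and $\batchset_{\permvar_1} \subseteq g_{\hat\permvar}(\batchset_{\hat\permvar})$, whereas you phrase the same fact more compactly as $Q_n^\tau \stackrel{d}{=} Q_n$ together with $\permvar^*(Q_n^\tau) = \tau^{-1}\permvar^*(Q_n)$; the underlying content is identical.
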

\ifSHORTversion
\begin{proof}
In this paper we provide a summary of the proof. For the complete proof we refer to the extended copy of the paper on arXiv.

Let $\{x_1, \ldots, x_n\}$ and $\{y_1, \ldots, y_n\}$ be sample realizations of $\pickset_n$ and $\delvset_n$, respectively.
Let $\batchvar = \concat( \xvar_1, \yvar_1, \ldots, \xvar_\batchsize, \yvar_\batchsize )$
denote the column vector formed by vertical concatenation of $x_1, y_1, \ldots, x_n, y_n$.
The set $\batchsetdesc$ of possible vectors, i.e. the span of the instances of the EBMP, is a full-dimensional subset of $\reals^{d(2n)}$.
For each permutation $\permvar \in \permset_n$, let us define the set
\[
  \batchset_{\permvar} := \left\{  \batchvar \ \in \batchsetdesc \,|\, \text{$\permvar$ is the unique optimal matching of $\batchvar$} \right\};
\]
let us denote by $\zb$ the (remaining) set of batches that lead to non-unique optimal bipartite matchings.
The sets $\{ \batchset_{\permvar} \}_{\permvar\in\permset_\batchsize}$, along with $\zb$, form a partition of $\batchsetdesc$.
It can be shown that the set $\zb$ has probability zero,
so that for any permutation $\hat \permvar$ we have
\begin{equation}\label{eq:permutation_probability_expression}
\prob{\hat \permvar}
  = \prob{\hat s \in \batchset_{\hat \permvar}}
  = \int_{\hat s \in \batchset_{\hat \permvar}} \pdistr(\hat s) d{\hat s},
\end{equation}
where $\pdistr(\hat s)$ denotes 
$\prod_{i=1}^n \pdistr_\picktag(\hat x_i) \pdistr_\delvtag(\hat y_i)$.
We can write a \emph{reordering function} $g_{\hat\permvar}: \batchsetdesc \to \batchsetdesc$ as the invertible function that maps a batch
$\batchvar = \concat(x_1,y_1,\ldots,x_n, y_n)$
into a batch $\batchvar' = \concat(x_{\hat\permvar(1)}, y_1,\ldots,x_{\hat\permvar(n)}, y_n)$.
It can be shown that
$g_{\hat \permvar}(\batchset_{\hat \permvar}) = \batchset_{\permvar_1}$,
where
$g_{\hat \permvar}(\batchset_{\hat \permvar}) := \{ g_{\hat\permvar}(\hat\batchvar) \ : \ \hat\batchvar \in \batchset_{\hat \permvar} \}$.
(Recall that $\permvar_1$ denotes the \emph{identity} permutation.)
It can also be shown that $\pdistr(\cdot) = \pdistr(g_{\hat\permvar}(\cdot))$.
Using these two properties and a variable substitution $s = g_{\hat \permvar}(\hat s)$ in~\eqref{eq:permutation_probability_expression},
it can be shown that
\begin{equation*}
\begin{split}
\prob{ \hat s \in \batchset_{\hat \permvar} } =
\prob{ s \in \batchset_{\permvar_1} } = \prob{\permvar_1}.
\end{split}
\end{equation*}
We obtain the lemma by repeating the argument for all $\hat \permvar \in \permset_n$.
\end{proof}

\else
\begin{proof}
See Appendix~\ref{appendix:numberofcycles} for the proof.
\end{proof}
\fi

Lemmas~\ref{lemma:sub2perm} and~\ref{lemma:equi} allow us to apply the result in Section~\ref{subsec:cycle} to characterize the growth order for the number of subtours;
in particular, we observe that
$\subn = \permcyclecount(\permvar) = \log(n) + O(1)$.
For the proof of optimality for \ref{alg:splicing} we would like to make a slightly stronger statement:
\begin{lemma}[Asymptotic number of subtours]
\label{lemma:splicesvanish}
Let $d\geq 2$, and let
$Q_\numdem = (\pickset_\numdem, \delvset_\numdem)$ be a random instance of the EBMP, where $\pickset_\numdem,\delvset_\numdem \sim \ESCPinst$.
Let $\subn_n$ be the number of subtours generated by the \ref{alg:splicing} algorithm on the problem instance $Q_\numdem$.
Then
\[
\lim_{n\to +\infty} \subn_n/n = 0,
\]
almost surely.
\end{lemma}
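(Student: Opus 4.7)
The plan is to combine the two lemmas just established with the mean and variance estimates from Shepp and Lloyd cited in Section~\ref{subsec:cycle}, and then obtain almost-sure convergence via a standard Chebyshev + Borel--Cantelli argument.

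First, I would invoke Lemma~\ref{lemma:sub2perm} to rewrite $\subn_n = \permcyclecount(\permvar_n)$, where $\permvar_n$ is the optimal bipartite matching computed by $\algog$ on the instance $Q_n$. By Lemma~\ref{lemma:equi}, $\permvar_n$ is uniformly distributed over $\permset_n$, so $\permcyclecount(\permvar_n)$ has the same distribution as the number of cycles $\permcyclecount_n$ of a uniformly random permutation on $n$ elements. The results cited in Section~\ref{subsec:cycle} then give $\expect[\permcyclecount_n] = \log(n) + O(1)$ and $\text{var}(\permcyclecount_n) = \log(n) + O(1)$.

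Next, for any fixed $\epsilon > 0$, since $\expect[\permcyclecount_n]/n \to 0$, for all sufficiently large $n$ we have $n\epsilon - \expect[\permcyclecount_n] \geq n\epsilon/2$. Chebyshev's inequality then yields
\[
\prob{\subn_n/n \geq \epsilon} \leq \prob{|\permcyclecount_n - \expect[\permcyclecount_n]| \geq n\epsilon/2} \leq \frac{4\,\text{var}(\permcyclecount_n)}{n^2 \epsilon^2} = O\!\left(\frac{\log n}{n^2}\right).
\]
The right-hand side is summable in $n$, so the Borel--Cantelli lemma applied on the common probability space on which the sequence of instances is defined implies that $\{\subn_n/n \geq \epsilon\}$ occurs only for finitely many $n$, almost surely. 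Taking a countable sequence $\epsilon_k \downarrow 0$ and intersecting the corresponding full-measure events yields $\lim_{n\to\infty}\subn_n/n = 0$ almost surely.

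The only subtle point, and what I would expect to be the main thing to check, is the passage from Chebyshev's bound to an almost-sure statement: strictly speaking Borel--Cantelli needs the events to be defined on one probability space, whereas Lemma~\ref{lemma:equi} was stated for a single $n$. This is not a real difficulty since the sequence of instances is constructed on a common probability space (each instance being independent across $n$, or any other coupling), and the summability of the Chebyshev bound is robust to that choice. Note that one could alternatively invoke the Feller coupling, which writes $\permcyclecount_n$ as a sum of independent Bernoulli$(1/k)$ variables and allows a direct application of a Kolmogorov-type strong law; but the Chebyshev route is shorter and only uses the moment estimates already quoted.
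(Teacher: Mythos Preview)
Your proposal is correct and follows essentially the same route as the paper: invoke Lemmas~\ref{lemma:sub2perm} and~\ref{lemma:equi} to obtain $\expect[\subn_n]=\log n+O(1)$ and $\mathrm{var}(\subn_n)=\log n+O(1)$, then apply Chebyshev to bound $\prob{\subn_n/n>\epsilon}$ by a summable sequence and conclude via Borel--Cantelli. Your aside about the common probability space and the Feller-coupling alternative are additional remarks not present in the paper, but the core argument is identical.
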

\ifSHORTversion
\begin{proof}[Proof of Lemma~\ref{lemma:splicesvanish}]
By Lemma~\ref{lemma:sub2perm}, the number of disconnected subtours is \emph{equal} to the number of cycles in the permutation $\permvar$, computed by the matching algorithm $\algog$ in line \ref{line:perm}.
Since, by Lemma \ref{lemma:equi}, all permutations are equiprobable, the number of cycles has expectation and variance both equal to $\log(n) + O(1)$.
The remainder of the proof is a straightforward application of the Borel-Cantelli lemma; for a detailed proof, see the arXiv version of the paper.
\end{proof}

\else
\begin{proof}
See Appendix~\ref{appendix:lemmas} for the proof.
\end{proof}
\fi
\begin{remark}\label{rem:van}
For $d\geq 3$, one can similarly prove that $\lim_{n\to +\infty} \subn_n/n^{1-1/d}=0$ almost surely. 
\end{remark}

Having characterized the number of subtours generated by~\ref{alg:splicing}, we are now ready to prove the main result of the section, i.e., the asymptotic optimality of the algorithm.
\begin{theorem} \label{thm:asympoptimality}
Let $d\geq 2$. Let $\mathcal X_n$ be a set of points $\{X_1, \ldots,X_n\}$ that are i.i.d. in a compact set $\env \subset \reals^d$ and distributed according to a density $\den_{\text{P}}$; let $\mathcal Y_n$ be a set of points $\{Y_1, \ldots,Y_n\}$ that are i.i.d. in a compact set $\env \subset \reals^d$ and distributed according to a density $\den_{\text{D}}$. 
Then
\[
\lim_{\batchsize\to +\infty}
  \frac{ \tourlen_\text{SPLICE}(\batchsize)
  }{
    \tourlen^*(\batchsize)
  } = 1, \quad \text{almost surely}.
\]
\end{theorem}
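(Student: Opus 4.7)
The plan is to sandwich $\tourlen_\text{SPLICE}(n)$ and $\tourlen^*(n)$ between two quantities that differ by a sublinear correction, and then divide.

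First I would write every stacker crane tour as the union of the $n$ mandatory pickup-to-delivery links $x_i \to y_i$ together with $n$ delivery-to-pickup links, one out of each $y_i$ and one into each $x_j$, i.e.\ a permutation $\tau \in \permset_n$. This yields the deterministic lower bound
\[
\tourlen^*(n) \;\geq\; \sum_{i=1}^n \vecnorm{Y_i - X_i} + \M(Q_n),
\]
where $Q_n = (\pickset_n, \delvset_n)$ and $\M(Q_n)$ is the optimal EBMP cost, since the delivery-to-pickup permutation $\tau$ of any feasible tour is a candidate for the matching problem. Next I would bound the SPLICE output from above: lines~\ref{line:pd}--\ref{line:ml} contribute exactly $\sum_i \vecnorm{Y_i - X_i} + \M(Q_n)$, while the nearest-neighbor splicing stage (lines~\ref{line:for}--\ref{line:clo}) modifies at most $\subn_n$ of the matching edges and adds at most $\subn_n$ connecting edges, each of length no greater than the diameter $\diam$ of the compact workspace $\env$. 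Therefore
\[
\tourlen_\text{SPLICE}(n) \;\leq\; \sum_{i=1}^n \vecnorm{Y_i - X_i} + \M(Q_n) + \subn_n \cdot \diam.
\]

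Combining the two estimates gives
\[
1 \;\leq\; \frac{\tourlen_\text{SPLICE}(n)}{\tourlen^*(n)} \;\leq\; 1 + \frac{\subn_n \cdot \diam}{\tourlen^*(n)},
\]
so the theorem reduces to showing $\subn_n / \tourlen^*(n) \to 0$ almost surely. For the denominator, I would use the trivial bound $\tourlen^*(n) \geq \sum_{i=1}^n \vecnorm{Y_i - X_i}$, which by the Strong Law of Large Numbers tends to $n\,\expect \vecnorm{Y-X}$ almost surely; because $\den_\picktag$ and $\den_\delvtag$ are absolutely continuous, $\expect \vecnorm{Y-X} > 0$, so $\tourlen^*(n)$ grows at least linearly in $n$ almost surely. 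For the numerator, Lemma~\ref{lemma:splicesvanish} gives $\subn_n = o(n)$ almost surely. Dividing the two yields $\subn_n \cdot \diam / \tourlen^*(n) \to 0$ almost surely, which closes the proof.

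I expect the bookkeeping of the splicing stage (counting exactly how many matching edges are discarded and how many connecting edges are inserted) to be the only subtle point; the rest is a direct combination of the lower bound from the EBMP, the SLLN of Section~\ref{subsec:absolute_diff}, and the sublinearity of the subtour count from Lemma~\ref{lemma:splicesvanish}. One should also verify that the nondegeneracy condition $\expect \vecnorm{Y-X} > 0$ follows from absolute continuity; otherwise both $\tourlen^*(n)$ and $\tourlen_\text{SPLICE}(n)$ degenerate to zero and the claim becomes vacuous.
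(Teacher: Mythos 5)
Your proof follows essentially the same route as the paper's: lower-bound $\tourlen^*$ by pickup-to-delivery links plus the optimal matching cost $\M(Q_n)$, upper-bound $\tourlen_\text{SPLICE}$ by the same two quantities plus $\subn_n$ extra links of length at most the diameter, invoke the SLLN for linear growth of $\tourlen^*$, and use Lemma~\ref{lemma:splicesvanish} for $\subn_n = o(n)$. The only addition is your explicit check that $\expect\vecnorm{Y-X}>0$ under absolute continuity, which the paper leaves implicit when asserting linear growth — a small but legitimate point of care, not a different argument.
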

\bigskip
\begin{proof}
Let $Q_n = (\mathcal X_n, \mathcal Y_n)$.
The length of the optimal stacker crane tour through the pickup points $\mathcal X_n$ and the delivery points $\mathcal Y_n$ is bounded below by
\begin{equation}\label{eq:SClower}
  \tourlen^*(\batchsize)
  \geq \sum_{i=1}^\numdem \vecnorm{\Yvar_i - \Xvar_i} + \M(Q_n).
\end{equation}
On the other hand, the number of connecting links added by the \ref{alg:splicing} algorithm
is bounded above by the number of subtours $\subn_n$ of the optimal bipartite matching, and the length of any connecting link is bounded above by $\max_{x,y\in\env}\| x-y\|$.
Hence, $\tourlen_{\text{SPLICE}}(\batchsize)$ can be bounded above by
\begin{align*}
  \tourlen_\text{SPLICE}(\batchsize)
    &\leq \sum_{i=1}^\numdem \vecnorm{\Yvar_i - \Xvar_i} + \M(Q_n) + \max_{x,y\in\env}\| x-y\| \, \subn_n 		\\
    &\leq \tourlen^*(\batchsize)
    + \max_{x,y\in\env}\| x-y\|\, \subn_n.
\end{align*}
By the Strong Law of Large Numbers, $\lim_{n\to +\infty} \sum_i\vecnorm{\Yvar_i - \Xvar_i}/n = \expect_\spatialjoint {\vecnorm{\Yvar - \Xvar}} $ almost surely.
Hence, $\tourlen^*(\batchsize)$ has linear growth.
Since $\lim_{n\to +\infty}\subn_n/\batchsize = 0$ (by Lemma~\ref{lemma:splicesvanish}), one obtains the claim.
\end{proof}

\section{Analytical Bounds on the Cost of the ESCP} \label{sec:scpbounds}

In this section we derive analytical bounds on the cost of the optimal stacker crane tour.
The resulting bounds are useful for two reasons: (i) they give further insight into the ESCP (and the EBMP), and (ii) they will allow us to find a necessary and sufficient stability condition for our model of DRT systems  (i.e., for the 1-DPDP).

The development of these bounds follows from an analysis of the growth order, with respect to the instance size $\numdem$, of the EBMP matching on $Q_\numdem = (\pickset_\numdem, \delvset_\numdem)$, where $\pickset_\numdem,\delvset_\numdem \sim \ESCPinst$.
The main technical challenge is to extend the results in~\cite{VD-JEY:95}, about the length of the matching to the case where $\den_{\text{P}}$ and $\den_{\text{D}}$ are not identical.
We first derive in Section~\ref{subsec:linklower} a lower bound on the length of the EBMP matching for the case $\den_{\text{P}}\neq \den_{\text{D}}$ (and resulting lower bound for the ESCP);
then in Section~\ref{subsec:linkupper} we find the corresponding upper bounds.

\subsection{A Lower Bound on the Length of the ESCP}
\label{subsec:linklower}

In the rest of the paper, we let $\cellset = \{ \workcell^1, \ldots, \workcell^{|\cellset|} \}$ denote some finite partition of Euclidean environment $\env$ into $|\cellset|$ \emph{cells}.
We denote by
\begin{equation*}
	\pdistr_\picktag(\workcell^\ivar)
	:= \int_{ \locvar \in \workcell^\ivar } \pdistr_\picktag(\locvar) d\locvar
\end{equation*}
the measure of cell $\workcell^\ivar$ under the pickup distribution (with density $\pdistr_\picktag$), i.e., the probability that a particular pickup $X$ is in the $\ivar$th cell.
Similarly, we denote by
\begin{equation*}
	\pdistr_\delvtag(\workcell^\ivar)
	:= \int_{ \locvartwo \in \workcell^\ivar } \pdistr_\delvtag(\locvartwo) d\locvartwo
\end{equation*}
the cell's measure under the delivery distribution (with density $\pdistr_\delvtag$), i.e., the probability that a particular delivery $Y$ is in the $\ivar$th cell.
Most of the results of the paper are valid for arbitrary partitions of the environment; however, for some of the more delicate analysis we will refer to the following particular construction.
Without loss of generality, we assume that the environment $\env \subset \reals^d$ is a hyper-cube with side-length $\sidelength$.
For some integer $\sidecuts \geq 1$, we construct a partition $\cellset_\sidecuts$ of $\env$ by slicing the hyper-cube into a grid of $\sidecuts^\probdim$ smaller cubes, each length $\sidelength/\sidecuts$ on a side; inclusion of subscript $\sidecuts$ in our notation will make the construction explicit.
The ordering of cells in $\cellset_\sidecuts$ is arbitrary.

Our first result bounds the average match length $l_\text{M}(Q_\numdem)$ asymptotically from below.
In preparation for this result we present Problem~\ref{problem:traffic_optimistic}, a linear optimization problem whose solution maps partitions to real numbers.
\begin{problem}[Optimistic ``rebalancing'']
\label{problem:traffic_optimistic}
\begin{equation*}	
\begin{aligned}
	& \underset{
	  \left\{ \alpha_{\ivar\jvar} \geq 0 \right\}_{ i,j \in \{1,\ldots,r^d\} } }{\text{Minimize}}
	& & \sum_{\ivar\jvar} \alpha_{\ivar\jvar} \min_{y\in\workcell^\ivar, x\in\workcell^\jvar} \vecnorm{x-y} \\
	& \text{subject to}
	& &		\sum_\jvar \alpha_{\ivar\jvar} = \pdistr_\delvtag( \workcell^\ivar )
			\quad \text{for all $\workcell^\ivar \in \cellset$,}
	\\
	&&&		\sum_\ivar \alpha_{\ivar\jvar} = \pdistr_\picktag( \workcell^\jvar )
			\quad \text{for all $\workcell^\jvar \in \cellset$.}
\end{aligned}
\end{equation*}
\end{problem}
We denote by $\transportset(\cellset)$ the feasible set of Problem~\ref{problem:traffic_optimistic}, and we refer to a feasible solution $\transport(\cellset) := [\alpha_{\ivar\jvar}]$ as a \emph{transportation matrix}.
We denote by $\underline\transport(\cellset) := [\underline\alpha_{\ivar\jvar}]$ the \emph{optimal} solution of Problem~\ref{problem:traffic_optimistic}, which we refer to as the \emph{optimistic matrix} of partition $\cellset$, and we denote by $\underline\avglink(\cellset)$ the cost of the optimal solution.

\begin{lemma}[Lower bound on the cost of EBMP]
\label{lemma:avgmatch_weaklowerbound}
Let $\pickset_\numdem,\delvset_\numdem \sim \ESCPinst$,
and let $Q_\numdem = (\pickset_\numdem,\delvset_\numdem)$.
For any finite partition $\cellset$ of $\env$,
$\liminf_{\numdem\to\infty} l_\text{M}(Q_\numdem) \geq \underline \avglink(\cellset)$
almost surely, 
where $\avglink_\text{M}(Q_\numdem)$ is the average length of a match in the optimal bipartite matching, and $\underline\avglink(\cellset)$ denotes the value of Problem~\ref{problem:traffic_optimistic}.
\end{lemma}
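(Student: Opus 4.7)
The plan is to relate the empirical structure of an optimal bipartite matching to the feasible set of Problem~\ref{problem:traffic_optimistic} via an ``aggregation'' argument. For each $\numdem$, fix an optimal matching $\permvar_\numdem^*$ realizing $\M(Q_\numdem)$, and define the empirical transportation matrix $A^{(\numdem)} = [\alpha_{\ivar\jvar}^{(\numdem)}]$ associated with the partition $\cellset$ by
\begin{equation*}
	\alpha_{\ivar\jvar}^{(\numdem)} := \frac{1}{\numdem}\,\bigl|\{\, k \in \{1,\ldots,\numdem\}\,:\, Y_k \in \workcell^\ivar,\ X_{\permvar_\numdem^*(k)} \in \workcell^\jvar\,\}\bigr|.
\end{equation*}
First I would obtain the basic lower bound by replacing each individual match length by the minimum inter-cell distance: since every matched pair with $Y_k \in \workcell^\ivar$ and $X_{\permvar_\numdem^*(k)} \in \workcell^\jvar$ satisfies $\|X_{\permvar_\numdem^*(k)} - Y_k\| \geq d_{\ivar\jvar} := \min_{y\in\workcell^\ivar,\,x\in\workcell^\jvar}\vecnorm{x-y}$, summing and dividing by $\numdem$ gives
\begin{equation*}
	\avglink_\text{M}(Q_\numdem) \;\geq\; \sum_{\ivar,\jvar}\alpha_{\ivar\jvar}^{(\numdem)}\,d_{\ivar\jvar}.
\end{equation*}

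Next I would show that the row and column sums of $A^{(\numdem)}$ converge almost surely to the marginals that define the feasible set $\transportset(\cellset)$. The row sum $\sum_\jvar \alpha_{\ivar\jvar}^{(\numdem)}$ equals the empirical fraction of deliveries falling in $\workcell^\ivar$, and the column sum $\sum_\ivar \alpha_{\ivar\jvar}^{(\numdem)}$ equals the empirical fraction of pickups falling in $\workcell^\jvar$ (since $\permvar_\numdem^*$ is a bijection, the latter is a relabeling of the $X_k$ indicators). By the Strong Law of Large Numbers applied to the finitely many cell indicators,
\begin{equation*}
	\sum_\jvar \alpha_{\ivar\jvar}^{(\numdem)} \to \pdistr_\delvtag(\workcell^\ivar),\qquad \sum_\ivar \alpha_{\ivar\jvar}^{(\numdem)} \to \pdistr_\picktag(\workcell^\jvar)
\end{equation*}
almost surely, simultaneously for all $\ivar,\jvar$ since $|\cellset|$ is finite.

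Now I would close the argument by compactness. The sequence $\{A^{(\numdem)}\}$ lies in the bounded set $[0,1]^{|\cellset|^2}$, so along any sample path where the row/column sum convergences hold, every subsequence admits a further subsequence $\{A^{(\numdem_k)}\}$ converging to some limit matrix $A^* \geq 0$. By the convergence of the marginals, $A^*$ satisfies the constraints of Problem~\ref{problem:traffic_optimistic}, hence $A^* \in \transportset(\cellset)$ and $\sum_{\ivar\jvar}\alpha_{\ivar\jvar}^*\,d_{\ivar\jvar} \geq \underline\avglink(\cellset)$ by definition of the optimal value. Combining with the lower bound above,
\begin{equation*}
	\liminf_{\numdem\to\infty}\,\avglink_\text{M}(Q_\numdem) \;\geq\; \liminf_{\numdem\to\infty}\sum_{\ivar,\jvar}\alpha_{\ivar\jvar}^{(\numdem)}d_{\ivar\jvar} \;\geq\; \underline\avglink(\cellset),
\end{equation*}
almost surely, since any subsequence realizing the outer liminf has a further subsequence whose $A^{(\numdem_k)}$ converges to a feasible point.

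I expect the only subtle step to be ensuring that the a.s.\ statements line up: the set of sample paths on which all $2|\cellset|$ marginal SLLN limits hold has full measure, and on that set the compactness/subsequence argument is deterministic, so there is no measurability issue hidden in the ``take any subsequential limit'' move. The finiteness of $\cellset$ is what makes this clean; nothing in this step requires any information about $\permvar_\numdem^*$ beyond its being a bijection, which is why the bound holds for \emph{any} choice of optimal matching.
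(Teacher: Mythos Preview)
Your proof is correct. The setup---defining the empirical transportation matrix $A^{(\numdem)}$, bounding each match by the minimal inter-cell distance, and invoking the SLLN on the finitely many cell indicators to get almost-sure convergence of the marginals---matches the paper exactly.

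Where you diverge is in the final passage to the limit. The paper first replaces $A^{(\numdem)}$ by the minimum over the empirical feasible set $\hat\transportset_\numdem$ (matrices with the observed row/column sums), and then invokes linear-programming sensitivity analysis: it introduces the dual variables of Problem~\ref{problem:traffic_optimistic}, bounds the deviation of the perturbed optimal value by a linear combination of the marginal perturbations, and closes with a union of SLLN events. Your route is more elementary: you keep the specific matrix $A^{(\numdem)}$, use compactness of $[0,1]^{|\cellset|^2}$ to extract convergent subsequences, observe that any limit point lies in $\transportset(\cellset)$ because the marginals converge, and conclude by continuity of the linear objective. This avoids LP duality entirely and is arguably cleaner for the bare $\liminf$ statement; what the paper's sensitivity approach buys in exchange is an explicit (locally linear) rate in the marginal perturbations, which your compactness argument does not yield.
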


\ifSHORTversion
\begin{proof}
Let $\permvar$ denote the optimal bipartite matching of $Q_\numdem$.
For a particular partition $\cellset$, we define random variables
$\hat\alpha_{\ivar\jvar} := \left| \left\{ \kvar : \Yvar_\kvar \in \workcell^\ivar, \Xvar_{\permelem{\kvar}} \in \workcell^\jvar \right\} \right| / \numdem$
for every pair $(\workcell^\ivar,\workcell^\jvar)$ of cells;
that is, $\hat\alpha_{\ivar\jvar}$ denotes the fraction of matches under $\permvar$ whose $\delvset$-endpoints are in $\workcell^\ivar$ and whose $\pickset$-endpoints are in $\workcell^\jvar$.
Let $\hat\transportset_\numdem$ be the set of matrices with entries $\left\{ \alpha_{\ivar\jvar} \geq 0 \right\}_{\ivar,\jvar = 1,\ldots,|\cellset|}$,
such that $\sum_\ivar \alpha_{\ivar\jvar} = \left| \pickset_\numdem \cap \workcell^\jvar \right| / \numdem$ for all $\workcell^\jvar \in \cellset$
and $\sum_\jvar \alpha_{\ivar\jvar} = \left| \delvset_\numdem \cap \workcell^\ivar \right| / \numdem$ for all $\workcell^\ivar \in \cellset$;
note $\left\{ \hat\alpha_{\ivar\jvar} \right\}$ itself is an element of $\hat\transportset_\numdem$.
Then the average match length $\avglink_\text{M}(Q_\numdem)$ is bounded below by
\begin{equation*}
\begin{aligned}
\avglink_\text{M}(Q_\numdem) =
\oneover{\numdem} \sum_{\kvar=1}^\numdem \vecnorm{ \Xvar_{\sigma(\kvar)} - \Yvar_\kvar }
  & \geq \sum_{\ivar\jvar} \hat\alpha_{\ivar\jvar} \min_{y\in\workcell^\ivar, x\in\workcell^\jvar} \vecnorm{x-y}	\\
  & \geq \min_{A \in \hat\transportset_\numdem} \sum_{\ivar\jvar} \alpha_{\ivar\jvar} \min_{y\in\workcell^\ivar, x\in\workcell^\jvar} \vecnorm{x-y}.
\end{aligned}
\end{equation*}
The key observation is that
$\lim_{\numdem\to\infty} \left| \left\{ \pickset_\numdem \cap \workcell^\jvar \right\} \right| / \numdem = \pdistr_\picktag( \workcell^\jvar )$,
and $\lim_{\numdem\to\infty} \left| \left\{ \delvset_\numdem \cap \workcell^\ivar \right\} \right| / \numdem = \pdistr_\delvtag( \workcell^\ivar )$, almost surely.
Applying standard sensitivity analysis (see Chapter~5 of~\cite{bertsimas1997introduction}), we observe that the final expression converges almost surely to $\underline\avglink(\cellset)$
as $n \to +\infty$.
A detailed analysis of this type is included in the arXiv version of the paper.
\end{proof}

\else
\begin{proof}
See Appendix~\ref{appendix:lemmas} for the proof.
\end{proof}
\fi

We are interested in the tightest possible lower bound, and so we define
$\underline\avglink := \sup_\cellset \underline\avglink(\cellset)$.
Remarkably, the supremum lower bound $\underline\avglink$ is equivalent to the Wasserstein distance between $\pdistr_\delvtag$ and $\pdistr_\picktag$, and so we can refine Lemma~\ref{lemma:avgmatch_weaklowerbound} as follows.
\begin{lemma}[Best lower bound on the cost of EBMP]
\label{lemma:avgmatch_lowerbound}
Let $\pickset_\numdem,\delvset_\numdem \sim \ESCPinst$,
and let $Q_\numdem = (\pickset_\numdem,\delvset_\numdem)$.
Then
\begin{equation}
 \label{eq:supremumbound}
\liminf_{\numdem\to\infty} l_\text{M}(Q_\numdem) \geq W(\pdistr_\delvtag,\pdistr_\picktag),
\qquad \text{almost surely.}
\end{equation}
\end{lemma}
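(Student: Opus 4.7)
The plan is to combine Lemma~\ref{lemma:avgmatch_weaklowerbound} with a duality-style identification of the partition-based quantity $\underline\avglink := \sup_\cellset \underline\avglink(\cellset)$ with the Euclidean Wasserstein distance $W(\pdistr_\delvtag,\pdistr_\picktag)$ defined in~\eqref{eq:wasserstein}. Since Lemma~\ref{lemma:avgmatch_weaklowerbound} gives, for \emph{each} finite partition, an almost-sure lower bound on $\liminf_\numdem \avglink_\text{M}(Q_\numdem)$, I first restrict attention to the countable sequence of grid partitions $\cellset_\sidecuts$ (indexed by $\sidecuts \in \mathbb{N}$) introduced above the lemma. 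Intersecting the countably many probability-one events yields, almost surely,
\begin{equation*}
\liminf_{\numdem\to\infty} \avglink_\text{M}(Q_\numdem) \;\geq\; \sup_{\sidecuts\in\mathbb{N}} \underline\avglink(\cellset_\sidecuts).
\end{equation*}
So the whole lemma reduces to the deterministic identity $\sup_{\sidecuts} \underline\avglink(\cellset_\sidecuts) = W(\pdistr_\delvtag,\pdistr_\picktag)$.

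To prove the $\leq$ direction, fix any partition $\cellset$ and take $\gamma^* \in \Gamma(\pdistr_\delvtag,\pdistr_\picktag)$ achieving (or, if necessary, approaching) the infimum in~\eqref{eq:wasserstein}. The induced matrix $\alpha_{\ivar\jvar} := \gamma^*(\workcell^\ivar \times \workcell^\jvar)$ belongs to the feasible set $\transportset(\cellset)$ of Problem~\ref{problem:traffic_optimistic}, because its marginals are exactly $\pdistr_\delvtag(\workcell^\ivar)$ and $\pdistr_\picktag(\workcell^\jvar)$. Replacing each integrand $\vecnorm{y-x}$ by $\min_{y'\in\workcell^\ivar,x'\in\workcell^\jvar}\vecnorm{x'-y'}$ over the corresponding box only decreases the cost, so
\begin{equation*}
\underline\avglink(\cellset)
  \;\leq\; \sum_{\ivar\jvar} \alpha_{\ivar\jvar} \min_{y\in\workcell^\ivar, x\in\workcell^\jvar} \vecnorm{x-y}
  \;\leq\; \int \vecnorm{x-y}\, d\gamma^*(x,y) \;=\; W(\pdistr_\delvtag,\pdistr_\picktag).
\end{equation*}

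The reverse direction, which I expect to be the main obstacle, requires lifting a discrete optimum of Problem~\ref{problem:traffic_optimistic} back to a continuous coupling. For the grid partition $\cellset_\sidecuts$, let $\underline\transport(\cellset_\sidecuts) = [\underline\alpha_{\ivar\jvar}]$ be optimal and define $\gamma_\sidecuts$ on $\env\times\env$ by prescribing that its restriction to each $\workcell^\ivar\times\workcell^\jvar$ has total mass $\underline\alpha_{\ivar\jvar}$ and the product-form density $\pdistr_\delvtag(y)\pdistr_\picktag(x)/(\pdistr_\delvtag(\workcell^\ivar)\pdistr_\picktag(\workcell^\jvar))$ (handling cells of zero measure trivially by zero mass). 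A direct computation of marginals, using $\sum_\jvar \underline\alpha_{\ivar\jvar} = \pdistr_\delvtag(\workcell^\ivar)$ and $\sum_\ivar \underline\alpha_{\ivar\jvar} = \pdistr_\picktag(\workcell^\jvar)$, shows $\gamma_\sidecuts \in \Gamma(\pdistr_\delvtag,\pdistr_\picktag)$. Because every cell has diameter at most $\delta_\sidecuts := \sqrt{d}\,\sidelength/\sidecuts$, the triangle inequality gives $\vecnorm{x-y} \leq \min_{y'\in\workcell^\ivar, x'\in\workcell^\jvar}\vecnorm{x'-y'} + 2\delta_\sidecuts$ on each block, so
\begin{equation*}
W(\pdistr_\delvtag,\pdistr_\picktag)
  \;\leq\; \int \vecnorm{x-y}\, d\gamma_\sidecuts
  \;\leq\; \underline\avglink(\cellset_\sidecuts) + 2\delta_\sidecuts.
\end{equation*}
Letting $\sidecuts\to\infty$ forces $\sup_\sidecuts \underline\avglink(\cellset_\sidecuts) \geq W(\pdistr_\delvtag,\pdistr_\picktag)$, completing the identification. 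Combined with the almost-sure lower bound above, this yields~\eqref{eq:supremumbound}. The delicate points to verify carefully are the construction of $\gamma_\sidecuts$ on cells where one of the densities vanishes (discarded with zero $\underline\alpha_{\ivar\jvar}$ mass), and measurability of the partition-indexed inequality so that the countable intersection argument is valid.
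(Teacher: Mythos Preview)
Your proposal is correct and follows essentially the same route as the paper: both show $\sup_\cellset \underline\avglink(\cellset) = W(\pdistr_\delvtag,\pdistr_\picktag)$ by (i) pushing a near-optimal coupling down to a feasible transportation matrix for the $\leq$ direction, and (ii) lifting the optimal $\underline\transport(\cellset_\sidecuts)$ to a coupling in $\Gamma(\pdistr_\delvtag,\pdistr_\picktag)$ whose cost differs from $\underline\avglink(\cellset_\sidecuts)$ by at most the cell diameter. Your treatment is actually a bit more careful than the paper's in two respects: you make explicit the countable-intersection argument needed to pass from the per-partition almost-sure bound of Lemma~\ref{lemma:avgmatch_weaklowerbound} to the supremum over the grid sequence, and you give a concrete product-form density for $\gamma_\sidecuts$ where the paper simply asserts ``choose $\hat\gamma$ to be any distribution that satisfies\ldots''.
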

\ifSHORTversion
\begin{proof}
The lemma is proved by showing that
$\sup_\cellset \underline\avglink(\cellset) = W(\pdistr_\delvtag,\pdistr_\picktag)$.
By construction, Problem~\ref{problem:traffic_optimistic} is a discrete approximation (and lower bound) of~\eqref{eq:wasserstein};
moreover, it can be shown that $\lim_{\sidecuts \to +\infty} \underline\avglink(\cellset_\sidecuts) - W(\pdistr_\delvtag,\pdistr_\picktag) \to 0^-$, where $\cellset_\sidecuts$ is the grid partition of $\probdim^\sidecuts$ cubes.
Applying Lemma~\ref{lemma:avgmatch_weaklowerbound} to this sequence of partitions obtains the lemma.
For a more detailed proof we refer to the arXiv version of the paper.
\end{proof}
\else
\begin{proof}
The lemma is proved by showing that
$\sup_\cellset \underline\avglink(\cellset) = W(\pdistr_\delvtag,\pdistr_\picktag)$.
See Appendix~\ref{appendix:lemmas}.
\end{proof}
\fi

Henceforth in the paper, we will abandon the notation $\underline \avglink$ in favor of $W(\pdistr_\delvtag,\pdistr_\picktag)$ to denote this lower bound.
This connection to the Wasserstein distance yields the following perhaps surprising result.
\begin{prop} \label{prop:linkpositive}
The supremum lower bound $W(\pdistr_\delvtag,\pdistr_\picktag)$
of~\eqref{eq:supremumbound} is equal to zero
if and only if $\pdistr_\delvtag = \pdistr_\picktag$.
\end{prop}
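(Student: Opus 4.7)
The plan is to verify both directions of the equivalence by exploiting the defining infimum in the Wasserstein distance~\eqref{eq:wasserstein}.

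For the easy direction ($\varphi_\delvtag = \varphi_\picktag \Rightarrow W(\varphi_\delvtag,\varphi_\picktag) = 0$), I would construct an explicit coupling that achieves cost zero. Specifically, when $\varphi_\delvtag = \varphi_\picktag =: \varphi$, let $\gamma^*$ be the measure on $\env \times \env$ supported on the diagonal $\{(x,x) : x \in \env\}$, defined by $\gamma^*(A) = \int_\env \mathbf{1}_A(x,x) \, \varphi(x) \, dx$ for measurable $A \subseteq \env \times \env$. The marginals of $\gamma^*$ are both $\varphi$, so $\gamma^* \in \Gamma(\varphi_\delvtag,\varphi_\picktag)$; and since $\gamma^*$ puts all its mass on pairs with $x = y$, the integrand $\|y-x\|$ vanishes $\gamma^*$-almost everywhere, giving $W(\varphi_\delvtag,\varphi_\picktag) = 0$.

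For the nontrivial direction ($W(\varphi_\delvtag,\varphi_\picktag) = 0 \Rightarrow \varphi_\delvtag = \varphi_\picktag$), I would argue as follows. Since $W(\varphi_\delvtag,\varphi_\picktag) = 0$, pick a sequence of couplings $\gamma_n \in \Gamma(\varphi_\delvtag,\varphi_\picktag)$ with $\int \|y-x\| \, d\gamma_n(x,y) \to 0$. Because the marginals are fixed and $\env$ is compact, the family $\{\gamma_n\}$ is tight on the compact space $\env \times \env$, so by Prokhorov's theorem we can extract a subsequence converging weakly to some $\gamma^\infty \in \Gamma(\varphi_\delvtag,\varphi_\picktag)$. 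The map $(x,y) \mapsto \|y-x\|$ is continuous and bounded on the compact set $\env \times \env$, so by the portmanteau theorem $\int \|y-x\| \, d\gamma^\infty = 0$, which forces $\gamma^\infty$ to concentrate on the diagonal. But a coupling supported on the diagonal identifies its two marginals (since both are the pushforward of $\gamma^\infty$ under the same map $(x,x)\mapsto x$), so $\varphi_\delvtag = \varphi_\picktag$.

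The only subtlety is the passage from ``cost zero'' to ``equal marginals''; the rest is either bookkeeping or an appeal to the fact (well known for Wasserstein-type metrics on probability measures with finite first moments over a compact space) that $W$ is a genuine metric, from which the claim is immediate. I expect the diagonal-concentration step to be the main technical point, and I would either present it via the weak-limit argument above or alternatively invoke Kantorovich--Rubinstein duality: $W(\varphi_\delvtag,\varphi_\picktag) = 0$ implies $\int f \, d\varphi_\delvtag = \int f \, d\varphi_\picktag$ for every 1-Lipschitz $f$ on $\env$, and since this class separates Borel probability measures on a compact set, $\varphi_\delvtag = \varphi_\picktag$.
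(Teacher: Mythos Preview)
Your proof is correct. The forward direction (diagonal coupling) is essentially identical to the paper's. For the reverse direction, however, you take a genuinely different route: you argue via weak compactness (Prokhorov) to produce an optimal coupling with zero cost, then read off equality of marginals from its support on the diagonal; you also sketch the Kantorovich--Rubinstein duality alternative. The paper instead gives a direct, elementary contrapositive: assuming $\varphi_\delvtag \neq \varphi_\picktag$, it exhibits nested sets $A_1 \subseteq A_2$ with $\varphi_\delvtag(A_1) > \varphi_\picktag(A_2)$ and $\operatorname{dist}(A_1, \env \setminus A_2) \geq \epsilon$, and then bounds the cost of \emph{any} coupling below by $\epsilon[\varphi_\delvtag(A_1) - \varphi_\picktag(A_2)] > 0$. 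Your argument is the standard metric-space proof and generalizes cleanly beyond the Euclidean setting, but it imports Prokhorov and portmanteau (or duality) as black boxes. The paper's argument is more hands-on and yields an explicit quantitative lower bound on $W$ in terms of the mass imbalance, which aligns with the intuition the authors emphasize immediately after the proposition; it needs only the marginal constraints and no limiting procedure.
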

\ifSHORTversion
\begin{proof}
The proposition follows immediately from the fact that the Wasserstein distance is known to satisfy the axioms of a metric on $\Gamma(\pdistr_\delvtag,\pdistr_\picktag)$.
For an alternative proof, we refer to the arXiv version of the paper.
\end{proof}
\else
\begin{proof}
The proposition follows immediately from the fact that the Wasserstein distance is known to satisfy the axioms of a metric on $\Gamma(\pdistr_\delvtag,\pdistr_\picktag)$.
Nevertheless, we provide a short alternative proof.

The proof of the forward direction is by construction:
Suppose $\pdistr_1 = \pdistr_2 = \pdistr$;
let $\gamma$ be the measure such that
$\gamma( J ) = \int_{ x : (x,x) \in J } \pdistr(x) dx$
for any $J \in \env \times \env$;
clearly $\gamma \in \Gamma(\pdistr,\pdistr)$.
Then $\int_{x,y \in \env} \vecnorm{y-x} d\gamma(x,y) = \int_{x \in \env} \vecnorm{x-x} \pdistr(x) dx = 0$.

The proof of the reverse direction is by contradiction:
Suppose $\pdistr_1 \neq \pdistr_2$.
Then one can choose $\epsilon > 0$ sufficiently small, and regions $A_1$ and $A_2$ (where $A_1 \subseteq A_2 \subseteq \env$), so that $\pdistr_1(A_1) > \pdistr_2(A_2)$
and $\vecnorm{x-y} \geq \epsilon$ for all $x \in A_1$ and $y \notin A_2$.
Then for any $\gamma \in \Gamma(\pdistr_1,\pdistr_2)$
\[
\begin{aligned}
\int_{x,y \in \env} \vecnorm{x-y} d\gamma(x,y)
  &\geq \epsilon \gamma\left( \{ (x,y) : x \in A_1, y \notin A_2 \} \right)	\\
  &= \epsilon \left[ \pdistr_1(A_1) - \gamma\left( \{ (x,y) : x \in A_1, y \in A_2 \} \right) \right]	\\
  &\geq \epsilon \left[ \pdistr_1(A_1) - \pdistr_2(A_2) \right] > 0.
\end{aligned}
\]
\end{proof}
\fi
The intuition behind the alternative proof is that if some fixed area $\mathcal A$ in the environment has unequal proportions of $\pickset$ points versus $\delvset$ points, then a positive fraction of the matches associated with $\mathcal A$ (a positive fraction of all matches) must have endpoints \emph{outside} of $\mathcal A$, i.e., at positive distance. Such an area can be identified whenever $\pdistr_\picktag \neq \pdistr_\delvtag$.

Thus the implication of Lemma~\ref{lemma:avgmatch_lowerbound} is that the average match length is asymptotically no less than some constant which depends \emph{only} on the workspace geometry and the spatial distribution of pickup and delivery points; moreover, that constant is generally \emph{non-zero}.
We are now in a position to state the main result of this section.
\begin{theorem}[Lower bound on the cost of ESCP]\label{thrm:lowb}
Let $\tourlen^*(n)$ be the length of the optimal stacker crane tour through
  $\pickset_\numdem,\delvset_\numdem \sim \ESCPinst$, for compact $\env \in \reals^\probdim$, where $d\geq 2$.
Then
\begin{equation} \label{eq:scpLB}
  \liminf_{\numdem\to+\infty} \tourlen^* (n)/ \numdem
  \geq
  \expect_{\spatialjoint} \vecnorm{Y-X} + W(\pdistr_\delvtag,\pdistr_\picktag),
\end{equation}
almost surely.
\end{theorem}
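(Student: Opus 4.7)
The plan is to combine the lower bound \eqref{eq:SClower} used in the proof of Theorem~\ref{thm:asympoptimality} with the Wasserstein lower bound from Lemma~\ref{lemma:avgmatch_lowerbound}, then divide by $\numdem$ and take the $\liminf$.

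First I would observe that any stacker crane tour through $\pickset_\numdem \cup \delvset_\numdem$ consists of the $\numdem$ mandatory pickup-to-delivery links together with $\numdem$ delivery-to-pickup links; moreover, the delivery-to-pickup links in any feasible tour induce a bijection from $\delvset_\numdem$ to $\pickset_\numdem$ (i.e., a permutation of $\{1,\ldots,\numdem\}$). In particular, the sum of their lengths is no smaller than the cost $\M(Q_\numdem)$ of the optimal bipartite matching between $\delvset_\numdem$ and $\pickset_\numdem$. Hence the inequality
\[
\tourlen^*(\numdem) \geq \sum_{i=1}^{\numdem} \vecnorm{Y_i - X_i} + \M(Q_\numdem)
\]
holds deterministically on every realization, which is precisely \eqref{eq:SClower}.

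Next I would divide through by $\numdem$ to obtain
\[
\frac{\tourlen^*(\numdem)}{\numdem} \geq \frac{1}{\numdem}\sum_{i=1}^{\numdem} \vecnorm{Y_i - X_i} + \avglink_{\text{M}}(Q_\numdem).
\]
The first term on the right is a sample average of i.i.d.\ non-negative random variables $\vecnorm{Y_i - X_i}$, each with finite expectation $\expect_\spatialjoint \vecnorm{Y-X}$ (finite since $\env$ is compact); by the Strong Law of Large Numbers it converges almost surely to $\expect_\spatialjoint \vecnorm{Y-X}$. Lemma~\ref{lemma:avgmatch_lowerbound} gives $\liminf_{\numdem\to\infty} \avglink_{\text{M}}(Q_\numdem) \geq W(\pdistr_\delvtag,\pdistr_\picktag)$ almost surely.

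Finally, taking the $\liminf$ and using the fact that the $\liminf$ of a sum is at least the sum of $\liminf$'s (in fact one summand converges and the other's $\liminf$ is bounded below), I would conclude
\[
\liminf_{\numdem\to\infty} \frac{\tourlen^*(\numdem)}{\numdem} \geq \expect_\spatialjoint \vecnorm{Y-X} + W(\pdistr_\delvtag,\pdistr_\picktag)
\]
almost surely, which is \eqref{eq:scpLB}. The only non-trivial ingredient is Lemma~\ref{lemma:avgmatch_lowerbound}, which has already been established; once that lower bound on the matching cost is in hand, this theorem is essentially an immediate corollary, so I do not anticipate a serious obstacle beyond invoking the two previously proved results carefully.
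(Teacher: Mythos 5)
Your proposal is correct and follows essentially the same route as the paper's proof: decompose the tour into pickup-to-delivery links plus delivery-to-pickup links, bound the latter below by $\M(Q_\numdem)$, apply the SLLN to the first sum and Lemma~\ref{lemma:avgmatch_lowerbound} to the matching term. The only cosmetic difference is that you explicitly invoke the inequality labeled \eqref{eq:SClower}, whereas the paper re-derives it inline; the substance is identical.
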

\begin{proof}
A stacker crane tour is composed of pickup-to-delivery links and delivery-to-pickup links. The latter describe \emph{some} bipartite matching having cost no less than the optimal cost for the EBMP.
Thus, one can write
\begin{align*}
  \tourlen^*(n) / \numdem
  &\geq
    \frac{1}{\numdem} \sum_{i=1}^\numdem \vecnorm{\Yvar_i-\Xvar_i}
    + \frac{1}{\numdem} \M(Q_n).
\end{align*}
By the Strong Law of Large Numbers, the first term of the last expression goes to $\expect_{\spatialjoint}\vecnorm{Y-X}$ almost surely.
By Lemma~\ref{lemma:avgmatch_lowerbound}, the second term is bounded below asymptotically, almost surely, by $W(\pdistr_\delvtag,\pdistr_\picktag)$.
\end{proof}

\begin{remark}[Lower bound on the cost of the $m$ESCP]\label{rem:mETSP}
The multi-vehicle ESCP ($m$ESCP) consists in finding a \emph{set} of $m$ stacker crane tours such that all pickup-delivery pairs are visited exactly once and the total cost is minimized.
The $m$ECSP arises when more than one vehicle is available for service.
It is straightforward to show that the lower bound in Theorem \ref{thrm:lowb} is also a valid lower bound for the optimal cost of the $m$ESCP, for any $m$.
\end{remark}

\subsection{An Upper Bound on the Length of the ESCP}\label{subsec:linkupper}

In this section we produce a sequence that bounds $\tourlen_\text{M}(Q_\numdem)$ asymptotically from above, and matches the linear scaling of~\eqref{eq:scpLB}.
The bound relies on the performance of Algorithm~\ref{alg:randEBMP}, a \emph{randomized} algorithm for the stochastic EBMP.
The idea of Algorithm~\ref{alg:randEBMP} is that each point $y \in \delvset$ randomly generates an associated \emph{shadow site} $X'$,
so that the collection $\pickset'$ of shadow sites ``looks like'' the set of actual pickup sites.
An \emph{optimal} matching is produced between $\pickset'$ and $\pickset$
which assists in the matching between $\delvset$ and $\pickset$;
specifically, if $x \in \pickset$ is the point matched to $X'$, then the matching produced by Algorithm~\ref{alg:randEBMP} contains $(y,x)$.
An illustrative diagram can be found in Figure~\ref{fig:shadowmatch}.

Algorithm~\ref{alg:randEBMP} is specifically designed to have two important properties for random sets $Q_n$:
First, $\expect\vecnorm{X'-Y}$ is predictably controlled by ``tuning" inputs---a partition $\cellset$ of the environment and ``policy matrix'' $\transport(\cellset)$---
chosen as a function of $n$;
second, $\tourlen_\text{M}( (\pickset',\pickset) )/\numdem \to 0^+$ as $\numdem\to+\infty$.
Later we will show that $\cellset$ and $\transport(\cellset)$ can be chosen so that
$\expect\vecnorm{X'-Y} \to W(\pdistr_\delvtag,\pdistr_\picktag)$ (as $\numdem \to +\infty$)
leading to a bipartite matching algorithm whose performance matches the lower bound of~\eqref{alg:randEBMP}.
\begin{algorithm}[htb]
\caption{Randomized EBMP (parameterized)}
\label{alg:randEBMP}
\begin{algorithmic}[1]
\renewcommand{\algorithmicrequire}{\textbf{Input:}}
\renewcommand{\algorithmicensure}{\textbf{Output:}}
\renewcommand{\algorithmiccomment}[1]{ // \emph{#1} }
\REQUIRE
\emph{pickup} points $\pickset = \{\xvar_1,\ldots,\xvar_\numdem\}$,
\emph{delivery} points $\delvset = \{\yvar_1,\ldots,\yvar_\numdem\}$, probability densities $\pdistr_\picktag(\cdot)$ and $\pdistr_\delvtag(\cdot)$,
partition $\cellset$ of the workspace,
and matrix $\transport(\cellset) \in \transportset(\cellset)$.	
\ENSURE a bi-partite matching between $\delvset$ and $\pickset$.
\STATE {\bf initialize} $\pickset' \gets \emptyset$.
\STATE {\bf initialize} matchings
$\overline M \gets \emptyset$;
$\hat M \gets \emptyset$;
$M \gets \emptyset$.
\STATE \COMMENT{generate ``shadow pickups''}	
\FOR{$y \in \delvset$}
\STATE Let $\workcell^i$ be the cell containing $y$.
\label{line:shadowsample_start}
\STATE Sample $j$; $j = \jvar'$ with probability
  $\alpha_{\ivar\jvar'} / \pdistr_\delvtag(\workcell^\ivar)$.
\STATE Sample $X'$ with pdf
  $\pdistr_\picktag( \ \cdot \ | \Xvar' \in \workcell^\jvar )$.
\label{line:shadowsample_stop}
\STATE Insert $X'$ into $\pickset'$ and $(y,X')$ into $\overline M$.
\ENDFOR
%
\STATE $\hat M \gets$ an optimal EBMP between $\pickset'$ and $\pickset$.
\STATE \COMMENT{construct the matching}
\FOR{$X' \in \pickset'$}
\STATE Let $(y,X')$ and $(X',x)$ be the matches in $\overline M$ and $\hat M$, respectively, whose $\pickset'$-endpoints are $X'$.
\STATE Insert $(y,x)$ into $M$.
\ENDFOR

\RETURN $M$	
\end{algorithmic}
\end{algorithm}

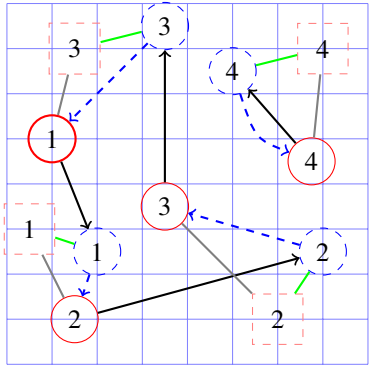
\begin{figure}[tb]
\centering
\definecolor{gold}{rgb}{.85,.64,.125}

\begin{tikzpicture}[scale=.6]
\tikzstyle{pickup}=[circle,draw=red!100] 
\tikzstyle{delivr}=[circle,dashed,draw=blue!100] 
\tikzstyle{shadow}=[regular polygon,regular polygon sides=4,dashed,draw=red!50]

\tikzstyle{p2d}=[->,thick]
\tikzstyle{stacker}=[->,thick,dashed,color=gold]
\tikzstyle{match}=[->,thick,dashed,color=blue]
\tikzstyle{primary}=[color=green,thick]
\tikzstyle{node-done} = [circle,draw=gray!100] 
\tikzstyle{edge-done} = [color=gray,thick]

\draw[step=1,color=blue!50,very thin] (-4,-4) grid (4,4) ;		

\coordinate (p1) at (-3,1) ;		\coordinate (d1) at (-2, -1.5) ;	\coordinate (s1) at (-3.5,-1) ;
\coordinate (p2) at (-2.5,-3) ;		\coordinate (d2) at (3,-1.5) ;		\coordinate (s2) at (2,-3) ;
\coordinate (p3) at (-.5,-.5) ;		\coordinate (d3) at (-.5,3.5) ;		\coordinate (s3) at (-2.5,3) ;
\coordinate (p4) at (2.75,.5) ;		\coordinate (d4) at (1,2.5) ;		\coordinate (s4) at (3,3) ;


\node [delivr] (delv1) at (d1) {1} ;
\node [pickup,thick] (pick1) at (p1) {1} ; \draw [p2d] (pick1) -- (delv1) ;
\node [delivr] (delv2) at (d2) {2} ;
\node [pickup] (pick2) at (p2) {2} ; \draw [p2d] (pick2) -- (delv2) ;
\node [delivr] (delv3) at (d3) {3} ;
\node [pickup] (pick3) at (p3) {3} ; \draw [p2d] (pick3) -- (delv3) ;
\node [delivr] (delv4) at (d4) {4} ;
\node [pickup] (pick4) at (p4) {4} ; \draw [p2d] (pick4) -- (delv4) ;

\node [shadow] (shad1) at (s1) {1} ;	\draw [primary] 	(delv1) -- (shad1) ;
\node [shadow] (shad2) at (s2) {2} ;	\draw [primary] 	(delv2) -- (shad2) ;
\node [shadow] (shad3) at (s3) {3} ;	\draw [primary] 	(delv3) -- (shad3) ;
\node [shadow] (shad4) at (s4) {4} ;	\draw [primary] 	(delv4) -- (shad4) ;

\draw [edge-done] 	(shad1) -- (pick2) ;
\draw [match]		(delv1) -- (pick2) ;
\draw [edge-done] 	(shad2) -- (pick3) ;
\draw [match]		(delv2) -- (pick3) ;
\draw [edge-done] 	(shad3) -- (pick1) ;
\draw [match]		(delv3) -- (pick1) ;
\draw [edge-done] 	(shad4) -- (pick4) ;
\draw [match]		(delv4) .. controls (1.5,1) .. (pick4) ;


\end{tikzpicture}
\caption{
Algorithm~\ref{alg:randEBMP}:
Demands are labeled with integers.
Pickup and delivery sites are represented by solid and dashed circles, respectively.
Pickup-to-delivery links are shown as black arrows.
Shadow pickups are shown as dashed squares, with undirected links to their generators (delivery sites);
also shown are optimal matching links between shadows and pickups.
Dashed arrows show the resulting induced matching.
Note, this solution produces two disconnected subtours $(1,2,3)$ and $(4)$.
}
\label{fig:shadowmatch}
\end{figure}

We present the first two properties as formal lemmas:
\begin{lemma}[Similarity of $\pickset'$ to $\pickset$]
\label{lemma:shadows}
Let $X_1,\ldots,X_\numdem$ be a set of points that are i.i.d. with density $\pdistr_\picktag$;
let $Y_1,\ldots,Y_\numdem$ be a set of points that are i.i.d. with density  $\pdistr_\delvtag$.
Then Algorithm~\ref{alg:randEBMP} generates shadow sites $X'_1,\ldots,X'_\numdem$,
which are (i) jointly independent of $X_1,\ldots,X_\numdem$,
and (ii) mutually i.i.d.,
with density $\pdistr_\picktag$.
\end{lemma}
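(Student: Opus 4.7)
The plan is to verify both (i) and (ii) by direct computation of the joint law of the shadows $(X'_1, \ldots, X'_n)$, leveraging the marginal constraints that define $\transport(\cellset) \in \transportset(\cellset)$. Claim (i) is essentially immediate: the shadow generation loop (lines~\ref{line:shadowsample_start}--\ref{line:shadowsample_stop}) consumes only the delivery points $\delvset$ together with fresh, auxiliary random draws, so the sigma-algebra generated by $(X'_1,\ldots,X'_n)$ is independent of $\pickset=\{X_1,\ldots,X_n\}$. The substance lies in establishing (ii).

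For the marginal distribution of a single shadow, I would fix a measurable set $A \subseteq \env$ and condition sequentially on the cell $\workcell^\ivar$ containing $Y_k$, then on the sampled index $\jvar$, then on the final draw from $\pdistr_\picktag(\cdot \mid X' \in \workcell^\jvar)$. This gives
\begin{equation*}
\prob{X'_k \in A} \;=\; \sum_\ivar \pdistr_\delvtag(\workcell^\ivar) \sum_\jvar \frac{\alpha_{\ivar\jvar}}{\pdistr_\delvtag(\workcell^\ivar)} \cdot \frac{\pdistr_\picktag(A \cap \workcell^\jvar)}{\pdistr_\picktag(\workcell^\jvar)} \;=\; \sum_\jvar \frac{\pdistr_\picktag(A \cap \workcell^\jvar)}{\pdistr_\picktag(\workcell^\jvar)} \sum_\ivar \alpha_{\ivar\jvar}.
\end{equation*}
The column-sum constraint $\sum_\ivar \alpha_{\ivar\jvar} = \pdistr_\picktag(\workcell^\jvar)$ of Problem~\ref{problem:traffic_optimistic} then collapses the right-hand side to $\sum_\jvar \pdistr_\picktag(A \cap \workcell^\jvar) = \pdistr_\picktag(A)$, so each $X'_k$ is marginally distributed with density $\pdistr_\picktag$.

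To upgrade marginal identity to mutual independence, I would observe that the algorithm's inner loop generates $X'_k$ from $Y_k$ alone, using a pair of auxiliary random variables (the cell index $\jvar$ and the point sample) that are drawn independently across $k$. Thus, conditional on $(Y_1,\ldots,Y_n)$, the shadows are mutually independent, and moreover the conditional law of $X'_k$ depends on the $Y$'s only through $Y_k$. Combined with the independence of $Y_1,\ldots,Y_n$ and the marginal computation above, a short factorization argument yields
\begin{equation*}
\prob{X'_1 \in A_1, \ldots, X'_n \in A_n} \;=\; \expectation{\prod_{k=1}^n \probcond{X'_k \in A_k}{Y_k}} \;=\; \prod_{k=1}^n \pdistr_\picktag(A_k),
\end{equation*}
which is exactly (ii). The main obstacle is simply careful bookkeeping of the nested conditioning; there is no genuine difficulty once one recognizes that the marginal constraints of Problem~\ref{problem:traffic_optimistic} are precisely calibrated to make $\pdistr_\picktag$ fall out as the marginal density of each shadow.
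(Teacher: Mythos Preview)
Your proposal is correct and follows essentially the same approach as the paper: both arguments dispatch (i) and the mutual independence part of (ii) by noting that each $X'_k$ is generated from $Y_k$ alone via fresh auxiliary randomness, and both compute the marginal law of a single shadow by conditioning hierarchically (on the cell of $Y$, then on the sampled index $j$, then on the final draw) and invoking the column-sum constraint $\sum_\ivar \alpha_{\ivar\jvar} = \pdistr_\picktag(\workcell^\jvar)$ to recover $\pdistr_\picktag$. The only cosmetic difference is that the paper writes the joint density $\pdistr_{YJX'}$ and integrates out $(Y,J)$, whereas you work with probabilities of events and an explicit factorization; the content is the same.
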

\smallskip
\ifSHORTversion
\begin{proof}
Lemma~\ref{lemma:shadows} relies on basic laws of probability, and the proof is omitted in the interest of brevity.
For a complete proof we refer to the arXiv version of the paper.
\end{proof}
\else
\begin{proof}
Lemma~\ref{lemma:shadows} relies on basic laws of probability, and its proof is relegated to Appendix~\ref{appendix:lemmas}. 
\end{proof}
\fi
The importance of this lemma is that it allows us to apply equation~\eqref{eq:EBMP_common} of Section~\ref{subsec:EBMP} to characterize $\M((\pickset',\pickset))$.
\begin{lemma}[Delivery-to-Pickup Lengths]
\label{lemma:sampletravel}
Let $Y$ be a random point with probability density $\pdistr_\delvtag$;
let $X'$ be the \emph{shadow site} of $y=Y$ generated by lines~\ref{line:shadowsample_start}-\ref{line:shadowsample_stop} of Algorithm~\ref{alg:randEBMP},
running with inputs $\cellset$ and $\transport(\cellset)$.
Then
\begin{align*}
  \expect \vecnorm{X'-Y}
  \leq \sum_{\ivar\jvar}
    \alpha_{\ivar\jvar} \max_{\yvar \in \workcell^\ivar, \xvar \in \workcell^\jvar}
    \vecnorm{ \xvar - \yvar }.
\end{align*}
\end{lemma}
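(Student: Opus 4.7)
The plan is to expand the expectation $\expect\vecnorm{X'-Y}$ by the law of total expectation, conditioning first on the cell containing $Y$ and then on the cell index $\jvar$ sampled in line~\ref{line:shadowsample_start} of Algorithm~\ref{alg:randEBMP}. By construction, $Y$ lies in cell $\workcell^\ivar$ with probability $\pdistr_\delvtag(\workcell^\ivar)$; conditioned on this event, the index $\jvar$ takes the value $\jvar'$ with probability $\alpha_{\ivar\jvar'}/\pdistr_\delvtag(\workcell^\ivar)$; and conditioned on both of these events, $X'$ lies in $\workcell^{\jvar'}$ (by line~\ref{line:shadowsample_stop}).

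Concretely, I would first write
\[
\expect\vecnorm{X'-Y}
= \sum_\ivar \pdistr_\delvtag(\workcell^\ivar)\,
  \expect\!\left[\vecnorm{X'-Y} \,\big|\, Y \in \workcell^\ivar\right],
\]
then further condition on the sampled index:
\[
\expect\!\left[\vecnorm{X'-Y} \,\big|\, Y\in\workcell^\ivar\right]
= \sum_\jvar \frac{\alpha_{\ivar\jvar}}{\pdistr_\delvtag(\workcell^\ivar)}\,
  \expect\!\left[\vecnorm{X'-Y} \,\big|\, Y\in\workcell^\ivar,\, X'\in\workcell^\jvar\right].
\]
Substituting the second expression into the first cancels the $\pdistr_\delvtag(\workcell^\ivar)$ factors and yields a double sum over $\ivar,\jvar$ with coefficients $\alpha_{\ivar\jvar}$. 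For the inner conditional expectation, the trivial bound $\vecnorm{X'-Y} \leq \max_{\yvar\in\workcell^\ivar,\,\xvar\in\workcell^\jvar}\vecnorm{\xvar-\yvar}$ holds pointwise on the conditioning event, so the same bound survives the expectation. Collecting terms gives the stated inequality.

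There is no real obstacle here; the argument is essentially bookkeeping for the conditional distribution that Algorithm~\ref{alg:randEBMP} imposes on the joint draw $(Y,X')$. The only subtlety worth flagging is that cells $\workcell^\ivar$ with $\pdistr_\delvtag(\workcell^\ivar)=0$ contribute zero to every sum and can be excluded from the conditioning without issue (since the constraints of Problem~\ref{problem:traffic_optimistic} force $\alpha_{\ivar\jvar}=0$ for all $\jvar$ in that case). I would also note that the maximum is finite because $\env$ is compact.
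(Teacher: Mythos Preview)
Your proposal is correct and takes essentially the same approach as the paper: the paper writes out the joint density $\pdistr_{YJX'}$ explicitly and integrates, while you phrase the same decomposition via iterated conditional expectations, but in both cases the computation reduces to a double sum over cell pairs with coefficients $\alpha_{\ivar\jvar}$ and the inner term bounded by the maximum distance. Your handling of the zero-measure cells is a nice touch that the paper leaves implicit.
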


\ifSHORTversion
\begin{proof}
The proof...
\end{proof}

\else
\begin{proof}
Again, see Appendix~\ref{appendix:lemmas}.
\end{proof}
\fi
\smallskip

Given a finite partition $\cellset$, it should be desirable to choose $\transport(\cellset)$ in order to optimize the performance of Algorithm~\ref{alg:randEBMP};
that is, minimize the expected length of the matching produced.
We can minimize at least the bound of Lemma~\ref{lemma:sampletravel} using the solution of Problem~\ref{problem:traffic_pessimistic}, shown below.
We refer to its solution $\overline\transport(\cellset)$ as the \emph{pessimistic matrix} of partition $\cellset$.
\begin{problem}[Pessimistic ``rebalancing'']
\label{problem:traffic_pessimistic}
\begin{equation*}
\begin{aligned}
	& \underset{
	  \left\{ \alpha_{\ivar\jvar} \geq 0 \right\}_{ i,j \in \{1,\ldots,r^d\} } }{\text{Minimize}}
	& & \sum_{\ivar\jvar} \alpha_{\ivar\jvar} \max_{y\in\workcell^\ivar, x\in\workcell^\jvar} \vecnorm{x-y} \\
	& \text{subject to}
	& &		\sum_\jvar \alpha_{\ivar\jvar} = \pdistr_\delvtag( \workcell^\ivar )
			\quad \text{for all $\workcell^\ivar \in \cellset$,}
	\\
	&&&		\sum_\ivar \alpha_{\ivar\jvar} = \pdistr_\picktag( \workcell^\jvar )
			\quad \text{for all $\workcell^\jvar \in \cellset$.}
\end{aligned}
\end{equation*}
\end{problem}

Now we present Algorithm~\ref{alg:randEBMP2}, described in pseudo-code, which computes%
\footnote{In the definition of the algorithm, we use the ``small omega'' notation, where $f(\cdot) \in \omega( g(\cdot) )$ implies $\lim_{n\to\infty} f(n)/g(n) = \infty$.}
a specific partition $\cellset$,
and then invokes Algorithm~\ref{alg:randEBMP} with inputs $\cellset$ and $\overline\transport(\cellset)$.
\begin{algorithm}[htb]
\caption{Randomized EBMP}
\label{alg:randEBMP2}
\begin{algorithmic}[1]
\renewcommand{\algorithmicrequire}{\textbf{Input:}}
\renewcommand{\algorithmicensure}{\textbf{Output:}}
\renewcommand{\algorithmiccomment}[1]{ // \emph{#1} }
\REQUIRE
\emph{pickup} points $\pickset = \{\xvar_1,\ldots,\xvar_\numdem\}$,
\emph{delivery} points $\delvset = \{\yvar_1,\ldots,\yvar_\numdem\}$,
and prob. densities $\pdistr_\picktag(\cdot)$ and $\pdistr_\delvtag(\cdot)$.
\ENSURE a bi-partite matching between $\delvset$ and $\pickset$.
\renewcommand{\algorithmicrequire}{\textbf{Require:}}
\REQUIRE an arbitrary \emph{resolution} function
$\text{res}(\numdem) \in \omega(\numdem^{1/\probdim})$, where $\probdim$ is the dimension of the space.
\STATE $\sidecuts \gets \text{res}(\numdem)$.
\STATE $\cellset \gets$ grid partition $\cellset_\sidecuts$, of $\sidecuts^\probdim$ cubes.
\STATE $\transport \gets \overline\transport(\cellset)$,
the solution of Problem~\ref{problem:traffic_pessimistic}.
\STATE Run Algorithm~\ref{alg:randEBMP} on $\left( \pickset,\delvset,\pdistr_\picktag,\pdistr_\delvtag,\cellset,\transport \right)$,
	producing matching $M$.
\RETURN $M$
\end{algorithmic}
\end{algorithm}

\begin{lemma}[Granularity of Algorithm~\ref{alg:randEBMP2}]
\label{lemma:diffB}
Let $\sidecuts$ be the resolution parameter, and $\cellset_\sidecuts$ the resulting grid-based partition, used by Algorithm~\ref{alg:randEBMP2}.
Let $Y$ be a random variable with probability density $\pdistr_\delvtag$,
and let $X'$ be the \emph{shadow site} of $y = Y$ generated by lines~\ref{line:shadowsample_start}-\ref{line:shadowsample_stop} of Algorithm~\ref{alg:randEBMP},
running under Algorithm~\ref{alg:randEBMP2}.
Then
$\expect \vecnorm{X'-Y} - W(\pdistr_\delvtag,\pdistr_\picktag)
  \leq 2\sidelength \sqrt{\probdim} / \sidecuts$.
\end{lemma}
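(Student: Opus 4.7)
The plan is to combine the upper bound of Lemma~\ref{lemma:sampletravel} with the geometric observation that each cell in the grid partition $\cellset_\sidecuts$ has Euclidean diameter at most $\sidelength \sqrt{\probdim}/\sidecuts$, so that the ``max'' and ``min'' costs associated with a pair of cells differ by at most $2\sidelength\sqrt{\probdim}/\sidecuts$.

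First, I would apply Lemma~\ref{lemma:sampletravel} with inputs $\cellset_\sidecuts$ and $\transport = \overline\transport(\cellset_\sidecuts)$ to write
\[
	\expect \vecnorm{X'-Y}
	\leq \sum_{\ivar\jvar} \overline\alpha_{\ivar\jvar}
		\max_{\yvar \in \workcell^\ivar, \xvar \in \workcell^\jvar} \vecnorm{\xvar-\yvar}.
\]
Next, I would observe that Problems~\ref{problem:traffic_optimistic} and~\ref{problem:traffic_pessimistic} share the \emph{same} feasible set $\transportset(\cellset_\sidecuts)$, so the optimistic matrix $\underline\transport(\cellset_\sidecuts)$ is feasible for Problem~\ref{problem:traffic_pessimistic}. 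By optimality of $\overline\transport$ for the pessimistic problem,
\[
	\sum_{\ivar\jvar} \overline\alpha_{\ivar\jvar}
		\max_{\yvar \in \workcell^\ivar, \xvar \in \workcell^\jvar} \vecnorm{\xvar-\yvar}
	\leq \sum_{\ivar\jvar} \underline\alpha_{\ivar\jvar}
		\max_{\yvar \in \workcell^\ivar, \xvar \in \workcell^\jvar} \vecnorm{\xvar-\yvar}.
\]

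Then, the key geometric step: for any $\yvar,\yvar' \in \workcell^\ivar$ and $\xvar,\xvar' \in \workcell^\jvar$, the triangle inequality and the fact that each cell is a cube of side $\sidelength/\sidecuts$ (hence diameter $\sidelength\sqrt{\probdim}/\sidecuts$) give
\[
	\vecnorm{\xvar-\yvar}
	\leq \vecnorm{\xvar-\xvar'} + \vecnorm{\xvar'-\yvar'} + \vecnorm{\yvar'-\yvar}
	\leq \vecnorm{\xvar'-\yvar'} + 2\sidelength\sqrt{\probdim}/\sidecuts,
\]
so that $\max_{\yvar\in\workcell^\ivar, \xvar\in\workcell^\jvar}\vecnorm{\xvar-\yvar} \leq \min_{\yvar'\in\workcell^\ivar, \xvar'\in\workcell^\jvar}\vecnorm{\xvar'-\yvar'} + 2\sidelength\sqrt{\probdim}/\sidecuts$. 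Weighting by $\underline\alpha_{\ivar\jvar}$, summing, and using $\sum_{\ivar\jvar} \underline\alpha_{\ivar\jvar} = 1$ (which follows from the marginal constraints and the fact that $\pdistr_\picktag,\pdistr_\delvtag$ are probability densities), I obtain
\[
	\expect\vecnorm{X'-Y} \leq \underline\avglink(\cellset_\sidecuts) + 2\sidelength\sqrt{\probdim}/\sidecuts.
\]

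Finally, I would close the argument by noting that $\underline\avglink(\cellset_\sidecuts) \leq W(\pdistr_\delvtag,\pdistr_\picktag)$. This follows because any coupling $\gamma \in \Gamma(\pdistr_\delvtag,\pdistr_\picktag)$ induces a feasible point $\alpha_{\ivar\jvar} := \gamma(\workcell^\ivar \times \workcell^\jvar)$ for Problem~\ref{problem:traffic_optimistic} whose objective value is bounded above by $\int_{\env\times\env}\vecnorm{\xvar-\yvar}\,d\gamma(\xvar,\yvar)$; infimizing over $\gamma$ yields $\underline\avglink(\cellset_\sidecuts) \leq W(\pdistr_\delvtag,\pdistr_\picktag)$. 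This is essentially the same observation used in the proof of Lemma~\ref{lemma:avgmatch_lowerbound}.

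The only subtle point is the substitution of the pessimistic optimizer by the optimistic one in the second step, which relies on the shared feasible set of the two problems; the rest is a clean diameter estimate and the discrete-to-continuous comparison between Problem~\ref{problem:traffic_optimistic} and the Wasserstein distance.
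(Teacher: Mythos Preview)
Your proof is correct and follows essentially the same approach as the paper's: bound $\expect\vecnorm{X'-Y}$ by the pessimistic cost (Lemma~\ref{lemma:sampletravel}), swap $\overline\transport$ for $\underline\transport$ using optimality in Problem~\ref{problem:traffic_pessimistic}, use the cell-diameter estimate $\max-\min\le 2L\sqrt d/r$, and finish with $\underline\avglink(\cellset_\sidecuts)\le W(\pdistr_\delvtag,\pdistr_\picktag)$ from the proof of Lemma~\ref{lemma:avgmatch_lowerbound}. Your write-up simply makes the triangle-inequality justification for the diameter bound explicit, but the logical skeleton is identical.
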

\begin{proof}
See Appendix~\ref{appendix:lemmas} for the proof.
\end{proof}

We are now in a position to present an upper bound on the cost of the optimal EBMP matching that holds in the general case when $\den_{\text{P}}\neq \den_{\text{D}}$.

\begin{lemma}[Upper bound on the cost of EBMP]
\label{lemma:ebmp_upper}
Let $\pickset_\numdem,\delvset_\numdem \sim \ESCPinst$,
and let $Q_\numdem = (\pickset_\numdem,\delvset_\numdem)$.
For $d\geq 3$,
\begin{equation} \label{eq:BPMdistinct}
\begin{aligned}
  & \limsup_{\numdem\to+\infty} 
      \frac{
	\tourlen_\text{M}(Q_\numdem) - \numdem W(\pdistr_\delvtag,\pdistr_\picktag)
      }{ \numdem^{1-1/\probdim} }
  \leq \kappa(\pdistr_\picktag,\pdistr_\delvtag),
\end{aligned}
\end{equation}
almost surely,
where
\begin{equation} \label{eq:residualconst}
\kappa(\pdistr_\picktag,\pdistr_\delvtag) :=
  \min_{ \phi \in \{ \pdistr_\picktag, \pdistr_\delvtag \} } \left\{
    \beta_{\text{M},\probdim} \int_\env \phi(\locvar)^{1-1/\probdim} \ d\locvar
  \right\}.
\end{equation}
For $d=2$,
\begin{equation} \label{eq:BPMdistinctd2}
  \frac{ \tourlen_\text{M}(Q_\numdem)
    - \numdem W(\pdistr_\delvtag,\pdistr_\picktag) }{\sqrt{n\, \log n}}
  \leq
  \gamma,
\end{equation}
with high probability as $n \to +\infty$,
for some positive constant $\gamma$.
\end{lemma}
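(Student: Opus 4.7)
The plan is to upper-bound $\tourlen_\text{M}(Q_\numdem)$ by the cost of a specific feasible matching---namely, the matching $M$ produced by Algorithm~\ref{alg:randEBMP2}---and then decompose that cost into a shadow-travel piece and a shadow-matching piece via the triangle inequality. Concretely, for every pair $(y_\ivar, x_{\sigma(\ivar)}) \in M$ the construction at lines~11--14 of Algorithm~\ref{alg:randEBMP} routes through the shadow site $X'_\ivar$, so
\[
  \tourlen_\text{M}(Q_\numdem) \ \leq \ \sum_{\ivar=1}^\numdem \vecnorm{Y_\ivar - X'_\ivar} \ + \ \tourlen_\text{M}\bigl((\pickset'_\numdem,\pickset_\numdem)\bigr).
\]

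For the second term, I would invoke Lemma~\ref{lemma:shadows}: the shadow sites $X'_1,\ldots,X'_\numdem$ are i.i.d. with density $\pdistr_\picktag$ and jointly independent of the actual pickups $\pickset_\numdem$. Hence $(\pickset'_\numdem,\pickset_\numdem)$ is a random EBMP instance with common density $\pdistr_\picktag$ on both sides, and equation~\eqref{eq:EBMP_common} directly yields
\[
  \lim_{\numdem\to\infty} \frac{\tourlen_\text{M}\bigl((\pickset'_\numdem,\pickset_\numdem)\bigr)}{\numdem^{1-1/\probdim}} \ = \ \beta_{\mathrm{M},\probdim} \int_\env \pdistr_\picktag(x)^{1-1/\probdim}\,dx, \quad \text{almost surely.}
\]

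The main obstacle is the first (shadow-travel) term, because although $\{\vecnorm{Y_\ivar-X'_\ivar}\}_{\ivar=1}^\numdem$ are i.i.d.\ and bounded by $\sidelength\sqrt{\probdim}$ for each fixed $\numdem$, their common distribution varies with $\numdem$ through the resolution $r_\numdem = \text{res}(\numdem)$, so the Strong Law of Absolute Differences of Section~\ref{subsec:absolute_diff} does not apply verbatim. My plan for $\probdim \geq 3$ is to select the resolution $\text{res}(\numdem) = \numdem^{1/\probdim}\log\numdem$, which lies in $\omega(\numdem^{1/\probdim})$ as required by Algorithm~\ref{alg:randEBMP2}. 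Lemma~\ref{lemma:diffB} then yields $\numdem\,\expect\vecnorm{Y-X'} \leq \numdem W(\pdistr_\delvtag,\pdistr_\picktag) + 2\numdem\sidelength\sqrt{\probdim}/r_\numdem$, and the second summand is $o(\numdem^{1-1/\probdim})$ by this choice of $r_\numdem$. For the fluctuation about the mean, I would apply Hoeffding's inequality with deviation $\eps_\numdem = \numdem^{-\beta}$ for some $\beta \in (1/\probdim, 1/2)$---an interval that is nonempty precisely because $\probdim \geq 3$---so that the tail bound $2\exp(-c\numdem^{1-2\beta})$ is summable while $\numdem\eps_\numdem = \numdem^{1-\beta} = o(\numdem^{1-1/\probdim})$. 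A Borel--Cantelli argument then delivers
\[
  \sum_{\ivar=1}^\numdem \vecnorm{Y_\ivar - X'_\ivar} \ \leq \ \numdem W(\pdistr_\delvtag,\pdistr_\picktag) + o(\numdem^{1-1/\probdim}), \quad \text{almost surely.}
\]

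Substituting the two controls into the triangle-inequality bound yields the claimed asymptotic with $\beta_{\mathrm{M},\probdim}\int \pdistr_\picktag^{1-1/\probdim}\,dx$ in place of $\kappa(\pdistr_\picktag,\pdistr_\delvtag)$. To recover the minimum in~\eqref{eq:residualconst}, I would re-run the same argument with the roles of $\pickset$ and $\delvset$ swapped: generate shadows from each actual pickup with density $\pdistr_\delvtag$ and match against the actual deliveries; the Wasserstein term is unchanged by symmetry of $W(\cdot,\cdot)$, while the residual constant becomes $\beta_{\mathrm{M},\probdim}\int \pdistr_\delvtag^{1-1/\probdim}\,dx$. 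Taking the smaller of the two upper bounds delivers $\kappa(\pdistr_\picktag,\pdistr_\delvtag)$. For the planar case $\probdim = 2$ the only substantive change is to replace~\eqref{eq:EBMP_common} by the weaker estimate~\eqref{eq:matching2} for the shadow-matching term and to pick the resolution $r_\numdem = \numdem^{1/2}\log\numdem$; Hoeffding then controls the shadow-travel fluctuation by $O(\sqrt{\numdem\log\numdem})$ with high probability, and the constant $\gamma$ in~\eqref{eq:BPMdistinctd2} absorbs both contributions.
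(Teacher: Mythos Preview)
Your argument follows the same route as the paper: the triangle-inequality decomposition through the shadow sites of Algorithm~\ref{alg:randEBMP2}, Lemma~\ref{lemma:shadows} together with equation~\eqref{eq:EBMP_common} for the shadow-matching term, Lemma~\ref{lemma:diffB} for the mean of the shadow-travel term, and the role-reversed variant to obtain the minimum in~\eqref{eq:residualconst}. The one place you diverge is the fluctuation of $\sum_i \|Y_i - X'_i\|$ about its mean: the paper simply invokes the absolute-differences law of Section~\ref{subsec:absolute_diff}, whereas you correctly flag that the common law of the summands changes with $n$ through the resolution $r_n$, so the array is triangular rather than a fixed i.i.d.\ sequence, and you patch this with a Hoeffding tail bound plus Borel--Cantelli. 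Your fix is sound---the choice $\beta \in (1/d,1/2)$ is precisely what makes the tail summable while keeping $n\eps_n = o(n^{1-1/d})$---and it is arguably more careful than the paper's direct appeal to Section~\ref{subsec:absolute_diff} at this step.
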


\begin{proof}
See Appendix~\ref{appendix:lemmas} for the proof.
\end{proof}

We can leverage this result to derive the main result of this section, which is an asymptotic upper bound for the optimal cost of the ESCP. In addition to having the same linear scaling as our lower bound, the bound also includes ``next-order'' terms.
\begin{theorem}[Upper bound on the cost of ESCP]
\label{thm:scp_upper}
Let $\pickset_\numdem,\delvset_\numdem \sim \ESCPinst$ be a random instance of the ESCP, for compact $\env \in \reals^\probdim$, where $d\geq 2$.
Let $\tourlen^*(n)$ be the length of the optimal stacker crane tour through $\pickset_\numdem \cup \delvset_\numdem$.
Then, for $d\geq 3$, 
\begin{equation} \label{eq:spliceUB}
\begin{aligned}
  & \limsup_{\numdem\to+\infty} 
      \frac{
	\tourlen^*(n)
	- \numdem \Bigl[ \expect_\spatialjoint\vecnorm{Y-X} + W(\pdistr_\delvtag,\pdistr_\picktag) \Bigr]
      }{ \numdem^{1-1/\probdim} }
  \leq \kappa(\pdistr_\picktag,\pdistr_\delvtag),
\end{aligned}
\end{equation}
almost surely.
For $d=2$,
\begin{equation} \label{eq:stacker_ub_d2}
  \frac{
\tourlen^*(n)
    - \numdem \Bigl[ \expect_\spatialjoint\vecnorm{Y-X} + W(\pdistr_\delvtag,\pdistr_\picktag) \Bigr]
  }{ \sqrt{n\, \log n} }
  \leq
  \gamma,
\end{equation}
with high probability as $n \to +\infty$, for some positive constant $\gamma$.
\end{theorem}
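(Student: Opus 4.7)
The plan is to upper-bound $\tourlen^*(n)$ by the cost of a concrete feasible stacker crane tour---namely the one produced by \ref{alg:splicing}---and then decompose that cost into three pieces matching the three terms on the right-hand side of~\eqref{eq:spliceUB}. First, I would observe that the \ref{alg:splicing} tour consists of the $n$ pickup-to-delivery links, the optimal bipartite matching links computed in line~\ref{line:perm}, and at most $\subn_n$ additional connecting links each of length at most $\mathrm{diam}(\env)$; each loop iteration swaps a matching link for a connecting link, and the savings from removed matching links can be discarded for the purpose of an upper bound. This yields
\[
\tourlen^*(n) \;\leq\; \tourlen_\text{SPLICE}(n) \;\leq\; \sum_{i=1}^n \|Y_i - X_i\| + \M(Q_n) + \mathrm{diam}(\env)\cdot \subn_n.
\]

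Second, I would control each term on the right. The Strong Law of Absolute Differences (Section~\ref{subsec:absolute_diff}), applied with any $\alpha \in (1/2,\, 1-1/d]$, yields $\sum_{i=1}^n \|Y_i - X_i\| - n\,\expect_\spatialjoint\|Y-X\| = o(n^{1-1/d})$ almost surely for $d \geq 3$. The matching term is handled directly by Lemma~\ref{lemma:ebmp_upper}, which gives $\M(Q_n) - n\,W(\pdistr_\delvtag,\pdistr_\picktag) \leq \kappa(\pdistr_\picktag,\pdistr_\delvtag)\, n^{1-1/d} + o(n^{1-1/d})$ almost surely. Finally, Remark~\ref{rem:van} gives $\subn_n = o(n^{1-1/d})$ almost surely, so the connecting-link contribution is sub-dominant. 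Assembling the pieces, subtracting $n[\expect_\spatialjoint\|Y-X\| + W(\pdistr_\delvtag,\pdistr_\picktag)]$ from both sides, dividing by $n^{1-1/d}$, and taking $\limsup$ gives~\eqref{eq:spliceUB}.

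For the planar case $d=2$, the same three-term decomposition works, but one invokes~\eqref{eq:BPMdistinctd2} in place of~\eqref{eq:BPMdistinct}. Here I would note that both $\sum_i \|Y_i - X_i\| - n\,\expect_\spatialjoint\|Y-X\|$ and $\mathrm{diam}(\env)\cdot \subn_n$ are $o(\sqrt{n\log n})$ with high probability: the former by the SLAD (Section~\ref{subsec:absolute_diff}), and the latter because $\expect \subn_n = \log n + O(1)$ together with Markov's inequality forces $\subn_n$ to be negligible relative to $\sqrt{n\log n}$. Dividing by $\sqrt{n\log n}$ then yields~\eqref{eq:stacker_ub_d2}.

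The main obstacle, aside from bookkeeping, is cleanly establishing the upper bound on $\tourlen_\text{SPLICE}(n)$---specifically, verifying that the loop in lines~\ref{line:for}--\ref{line:efor} together with line~\ref{line:clo} introduces no more than $\subn_n$ new delivery-to-pickup links of length $\leq \mathrm{diam}(\env)$, and that the remaining delivery-to-pickup links still form a subset of the original optimal EBMP matching (so their total length remains bounded above by $\M(Q_n)$). Once this accounting is in place, all three terms are absorbed into previously established results and the theorem follows without further work.
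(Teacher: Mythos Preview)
Your proposal is correct and follows essentially the same route as the paper's proof: bound $\tourlen^*(n)$ by $\tourlen_\text{SPLICE}(n)$, decompose the latter into the pickup-to-delivery sum, the optimal matching cost, and the connecting-link contribution, and then handle these three pieces via the absolute-differences law, Lemma~\ref{lemma:ebmp_upper}, and Remark~\ref{rem:van} respectively. The only quibble is that for $d=2$ the SLAD as stated (with exponent $\alpha>1/2$) does not directly yield $o(\sqrt{n\log n})$; a Chebyshev bound does, and since the paper itself omits the $d=2$ details this is a negligible point.
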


\begin{proof}
We first consider the case $d\geq 3$. 
Let $\tourlen_\text{SPLICE}(n)$ be the length of the SCP tour through $\pickset_\numdem,\delvset_\numdem$ generated by \ref{alg:splicing}. Let $Q_\numdem = (\pickset_\numdem,\delvset_\numdem)$.
One can write
\begin{align*}
  \tourlen_\text{SPLICE}(n) 
  &\leq
  \sum_{i=1}^n \vecnorm{\Yvar_i-\Xvar_i}
    \! + \! \M(Q_\numdem)
    + \max_{x,y\in \env}\|x-y\|\, \subn_\batchsize	\\
  &=
  \left( \sum_{i=1}^n \vecnorm{\Yvar_i-\Xvar_i} - \batchsize\expect_\spatialjoint\vecnorm{\Yvar-\Xvar} \right)	
      + \Bigl( \M(Q_\numdem) - \numdem W(\den_\delvtag,\den_\picktag) \Bigr)	\\
      & \qquad\qquad {} + \batchsize \Bigl[ \expect_\spatialjoint\vecnorm{\Yvar-\Xvar} + W(\den_\delvtag,\den_\picktag) \Bigr]
      + \max_{x,y\in \env}\vecnorm{x-y}\, \subn_\batchsize.
\end{align*}
The following results hold almost surely: The first term of the last expression is $o(\batchsize^{1-1/\probdim})$ (absolute differences); by Lemma~\ref{lemma:ebmp_upper}, the second term is $\kappa(\pdistr_\picktag,\pdistr_\delvtag) \batchsize^{1-1/\probdim} + o(\batchsize^{1-1/\probdim})$; finally, by Remark~\ref{rem:van}, one has $\lim_{n \to +\infty} \subn_\batchsize/n^{1-1/d}=0$.
Collecting these results, dividing on both sides by $n^{1-1/d}$, and noting that by definition $\tourlen^*(n)\leq  \tourlen_\text{SPLICE}(n) $, one obtains the claim. The proof for the case $d=2$ is almost identical and is omitted.

\end{proof}

\section{Stability Condition for DRT Systems} \label{sec:dynamicversion}

In the previous section we presented new asymptotic results for the length of the stochastic EBMP and ESCP.
We showed convergence to linearity in the size $\batchsize$ of the instance, and characterized \emph{next-order} growth as well (equation~\eqref{eq:spliceUB} and equation \eqref{eq:stacker_ub_d2}).
Here we use such new results to derive a necessary and sufficient condition for the stability of DRT systems, modeled as DPDPs.

Let us define the \emph{load factor} as
\begin{equation}\label{eq:rho}
\utilization := \lambda \, [ \expect_\spatialjoint\vecnorm{Y-X} + W(\pdistr_\delvtag,\pdistr_\picktag)]/m.
\end{equation}
Note that when $\pdistr_\delvtag = \pdistr_\picktag$, one has $W(\pdistr_\delvtag,\pdistr_\picktag) = 0$ (by Proposition~\ref{prop:linkpositive}), and the above definition reduces to the definition of load factor given in \cite{Treleaven.Pavone.ea:10} (valid for $d\geq 3$ and $\pdistr_\delvtag = \pdistr_\picktag$).
The following theorem gives a necessary and sufficient condition for a stabilizing routing policy to exist.
\begin{theorem}[Stability condition for DRT systems]\label{thrm:main}
Consider the DPDP defined in Section~\ref{sec:problem}, which serves as a model of DRT systems. Then, the condition $\utilization < 1$
is necessary and sufficient for the existence of stabilizing policies.
\end{theorem}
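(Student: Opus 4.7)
The plan is to prove the two directions separately, with necessity coming from a work-conservation argument leveraging the EBMP lower bound (Lemma~\ref{lemma:avgmatch_lowerbound}), and sufficiency coming from the explicit construction of a batching policy that uses the \ref{alg:splicing} algorithm as a subroutine, whose cost is controlled by Theorem~\ref{thm:scp_upper}.

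For \emph{necessity}, I would argue as follows. Consider any causal routing policy, and let $N(T)$ be the number of demands completed by time $T$. Each completed demand contributes a pickup-to-delivery leg, and, between consecutive service instances by the same vehicle, a delivery-to-next-pickup leg. Aggregating over all $m$ vehicles, the latter legs induce a bipartite matching between a subset of delivery sites and a subset of pickup sites. By the Strong Law of Large Numbers the average pickup-to-delivery leg converges a.s.\ to $\expect_{\spatialjoint}\vecnorm{Y-X}$, and by Lemma~\ref{lemma:avgmatch_lowerbound} the average delivery-to-pickup leg is asymptotically bounded below a.s.\ by $W(\pdistr_\delvtag,\pdistr_\picktag)$. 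Since the total distance traveled by all $m$ vehicles in time $T$ is at most $mT$, a stabilizing policy (for which $N(T)/T \to \lambda$ a.s.) must satisfy $\lambda \bigl[\expect_\spatialjoint\vecnorm{Y-X} + W(\pdistr_\delvtag,\pdistr_\picktag)\bigr] \leq m$, i.e., $\utilization \leq 1$. A standard refinement ruling out the boundary case $\utilization=1$ (using a diffusion/heavy-traffic-style argument or the strict submartingale property of queue length under persistent excess work) yields the strict inequality.

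For \emph{sufficiency}, I would construct the following batching policy. Fix a batch size $\batchsize$ to be chosen later, and let each vehicle, when idle, collect the next $\batchsize$ outstanding demands (partitioning the arriving stream among the $m$ vehicles in a round-robin fashion), compute a stacker crane tour through its assigned batch using \ref{alg:splicing}, and execute it. By Theorem~\ref{thm:scp_upper}, the tour length per batch satisfies, almost surely as $\batchsize \to +\infty$,
\[
\tourlen_{\text{SPLICE}}(\batchsize) \leq \batchsize\bigl[ \expect_\spatialjoint\vecnorm{Y-X} + W(\pdistr_\delvtag,\pdistr_\picktag)\bigr] + o(\batchsize).
\]
Per vehicle, batches arrive at rate $\arrivalrate/(\numveh \batchsize)$, so the mean interarrival time between batches is $\numveh \batchsize/\arrivalrate$, whereas the mean service time of a batch is $\tourlen_{\text{SPLICE}}(\batchsize)/1$. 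The per-vehicle utilization satisfies
\[
\limsup_{\batchsize \to +\infty} \frac{\tourlen_{\text{SPLICE}}(\batchsize)}{\numveh\batchsize/\arrivalrate} \leq \utilization < 1,
\]
so for $\batchsize$ large enough the utilization is strictly less than one. The per-vehicle queue of batches is therefore a $GI/G/1$-like queue in the stable regime, giving bounded expected number of outstanding batches, and hence bounded expected number of outstanding demands overall.

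The main obstacle I expect is the sufficiency direction, and specifically two technical points within it. First, Theorem~\ref{thm:scp_upper} gives an \emph{almost sure} asymptotic bound for a random instance of fixed size $\batchsize$, but to control expected queue length one needs tail estimates (or at least first-moment control) on the per-batch service time uniformly in $\batchsize$; this requires combining the almost-sure bound with the fact that tour lengths are uniformly bounded by $O(\batchsize\,\mathrm{diam}(\env))$. Second, one must handle the fact that batches accumulate while vehicles are busy, so some form of Loynes-style construction or Foster--Lyapunov argument is needed to formally conclude stability of the multi-vehicle system, rather than relying only on the mean-rate comparison.
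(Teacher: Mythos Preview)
Your necessity argument is essentially the same idea as the paper's, with only cosmetic differences: the paper frames it via Little's theorem applied to the number of demands currently in service, and lower-bounds the cumulative service time directly by the optimal \emph{multi-vehicle} ESCP tour length (invoking Theorem~\ref{thrm:lowb} and Remark~\ref{rem:mETSP}) rather than decomposing into pickup-to-delivery and delivery-to-pickup legs separately. This framing also delivers the strict inequality $\utilization<1$ without a separate boundary argument, since a stable system must have per-vehicle average occupancy strictly below one.

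Your sufficiency construction is a genuinely different route. The paper does \emph{not} fix a batch size and run $m$ independent $GI/G/1$ queues via round-robin. Instead it uses a single \emph{gated} policy: at each epoch all $m$ vehicles become idle simultaneously, one SPLICE tour is computed through \emph{all} currently outstanding demands, and that tour is split into $m$ equal-length fragments. Stability is then proved by a one-dimensional recursion on $\expectation{n_i}$, the expected number outstanding at epoch $i$: one shows $\expectation{n_{i+1}} \leq a(\utilization)\,\expectation{n_i} + \text{const}$ with $a(\utilization)<1$. Both obstacles you flag are handled inside this recursion. For the first (converting the high-probability bound of Theorem~\ref{thm:scp_upper} into an expectation bound), the paper conditions on whether~\eqref{eq:stacker_ub_d2} holds, using the trivial deterministic bound $\tourlen_\text{SPLICE}(n)\leq 2n\,\diam$ on the vanishing-probability complement. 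For the second (demands accumulating during service), the linear recursion \emph{is} the Foster--Lyapunov argument.

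Your fixed-batch approach is viable and arguably cleaner in that it reduces stability to a standard $GI/G/1$ fact, but note that it requires the statement $\expectation{\tourlen_\text{SPLICE}(\batchsize)}/\batchsize \to \expect_\spatialjoint\vecnorm{Y-X} + W(\pdistr_\delvtag,\pdistr_\picktag)$ for a \emph{single fixed} $\batchsize$, which Theorem~\ref{thm:scp_upper} does not give directly; you would need exactly the same conditioning-plus-trivial-bound device (or dominated convergence for $d\geq 3$) that the paper deploys. The paper's gated policy avoids choosing a batch size and makes the variable-$n$ asymptotics of Theorem~\ref{thm:scp_upper} fit more naturally, at the cost of a slightly more involved recursion.
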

\smallskip

\begin{proof}[Proof of Theorem~\ref{thrm:main} --- Part I: Necessity]

Consider any causal, stable routing policy (since the policy is stable and the arrival process is Poisson, the system has renewals and the inter-renewal intervals are finite with probability one). Let  $A(t)$ be the number of demand arrivals from time $0$ (when the first arrival occurs) to time $t$. Let $R(t)$ be the number of demands in the process of receiving service at time $t$ (a demand is in the process of  receiving service if a vehicle is traveling toward its pickup location or a vehicle is transporting such demand to its delivery location). Finally, let $S_i$ be the servicing time of the $i$th demand  (this is the time spent by a vehicle to travel to the demand's pickup location and to transport such demand to its delivery location).

The time average number of  demands in the process of receiving service is given by
\[
\bar \utilization: = \lim_{t\to +\infty} \, \frac{1}{t} \int_{\tau=0}^{t}\, R(\tau)\, d\tau.
\]
By following the arguments in \cite[page 81-85, Little's Theorem]{Gallager:96}, $\bar \rho$ can be written as:
\[
\bar \utilization = \lim_{t \to +\infty} \, \frac{\sum_{i=1}^{A(t)}\, S_i}{t} = \lim_{t \to +\infty} \,  \frac{\sum_{i=1}^{A(t)}\, S_i}{A(t)} \, \lim_{t \to +\infty}\,  \frac{A(t)}{t},  
\]
where the first equality holds almost surely and all limits exist almost surely. The second limit on the right is, by definition, the arrival rate $\lambda$. The first limit on the right can be lower bounded as follows:
\[
\lim_{t \to +\infty} \,  \frac{\sum_{i=1}^{A(t)}\, S_i}{A(t)} \geq \lim_{t \to +\infty} \,  \frac{\tourlen^*(A(t))}{A(t)},
\]
where $\tourlen^*(A(t))$ is the optimal length of the \emph{multi-vehicle} stacker crane tour through the $A(t)$ demands (i.e., is the optimal solution to the multi-vehicle ESCP - see Remark \ref{rem:mETSP}-). For any sample function, $\tourlen^*(A(t))/A(t)$ runs through the same sequence of values with increasing $t$ as $\tourlen^*(n)/n$ runs through with increasing $n$. Hence, by Theorem \ref{thrm:lowb} and Remark \ref{rem:mETSP} we can write, almost surely,
\[
 \lim_{t \to +\infty} \,  \frac{\tourlen^*(A(t))}{A(t)} \geq \expect_\spatialjoint\vecnorm{Y-X} + W(\pdistr_\delvtag,\pdistr_\picktag).
\]

Collecting the above results, we obtain:
\[
\bar \varrho \geq \lambda \, [ \expect_\spatialjoint\vecnorm{Y-X} + W(\pdistr_\delvtag,\pdistr_\picktag)],
\]
almost surely. Since the policy is stable and the arrivals are Poisson, the \emph{per-vehicle} time average number of demands in the process of receiving service must be strictly less than one, i.e.,  $\bar \utilization/m <1$; this implies that for any causal, stable routing policy
\[
\lambda \, [ \expect_\spatialjoint\vecnorm{Y-X} + W(\pdistr_\delvtag,\pdistr_\picktag)]/m <1,
\] 
and necessity is proven.

\end{proof}
\begin{proof}[Proof of Theorem~\ref{thrm:main}---Part II: Sufficiency ]
The proof of sufficiency is constructive in the sense that we design a particular policy that is stabilizing.
In particular, a \emph{gated} policy is stabilizing which performs the following steps any time all servers are idle:
(1) applies algorithm \ref{alg:splicing} to determine tours through the \emph{outstanding} demands, (2) splits the tour into $m$ equal length fragments, and (3) assigns a fragment to each vehicle.

Consider, first, the case $d=2$. Similarly as in the proof of Theorem 4.2 in \cite{Pavone.Frazzoli.ea.TAC11}, we derive a recursive relation bounding the expected number of demands in the system at the times when new tours are computed.
Specifically, let $t_i$, $i\geq 0$, be the time instant at which the $i$th SPLICE tour is constructed
(i.e. the previous servicing round was completed); we will call this instant \emph{epoch} $i$.
We refer to the time interval between epoch $i$ and epoch $i+1$ as the $i$th iteration;
let $n_i$ be the number of demands serviced during the $i$th iteration (i.e. all outstanding demands at its epoch),
and let $C_i$ be the interval duration.

\newcommand{\spliceiter}[1][i]{{ \tourlen_{\text{SPLICE}}(n_{#1}) }}

Demands arrive according to a Poisson process with rate $\lambda$, so we have
$\expectation{n_{i+1}} = \lambda \, \expectation{C_i}$ for all epochs.
The interval duration $C_i$ is equal to the time required to service the demands of the $i$th group of demands.
One can easily bound $\expectation{C_i}$ as
\begin{equation}\label{eq:epochduration}
\expectation{C_i} \leq
    \expectation{ \spliceiter } / m
    + \diam
  + \diam,
\end{equation}
where $\spliceiter$ denotes the length of the SPLICE tour through the $i$th iteration demands,
and $\diam \doteq \max \{ \|p-q\| \, |\, p,q \in \env\}$ is the diameter of $\env$;
the constant terms account conservatively for (i) extra fragment length incurred by splitting the tour between vehicles, and (ii) extra travel required to reach the current tour fragment from the endpoint of the previous fragment. Let  
\[
\psi(n):= n\,  \Bigl[ \expect_\spatialjoint\vecnorm{Y-X} + W(\pdistr_\delvtag,\pdistr_\picktag) \Bigr] + \gamma \sqrt{\numdem \log\numdem} + o\left( \sqrt{\numdem \log\numdem} \right),
\]
and let $q(n)$ be the probability that given $n$ demands equation~\eqref{eq:stacker_ub_d2}  does not hold, i.e.:
\[
\probNotHold(n) := 1 -  \mathbb{P} \biggl [ \tourlen_{\text{SPLICE}}(n) \leq  \psi(n)  \biggr].
\]

Now, let $\eps$ be an arbitrarily small positive constant. By Theorem~\ref{thm:scp_upper}, $\lim_{n\to +\infty}\, q(n) = 0$; hence, there exists a number $\bar n$ such that for all $n \geq \bar n$ one has $q(n)<\eps$.
\ifSHORTversion
Using the trivial upper bound $\tourlen_\text{SPLICE}(n) \leq 2\, n \, \diam  $ when equation~\eqref{eq:stacker_ub_d2} does not hold,
one can derive the following bound for the length of the optimal stacker crane tour through the $n_i$ demands.
\begin{equation*}
\expectation{ \spliceiter }
\leq
\expectation{ \psi(n_i)} + \bar{n}^2 \, \diam  + \eps \, 2 \, \diam \, \expectation{n_i}. 
\end{equation*}
\else
Then, the length of the optimal stacker crane tour through the $n_i$ demands can be upper bounded as follows (where in the third step we use the trivial upper bound $\tourlen_\text{SPLICE}(n) \leq 2\, n \, \diam  $, which is always valid):
\begin{equation*}
\begin{aligned}
\expectation{ \spliceiter }
&= \sum_{n=0}^{+\infty}\,  \condexpectation{ \spliceiter }{ n_i=n }  \, \prob{n_i = n}	\\
&= \sum_{n=0}^{+\infty}\,  \biggl[ \condexpectation{ \spliceiter  }{ \text{\eqref{eq:stacker_ub_d2} holds}, n_i=n }
  \, \underbrace{ \probcond{ \text{\eqref{eq:stacker_ub_d2} holds}}{n_i=n}}_{=  1-q(n) } +	\\
&\qquad \qquad\quad  \condexpectation{ \spliceiter }{ \text{\eqref{eq:stacker_ub_d2} does not hold}, n_i=n }
  \,
  \underbrace{ \probcond{ \text{\eqref{eq:stacker_ub_d2} does not hold}}{n_i=n}}_{= q(n)} \biggr]  \prob{n_i = n}	\\ 
& \leq  \sum_{n=0}^{+\infty}\,  \biggl[ \condexpectation{ \psi(n_i) }{ \text{\eqref{eq:stacker_ub_d2} holds}, n_i=n } \left( 1-q(n) \right) +	\\
&\qquad \qquad \quad \condexpectation{ 2\, n_i \, \diam }{ \text{\eqref{eq:stacker_ub_d2} does not hold}, n_i=n }\, q(n) \biggr]  \prob{n_i = n} \\
&=  \sum_{n=0}^{+\infty}\, \biggl[  \psi(n)  \left( 1-q(n) \right) + (2\, n \, \diam)\, q(n) \biggr]  \prob{n_i = n}  \\
& \leq \expectation{ \psi(n_i)} + \sum_{n=0}^{+\infty}\, ( 2\, n \, \diam  ) \, q(n) \,   \prob{n_i = n} \\
& = \expectation{ \psi(n_i)} + \sum_{n=0}^{\bar n-1}\, ( 2\, n \, \diam  ) \,  q(n) \, \prob{n_i = n}
  + \sum_{n=\bar n}^{+\infty}\, ( 2\, n \, \diam  ) \,  \underbrace{q(n)}_{<\eps} \, \prob{n_i = n} \\
& \leq  \expectation{ \psi(n_i)} + \bar{n}^2 \, \diam  + \eps \, 2 \, \diam \, \expectation{n_i}.
\end{aligned}
\end{equation*}
\fi

\ifSHORTversion
\else
\fi
Then, one can write the following recurrence relation:
\begin{equation*}
\begin{aligned}
\expectation{C_i}
&\leq
\expectation{ \spliceiter  }/m
+ 2\, \diam \\
& \leq \expectation{ \psi(n_i)  }/m + \eps \, 2 \, \diam \, \expectation{ n_i }/m
+ \diam (\bar{n}^2/m+2)\\
& \leq \expectation{n_i} 
    \, \bigl[ (\expect_\spatialjoint\vecnorm{Y-X} + W(\den_\delvtag,\den_\picktag)) + \eps\,2 \, \diam \bigr] / \numveh	\\
      &\qquad\qquad\quad 
	+ \expectation{ \gamma \sqrt{n_i \log n_i } +  o\left( \sqrt{\numdem_i \log\numdem_i} \right) } / \numveh
      + \diam (\bar{n}^2/m+2).
\end{aligned}
\end{equation*}

Now, let $\delta>0$; then\footnote{Consider, first, the case without the $ o(\sqrt{x \, \log x}) $ term. In this case, let $c(\delta) =  \frac{1}{2\delta}\Bigl( \log\frac{1}{\delta^2} - 1\Bigr) $. Then, by using Young's inequality, one can write: $ \frac{1}{2\delta}\Bigl( \log\frac{1}{\delta^2} - 1\Bigr) + \delta \, x - \sqrt{x\log x} \geq  \frac{1}{2\delta}\Bigl( \log\frac{1}{\delta^2} - 1\Bigr) + \delta \, x   - \frac{\delta\, x}{2} - \frac{\log x}{2\, \delta}=:\psi(x) $. Then, one can easily show that $\psi(x)\geq 0$ for all $x\geq 1$. The case with the $ o(\sqrt{x \, \log x}) $ term is similar and is omitted.} for all $x\geq 1$
\[
\sqrt{x \, \log x} + o(\sqrt{x \, \log x}) \leq  c(\delta)+ \delta\, x,
\]
where $c(\delta) \in \reals_{\geq 0}$ is a constant.
Hence, we can write the following recursive equation for trajectories  $i \mapsto \expectation{n_i} \in \reals_{\geq 0}$:
\begin{equation*}
\begin{split}
    \expectation{n_{i+1}}  = \lambda \expectation{C_i}\leq  \expectation{n_i} 
    \, \bigl[ \varrho  +\eps \,  \lambda \, 2 \, \diam / \numveh 
    +\lambda\,  \gamma\, \delta \,/ \numveh \bigr]+ \lambda\, \gamma \, c(\delta)/\numveh
    + \diam (\bar{n}^2/m+2).
\end{split}
\end{equation*}
Since $\varrho<1$ and $\eps$ and $\delta$ are arbitrarily small constants, one can \emph{always} find values for $\eps$ and $\delta$, say $\bar \eps$ and $\bar \delta$, such that 
\begin{equation*}
a(\utilization):= \varrho  +\bar \eps \,  \lambda \, 2 \, \diam / \numveh  +\lambda\,  \gamma\, \bar \delta \,/ \numveh<1.
\end{equation*} 

Hence, for trajectories  $i \mapsto \expectation{n_i} \in \reals_{\geq 0}$ we can write, for all $i\geq0$, a recursive upper bound:
\begin{equation}\label{eq:recDyn}
\expectation{n_{i+1}}  \leq a(\utilization)\,  \expectation{n_i} + \lambda\, \gamma \, c(\delta)/\numveh 
  + \diam (\bar{n}^2/m+2),
\end{equation}
with $a(\utilization)<1$, for \emph{any} $\utilization\in [0, \, 1)$.

We want to prove that trajectories $i \mapsto \expectation{n_i}$ are bounded, by studying the recursive upper bound in equation \eqref{eq:recDyn}. To this purpose we define an auxiliary system, System-X, whose trajectories $i \mapsto x_i \in \reals_{\geq 0}$ obey the dynamics:

\begin{equation}
x_{i+1}  =  a(\utilization) \, x_i + \lambda\, \gamma \, c(\delta)/\numveh + 2\, \diam (\bar{n}^2/m+1),
\end{equation}
with $x_0 = n_0$. By construction, trajectories $i \mapsto x_i $ upper bound trajectories $i \mapsto \expectation{n_i}$. One can easily note that  trajectories $i \mapsto x_i $ are indeed bounded for all initial conditions, since the eigenvalue of System-X, $a(\utilization)$, is strictly less than one. Hence,  trajectories $i \mapsto \expectation{n_i}$ are bounded as well and this concludes the proof for case $d=2$.

Case $d\geq3$ is virtually identical to case $d=2$, with the only exception that equation \eqref{eq:stacker_ub_d2}  should be replaced with equation \eqref{eq:spliceUB}, and the sublinear part  is given by $x^{1-1/d}$ (the fact that for $d\geq3$ the inequalities hold almost surely does not affect the reasoning behind the proof, since almost sure convergence implies convergence with high probability).
\end{proof}

Note that the stability condition in Theorem~\ref{thrm:main} depends on the workspace geometry, the stochastic distributions of pickup and delivery points, the demands' arrival rate, and the number of vehicles, and makes explicit the roles of the different parameters in affecting the performance of the overall system. We believe that this characterization would be instrumental for a system designer of DRT systems to build business and strategic planning models regarding, e.g., fleet sizing.

\begin{remark}[Load factor with non-unit velocity] 
In our model of DPDPs we have assumed, for simplicity, that vehicles travel at unit velocity. Indeed, Theorem \ref{thrm:main} holds also in the general case of vehicles with non-unit velocity $v$, with the only modification that the load factor is now given by
\[
\utilization := \frac{\lambda \, [ \expect_\spatialjoint\vecnorm{Y-X} + W(\pdistr_\delvtag,\pdistr_\picktag)]}{v \, m}.
\]
\end{remark}

\section{Simulation Results}\label{sec:sim}

In this section, we present simulation results to support each of the theoretical findings of the paper.
Our simulation experiments examine all three contributions of the paper; in particular, we discuss (i) performance of the \ref{alg:splicing}~algorithm, (ii) scaling of the length of the EBMP,
and (iii) stabilizability of the DPDP.

\subsection{Performance of SPLICE}

We begin the simulation section with a discussion of the performance of the \ref{alg:splicing}~algorithm.
We examine (i) the rate of convergence of \ref{alg:splicing} to the optimal solution,
(ii) the runtime for the \ref{alg:splicing} algorithm,
and (iii) the ratio between the runtime of \ref{alg:splicing} and that of an exact algorithm.
In all simulations we assume that the pickup/delivery pairs are generated i.i.d. in a unit cube and that $\pdistr_\picktag$ and $\pdistr_\delvtag$ are both uniform.
The Bipartite Matching Problem in line \ref{line:perm} of \ref{alg:splicing} is solved using the GNU Linear Programming Toolkit (GLPK) software on a linear program written in MathProg/AMPL;
for comparison with \ref{alg:splicing}, the Stacker Crane Problem is solved exactly using the GLPK software on an integer linear program.
(Simulations were run on a laptop computer with a 2.66 GHz dual core processor and 2 GB of RAM.)

Figure~\ref{fig:splicecostfactor} shows the ratios $\tourlen_\text{SPLICE} / \tourlen^*$ observed in a variety of randomly generated samples (twenty-five trials in each size category).
One can see that the ratio is consistently below $20\%$ even for small problem instances ($n\simeq 10$) and is reduced to  $\simeq 5\%$ for $n>80$.
Hence, convergence to the optimal solutions with respect to the problem size is fairly rapid.
In practice, one could combine \ref{alg:splicing} with an exact algorithm, and let the exact algorithm compute the solution if $n$ is less than, say, $50$, and let \ref{alg:splicing} compute the solution when $n\geq 50$.
\begin{figure}
\centering
\subfigure[Cost factor for SPLICE as a function of the problem size $n$.
Each column records $25$ random observations.]{
\includegraphics[width=.45\linewidth]{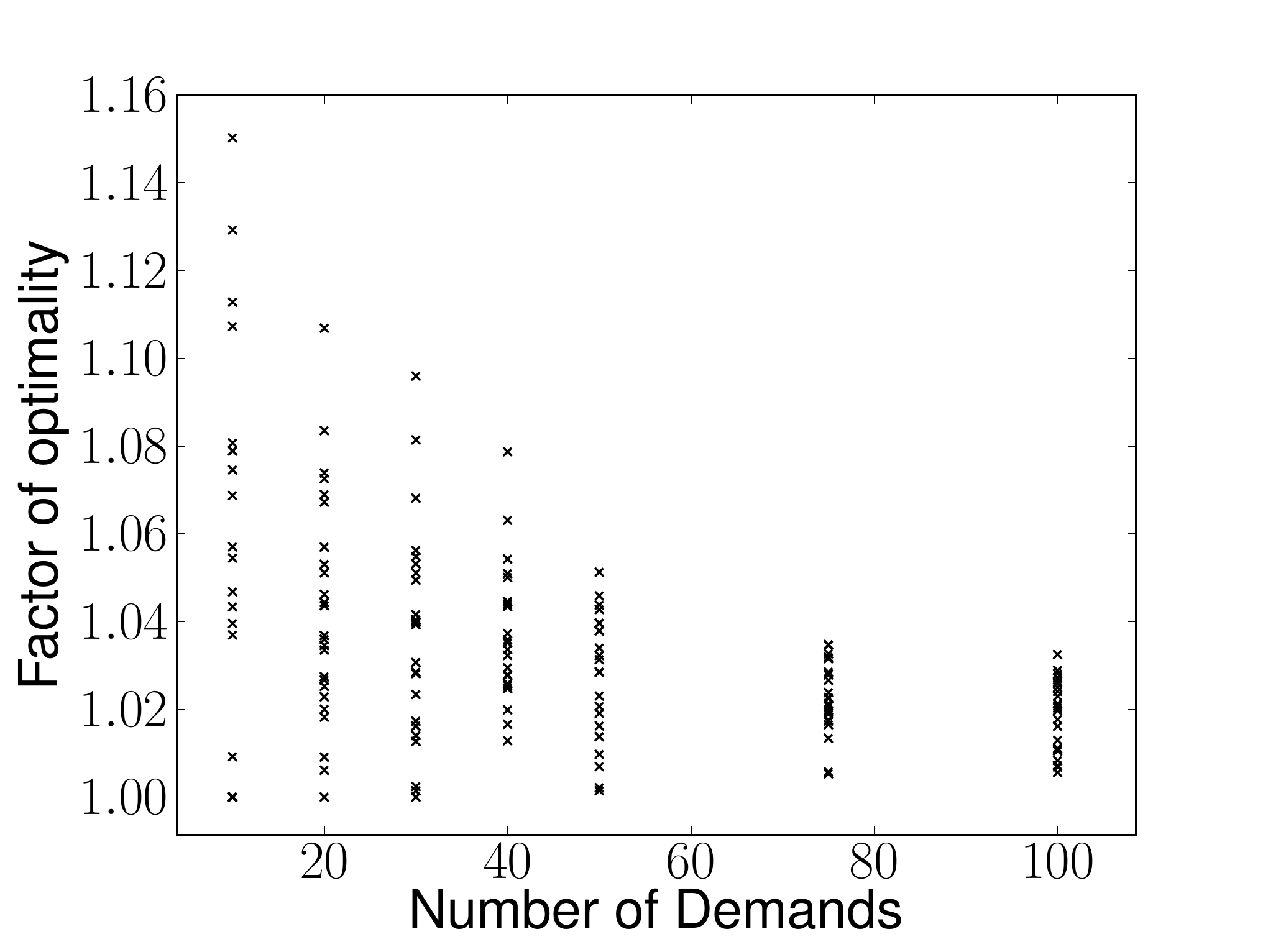}
\label{fig:splicecostfactor}
}
\subfigure[Runtime factor for SPLICE as a function of the problem size $n$.
Each column records $25$ random observations.]{
\includegraphics[width=.45\linewidth]{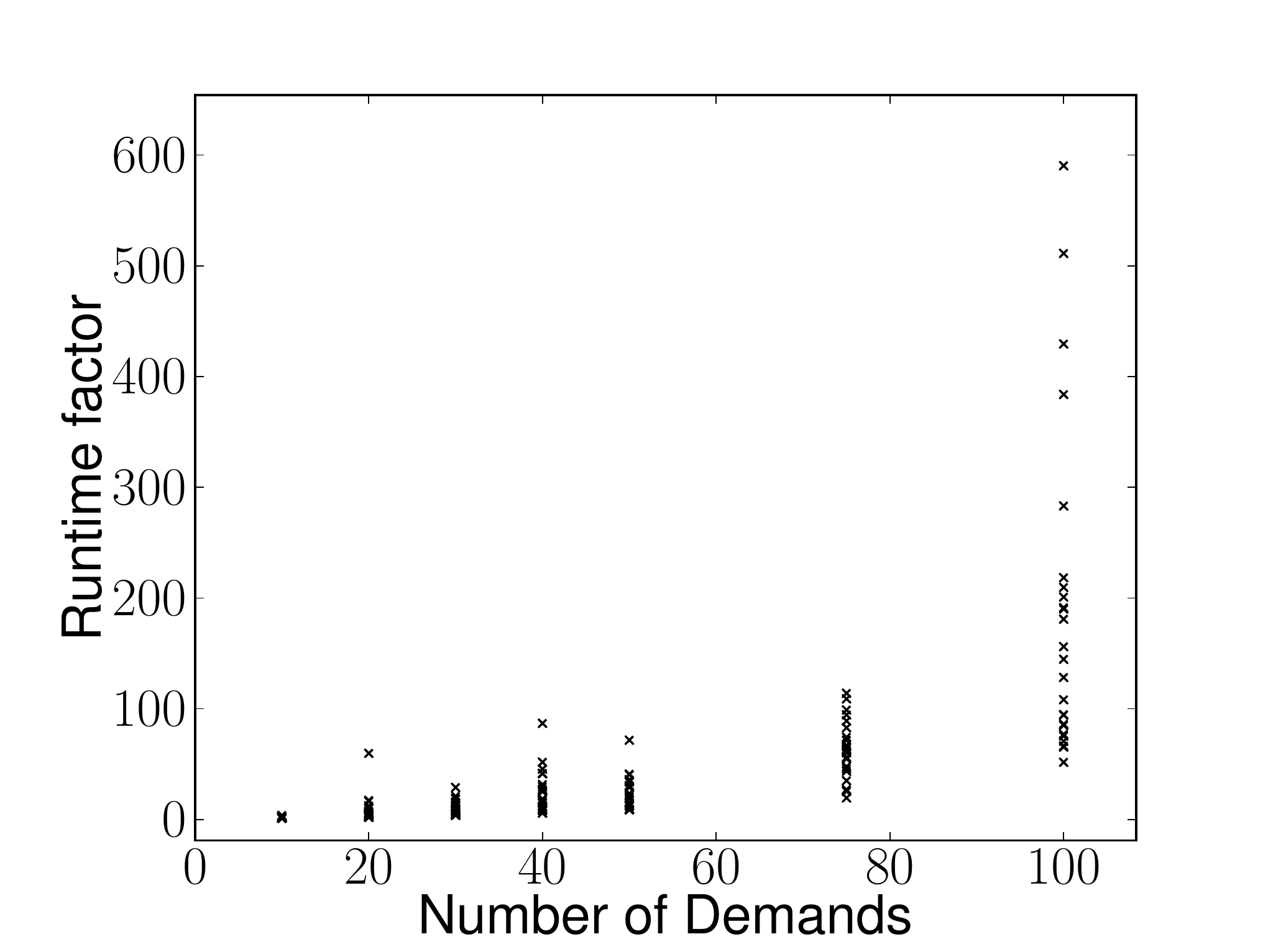}
\label{fig:splicerunfactor}
}
\subfigure[Runtime of the SPLICE algorithm as a function of the problem size $n$.
Each column records $25$ random observations.]{
\includegraphics[width=.45\linewidth]{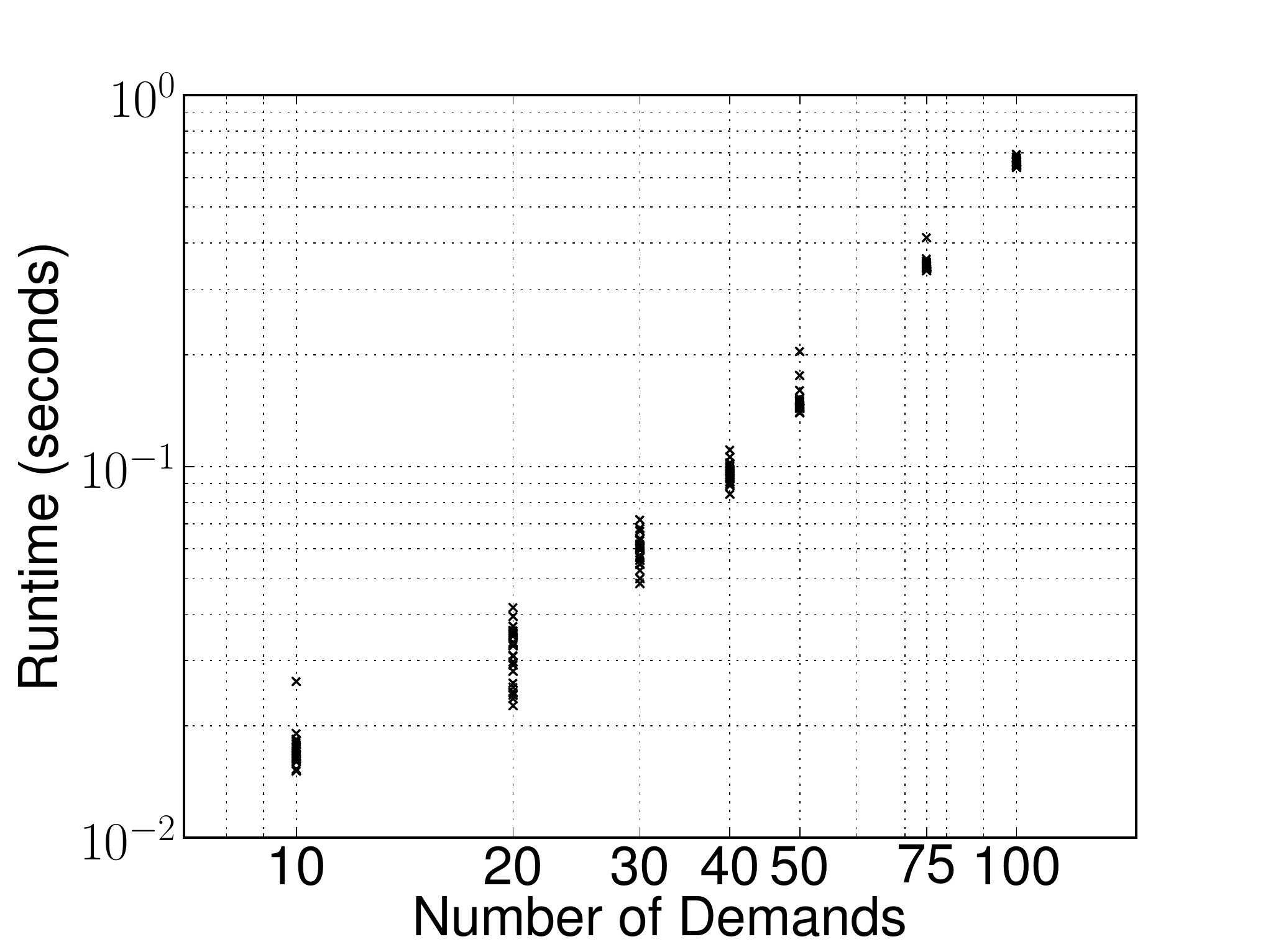}
\label{fig:spliceruntime}
}
\caption{Performance of the SPLICE algorithm over twenty-five trials in each size category.
Figure~\ref{fig:splicecostfactor} shows observed ratios $\tourlen_\text{SPLICE} / \tourlen^*$;
Figure~\ref{fig:splicerunfactor} shows the ratios $T_\text{SPLICE} / T^*$;
Figure~\ref{fig:spliceruntime} shows the runtime $T_\text{SPLICE}$ of the \ref{alg:splicing} algorithm.
}
\end{figure}

Figure~\ref{fig:splicerunfactor} shows the ratios $T_\text{SPLICE} / T^*$ with respect to the size of the problem instance $n$ (the problem instances are the same as those in Figure~\ref{fig:splicecostfactor}), where $T^*$  is the runtime of an exact algorithm.
One can observe that the computation of an optimal solution becomes impractical for a number $n \simeq 100$ of origin/destination pairs.

Finally, Figure~\ref{fig:spliceruntime} shows the runtime $T_\text{SPLICE}$ of the \ref{alg:splicing} algorithm with respect to the size of the problem instance $n$ (the problem instances are the same as those in figure \ref{fig:splicecostfactor}). One can note that even for moderately large problem instances (say, $n\simeq100$) the runtime is below a second.

\subsection{Euclidean Bipartite Matching---First- and Next- Order Asymptotics}
\label{sec:sim_EBMP_scaling}

\newcommand{\simpickrad}{r}
\newcommand{\simdelvrad}{R}
\newcommand{\spherevol}{\operatorname*{spherevol}}

In this section, we compare the observed scaling of the length of the EBMP as a function of instance size,
with what is predicted by equations~\eqref{eq:scpLB} and~\eqref{eq:spliceUB} of Sections~\ref{subsec:linklower} and~\ref{subsec:linkupper}, respectively.
We focus our attention on two examples of pickup/delivery distributions $(\pdistr_\picktag,\pdistr_\delvtag)$:

\noindent\emph{Case I---Unit Cube Arrangement:}
In the first case, the pickup site distribution $\pdistr_\picktag$ places one-half of its probability uniformly over a unit cube centered along the $x$-axis at $x=-4$,
and the other half uniformly over the unit cube centered at $x=-2$.
The delivery site distribution $\pdistr_\delvtag$ places one-half of its probability uniformly over the cube at $x=-4$ and the other half over a new unit cube centered at $x=2$.

\noindent\emph{Case II---Co-centric Sphere Arrangement:}
In the second case, pickup sites are uniformly distributed over a sphere of radius $R=2$, and delivery sites are uniformly distributed over a sphere of radius $r=1$.
Both spheres are centered at the origin.
\begin{figure}
\centering
\subfigure[\emph{Case I} sample ($n=100$) with and without optimal matching.]{
\includegraphics[width=\linewidth]{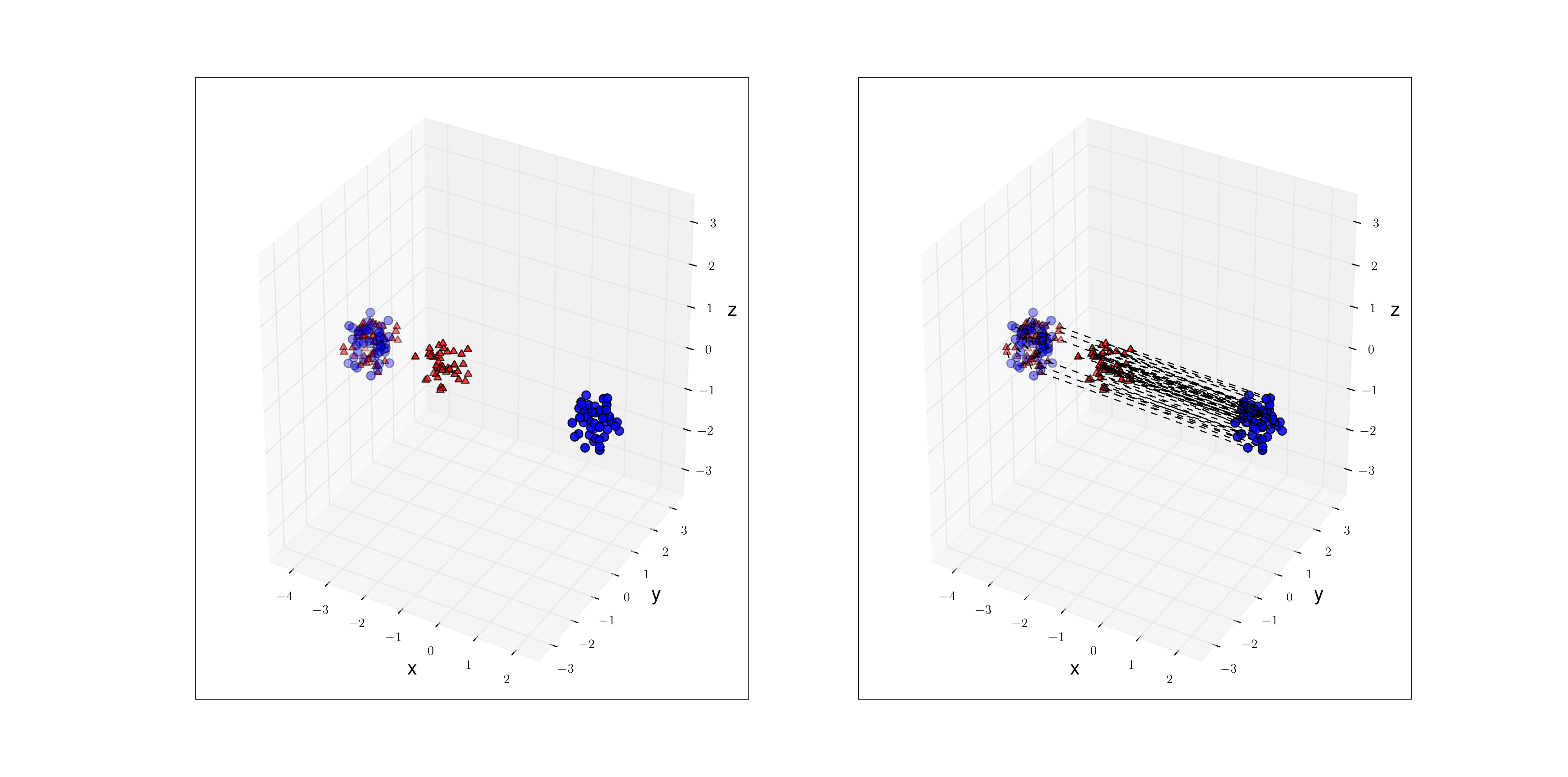}
\label{subfig:cubesmatching}
}
\subfigure[\emph{Case II} sample ($n=100$) with and without optimal matching.]{
\includegraphics[width=\linewidth]{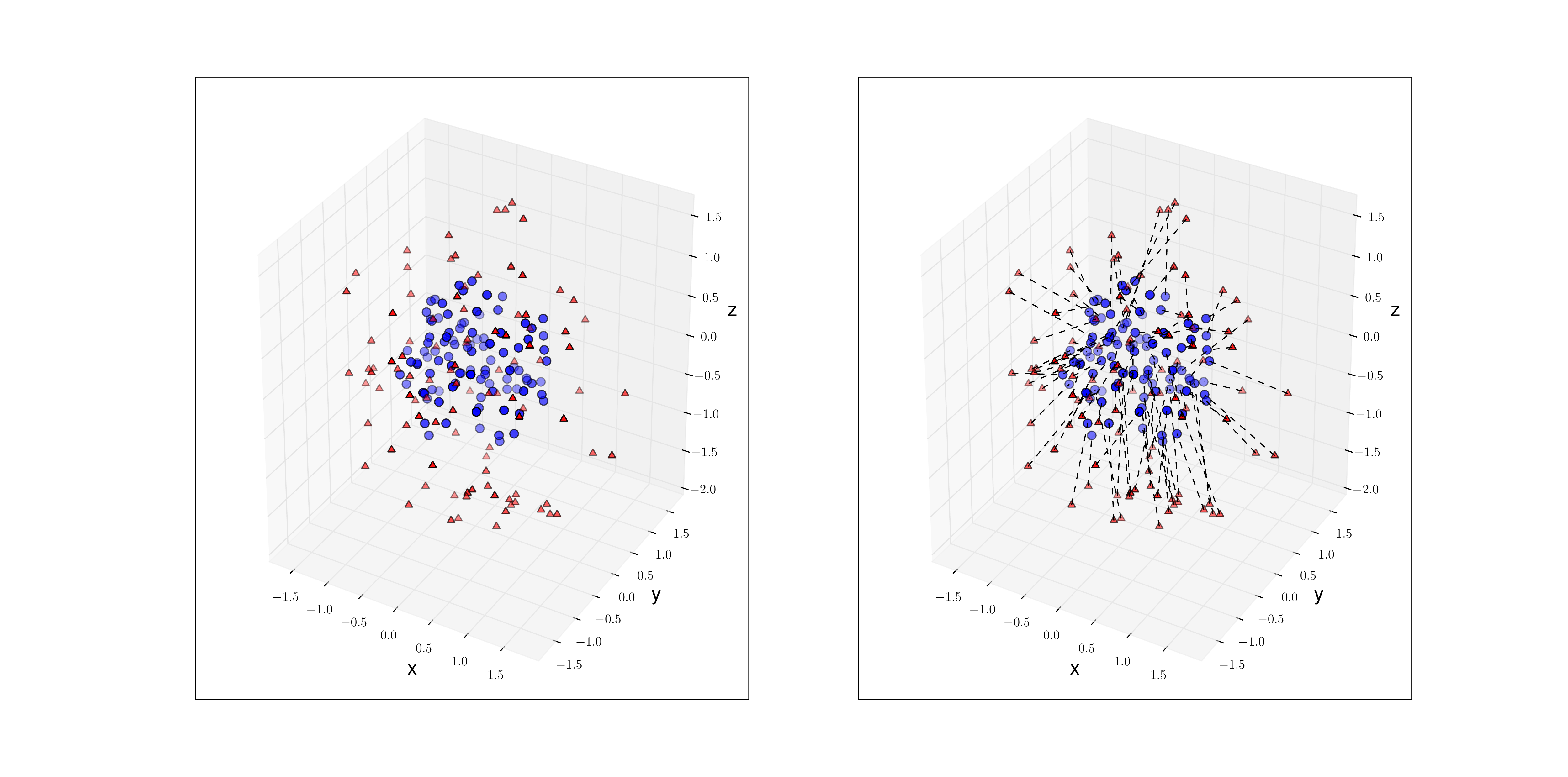}
\label{subfig:spheresmatching}
}
\caption{
Samples of size $n=100$, drawn according to the distributions of \emph{Case I} (Figure~\ref{subfig:cubesmatching}) and \emph{Case II} (Figure~\ref{subfig:spheresmatching}).
Pickup sites are shown as triangle markers; delivery sites are shown as circles.
Left plots show the samples alone; right plots include links of the optimal bipartite matching.
}
\end{figure}

Figures~\ref{subfig:cubesmatching} and~\ref{subfig:spheresmatching}
show samples of size $n=100$, drawn according to the distributions of \emph{Case I} and \emph{Case II} respectively;
the left plots show the samples alone, while the plots on the right include the links of the optimal bipartite matching.

The cases under consideration are examples for which one can compute the constants $W$ (Wasserstein distance) and $\kappa$ of equation~\eqref{eq:spliceUB} exactly.
In the interest of brevity, we omit the derivations, and simply present the computed values in Table~\ref{tbl:bm_scaling_coefficients}.
The extra column ``$\tilde\kappa(\pdistr_\delvtag,\pdistr_\picktag)$'' of the table shows a new smaller constant that results from bringing the $\min$ operation inside the integral in equation~\eqref{eq:residualconst}.
\begin{table}[H]
\centering
\begin{tabular}{l|c|c|c|c|}
		& $W(\pdistr_\delvtag,\pdistr_\picktag)$		
				& $\kappa(\pdistr_\delvtag,\pdistr_\picktag)$
							& $\tilde\kappa(\pdistr_\delvtag,\pdistr_\picktag)$	\\ \hline
  Case I	& $2$		& $\approx 0.892$	& $\approx 0.446$		\\
  Case II	& $0.75$	& $\approx 1.141$	& $\approx 0.285$
\end{tabular}
\caption{
Values computed for the constants $W(\pdistr_\delvtag,\pdistr_\picktag)$ and $\kappa(\pdistr_\delvtag,\pdistr_\picktag)$ in equations~\eqref{eq:scpLB} and~\eqref{eq:spliceUB}, for \emph{Case I} and \emph{Case II}, respectively;
also $\tilde\kappa(\pdistr_\delvtag,\pdistr_\picktag)$ in each case, the result of bringing the $\min$ operation inside the integral in equation~\eqref{eq:residualconst}.
}
\label{tbl:bm_scaling_coefficients}
\end{table}

The simulation experiment is, for either of the cases above, and for each of seven size categories, to sample twenty-five EBMP instances of size $n$ randomly, and compute the optimal matching cost $\M$ of each.
The results of the experiment are shown in Figure~\ref{fig:linearandresidual}.
\begin{figure}
\centering
\subfigure[]{
\includegraphics[width=.45\linewidth]{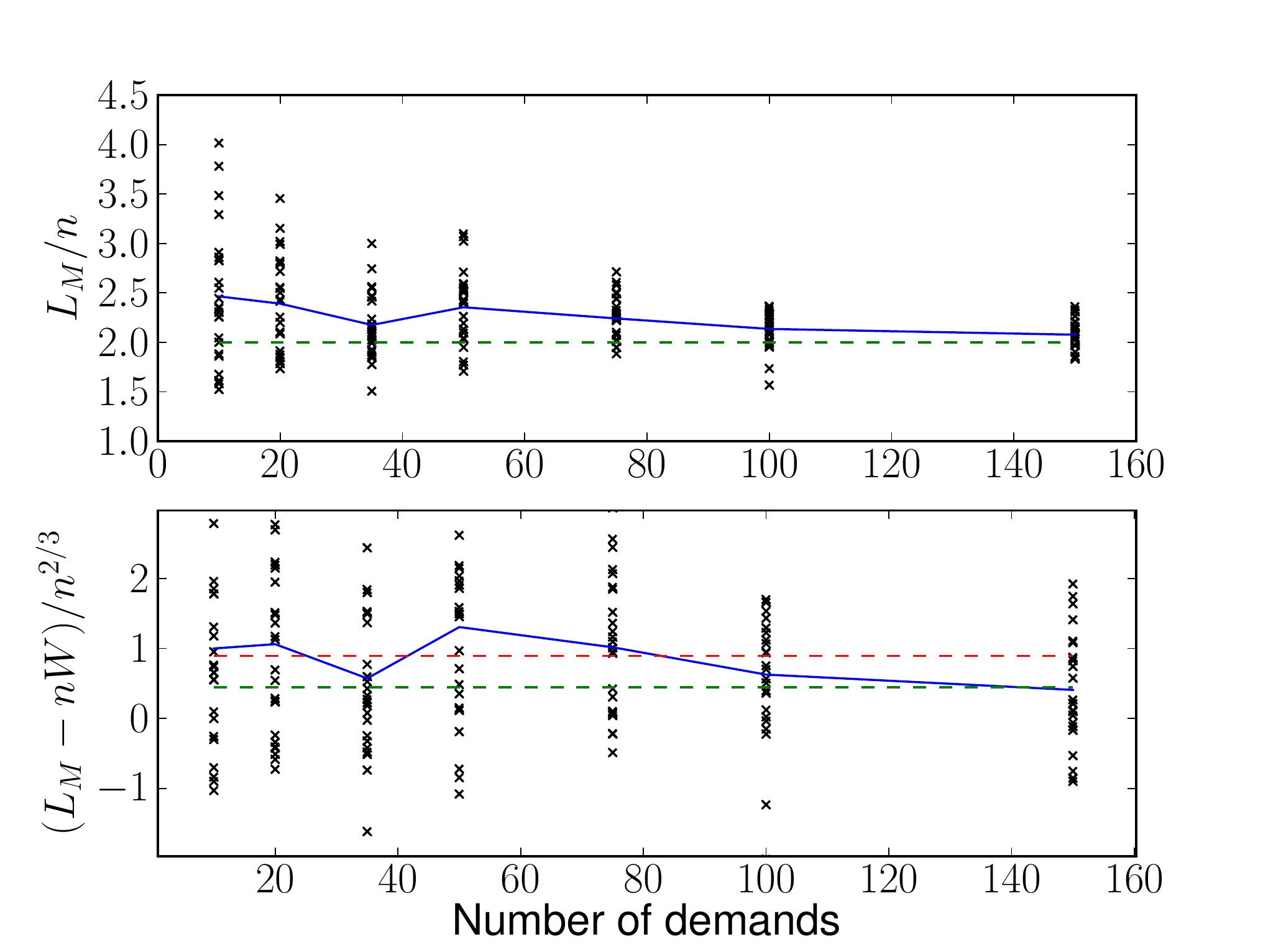}
\label{subfig:experiment_case1}
}
\subfigure[]{
\includegraphics[width=.45\linewidth]{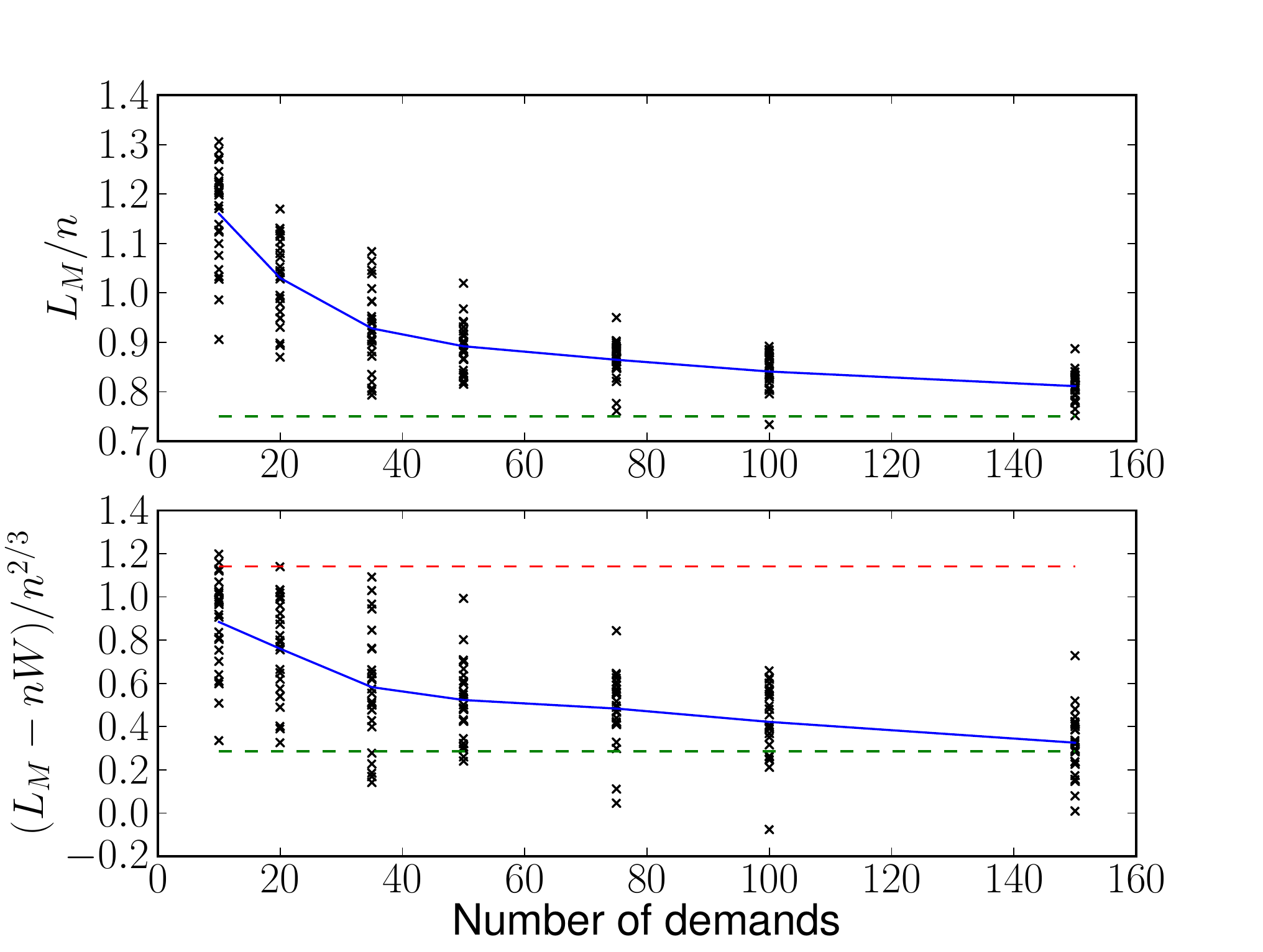}
\label{subfig:experiment_case2}
}
\caption{
Scatter plots of $(n,\M/n)$ (top) and $\left( n, (\M-W)/ n^{2/3} \right)$ (bottom), with one point for each of twenty-five trials per size category.
Figure~\ref{subfig:experiment_case1} shows results for random samples under the distribution of \emph{Case I};
Figure~\ref{subfig:experiment_case2} shows results for random samples under the distribution of \emph{Case II}.
}
\label{fig:linearandresidual}
\end{figure}
Figure~\ref{subfig:experiment_case1} (top) shows a scatter plot of $(n,\M/n)$ with one point for each trial in \emph{Case I};
that is, the $x$-axis denotes the size of the instance, and the $y$-axes denotes the average length of a match in the optimal matching solution.
Additionally, the plot shows a curve (solid line) through the empirical mean in each size category,
and a dashed line showing the Wasserstein distance between $\pdistr_\delvtag$ and $\pdistr_\picktag$, i.e. the predicted asymptotic limit to which the sequence should converge.
Figure~\ref{subfig:experiment_case2} (top) is analogous to Figure~\ref{subfig:experiment_case1} (top), but for random samples of \emph{Case II}.
Both plots exhibit the predicted approach of $\M / n$ to the constant $W(\pdistr_\delvtag,\pdistr_\picktag)>0$; the convergence in Figure~\ref{subfig:experiment_case2} (top) appears slower because $W$ is smaller.
Figure~\ref{subfig:experiment_case1} (bottom) shows a scatter plot of $\left( n, (\M-W)/ n^{2/3} \right)$ from the same data,
with another solid curve through the empirical mean.
Also shown are $\kappa$ and $\tilde\kappa$ (dashed lines);
recall that $\kappa$ is the theoretical asymptotic upper bound for the sequence (equation~\eqref{eq:spliceUB}).
Figures~\ref{subfig:experiment_case2} (bottom) is again analogous to Figures~\ref{subfig:experiment_case1} (bottom),
and both plots indicate asymptotic convergence to a constant no larger than $\kappa(\pdistr_\delvtag,\pdistr_\picktag)$.
In fact, these cases give some credit to a developing conjecture of the authors.
The conjecture is that the minimization in~\eqref{eq:residualconst} can be moved inside the integral to provide an upper bound like equation~\eqref{eq:spliceUB}, but with a smaller (often much smaller) constant factor, i.e. $\tilde \kappa(\pdistr_\delvtag,\pdistr_\picktag)$.
%
%

\subsection{Stability of the DPDP}

We conclude the simulation section with some heuristic validation of equation~\eqref{eq:rho} and the resulting threshold $\arrivalrate^*$ separating \emph{stabilizable} arrival rates from \emph{unstabilizable} ones.
The main insight of this section is as follows.
Let $\pi$ be a policy for the DPDP that is perfectly stabilizing, i.e., stabilizing for all $\arrivalrate < \arrivalrate^*$.
We consider the system $( {\rm DPDP}(\arrivalrate), \pi )$, where $\arrivalrate > \arrivalrate^*$.
Clearly, since $\arrivalrate > \arrivalrate^*$, the number of outstanding demands in the system grows unbounded.
Still, demands arrive at rate $\arrivalrate$ in time average, and we should expect the policy to serve demands at an average rate of $\arrivalrate^*$ (i.e., the fastest rate under $\pi$).
Thus, the number of outstanding demands should grow at an average rate of $\arrivalrate - \arrivalrate^*$.
Since we can control $\arrivalrate$ in simulation, we can use this insight to estimate $\arrivalrate^*$,
e.g., by the simple calculation $\arrivalrate - n(T) / T$ after sufficiently large time $T$, where $n(T)$ is the number of outstanding demands at time $T$.
We focus our discussion on the single-vehicle setting, but results for multiple vehicle systems have been equally positive.
Consider again the cases of Section~\ref{sec:sim_EBMP_scaling}.
Table~\ref{tbl:estimating_lambdastar} shows computed threshold $\arrivalrate^*$ for both cases---and the statistics essential in computing it---%
as well as the estimate of $\arrivalrate^*$ after time $T=5000$.
\begin{table}[H]
\centering
\begin{tabular}{l|c|c|c|c|}
		& $\expectation{ Y-X}$		& $W$		& $\arrivalrate^* = \left(\expectation{ Y-X} + W\right)^{-1}$
										& $\arrivalrate^*$ estimate after $T=5000$	\\ \hline
  Case I	& $\approx 3.2$		& $2$		& $\approx 0.190$	& $0.20$			\\
  Case II	& $\approx 1.66$	& $0.75$	& $\approx 0.415$	& $0.42$
\end{tabular}
\caption{
Stabilizability tresholds $\arrivalrate^*$ for \emph{Case I} and \emph{Case II} of Section~\ref{sec:sim_EBMP_scaling}; with relevant statistics.
Also, an estimate of $\arrivalrate^*$ in each case after simulation for time $T=5000$.
}
\label{tbl:estimating_lambdastar}
\end{table}
Our simulations were of the \emph{nearest-neighbor policy} (NN); i.e., the vehicle's $i$th demand is the demand whose pickup location was nearest to the vehicle at the time of delivery of the $(i-1)$th demand.
(The simulated rate of arrivals $\arrivalrate$ was $1$.)
Although a proof that the NN policy is perfectly stabilizing is currently  not available, it has been observed that such policy has good performance for a variety of vehicle routing problems; it also has a fast implementation where large numbers of outstanding demands are concerned.
In both cases, the estimated and computed $\arrivalrate^*$ were quite close (within $5\%$ of each other).

\section{Conclusion} \label{sec:conclusion}

In this paper we have presented the \ref{alg:splicing} algorithm, a polynomial-time, asymptotically optimal algorithm for the stochastic (Euclidean) SCP. We characterized analytically the length of the tour computed by \ref{alg:splicing}, and we used such characterization to determine a necessary and sufficient condition for the existence of stable routing policies for the 1-DPDP, a dynamic version of the stochastic SCP.
Our results would provide a designer of DRT systems with essential information to build business and strategic planning models regarding, e.g., fleet sizing.

This paper leaves numerous important extensions open for further research.
First, we are interested in precisely characterizing the convergence rate to the optimal solution, and in addressing the more general case where the pickup and delivery locations are statistically correlated.
Second, we plan to develop policies for the dynamic version of the SCP whose performance is within a constant factor from the optimal one.
Third, while in the SCP the servicing vehicle is assumed to be omnidirectional (i.e., sharp turns are allowed), we hope to develop approximation algorithms for the SCP where the vehicle has differential motion constraints (e.g., bounded curvature), as is typical, for example, with unmanned aerial vehicles.
In addition to these natural extensions, we hope that the techniques introduced in this paper (e.g., coupling the EBMP with the theory of random permutations) may come to bear in other hard combinatorial problems.


\bibliographystyle{unsrt} 
\bibliography{TAC2011}

\newcommand{\noopsort}[1]{} \newcommand{\printfirst}[2]{#1}
  \newcommand{\singleletter}[1]{#1} \newcommand{\switchargs}[2]{#2#1}
\begin{thebibliography}{10}

\bibitem{Berbeglia.Cordeau.ea:TOP07}
G.~Berbeglia, J.-F. Cordeau, I.~Gribkovskaia, and G.~Laporte.
\newblock Static pickup and delivery problems: a classification scheme and
  survey.
\newblock {\em TOP}, 15:1--31, 2007.
\newblock 10.1007/s11750-007-0009-0.

\bibitem{CJF-GL-ea:07}
J.~F. Cordeau, G.~Laporte, J.~Y. Potvin, and M.W.P. Savelsbergh.
\newblock Transportation on demand.
\newblock In C.~Barnhart and G.~Laporte, editors, {\em Transportation,
  Handbooks in Operations Research and Management Science}, volume~14, pages
  429--466. Elsevier, Amsterdam, The Netherlands, 2007.

\bibitem{WJM-CEB-LDB:10}
W.~J. Mitchell, C.~E. Borroni-Bird, and L.~D. Burns.
\newblock {\em Reinventing the Automobile}.
\newblock MIT Press, 2010.

\bibitem{GNF.DJG:JC93}
G.~N. Frederickson and D.~J. Guan.
\newblock Preemptive ensemble motion planning on a tree.
\newblock {\em SIAM Journal on Computing}, 21(6):1130--1152, 1992.

\bibitem{FG:93}
G.~N. Frederickson and D.~J. Guan.
\newblock Nonpreemptive ensemble motion planning on a tree.
\newblock {\em J. Algorithms}, 15:29--60, July 1993.

\bibitem{Krumke:JA06}
A.~Coja-Oghlan, S.~O. Krumke, and T.~Nierhoff.
\newblock A heuristic for the stacker crane problem on trees which is almost
  surely exact.
\newblock {\em Journal of Algorithms}, 61(1):1 -- 19, 2006.

\bibitem{atallah}
M.~J. Atallah and S.~R. Kosaraju.
\newblock Efficient solutions to some transportation problems with applications
  to minimizing robot arm travel.
\newblock {\em SIAM Journal on Computing}, 17(5):849--869, 1988.

\bibitem{FHK:S76}
G.~N. Frederickson, M.~S. Hecht, and C.~E. Kim.
\newblock Approximation algorithms for some routing problems.
\newblock {\em Foundations of Computer Science, Annual IEEE Symposium on},
  0:216--227, 1976.

\bibitem{CO:P03}
A.~Coja-Oghlan, S.~O. Krumke, and T.~Nierhoffi.
\newblock Scheduling a server on a caterpillar network - a probabilistic
  analysis.
\newblock In {\em Proceedings of the 6th Workshop on Models and Algorithms for
  Planning and Scheduling Problems}, 2003.

\bibitem{GB-JFC-GL:10}
G.~Berbeglia, J.~F. Cordeau, and G.~Laporte.
\newblock Dynamic pickup and delivery problems.
\newblock {\em European Journal of Operational Research}, 202(1):8 -- 15, 2010.

\bibitem{SNP-KFD-RFH:08}
S.~N. Parragh, K.~F. Doerner, and R.~F. Hartl.
\newblock A survey on pickup and delivery problems.
\newblock {\em Journal fur Betriebswirtschaft}, 58(2):81 -- 117, 2008.

\bibitem{Swihart.Papastavrou:99}
M.~R. Swihart and J.~D. Papastavrou.
\newblock A stochastic and dynamic model for the single-vehicle pick-up and
  delivery problem.
\newblock {\em European Journal of Operational Research}, 114(3):447--464,
  1999.

\bibitem{Treleaven.Pavone.ea:10}
M.~Pavone, K.~Treleaven, and E.~Frazzoli.
\newblock Fundamental performance limits and efficient polices for
  {Transportation-On-Demand} systems.
\newblock In {\em Proc.\ IEEE Conf.\ on Decision and Control}, December 2010.

\bibitem{Shepp:1966}
L.~A. Shepp and S.~P Lloyd.
\newblock Ordered cycle lengths in a random permutation.
\newblock {\em Transactions of the American Mathematical Society},
  121(2):340--357, 1966.

\bibitem{Kuhn1955}
H.~W. Kuhn.
\newblock The hungarian method for the assignment problem.
\newblock {\em Naval Research Logistics Quarterly}, 2(1-2):83--97, 1955.

\bibitem{Agarwal:1995}
P.~K. Agarwal, A.~Efrat, and M.~Sharir.
\newblock Vertical decomposition of shallow levels in 3-dimensional
  arrangements and its applications.
\newblock In {\em Proceedings of the eleventh annual symposium on Computational
  geometry}, pages 39--50, Vancouver, British Columbia, Canada, 1995. {ACM}.

\bibitem{Agarwal:2004}
P.~Agarwal and K.~Varadarajan.
\newblock A near-linear constant-factor approximation for euclidean bipartite
  matching?
\newblock In {\em Proceedings of the twentieth annual symposium on
  Computational geometry}, pages 247--252, Brooklyn, New York, {USA}, 2004.
  {ACM}.

\bibitem{VD-JEY:95}
V.~Dobric and J.~E. Yukich.
\newblock Asymptotics for transportation cost in high dimensions.
\newblock {\em Journal of Theoretical Probability}, 8(1):97--118, 1995.

\bibitem{JH-JHBM-OCM:98}
J.~Houdayer, J.H.~Boutet de~Monvel, and O.C. Martin.
\newblock Comparing mean field and euclidean matching problems.
\newblock {\em The European Physical Journal B - Condensed Matter and Complex
  Systems}, 6(3):383--393, 1998.

\bibitem{TAL:92}
M.~Talagrand.
\newblock The {A}jtai-{K}oml\'{o}s-{T}usn\'{a}dy matching theorem for general
  measures.
\newblock In {\em Probability in {B}anach spaces}, volume~30. Birkha\"{u}ser
  Boston, Boston, MA, 1992.

\bibitem{Ajtai:1984}
M.~Ajtai, J.~Koml{\'o}s, and G.~Tusn{\'a}dy.
\newblock On optimal matchings.
\newblock {\em Combinatorica}, 4(4):259--264, December 1984.

\bibitem{Ruschendorf1985}
L.~Ruschendorf.
\newblock The wasserstein distance and approximation theorems.
\newblock {\em Probability Theory and Related Fields}, 70:117--129, 1985.

\bibitem{Baum:1965}
L.~E Baum and M.~Katz.
\newblock Convergence rates in the law of large numbers.
\newblock {\em Transactions of the American Mathematical Society},
  120(1):108--123, 1965.

\bibitem{Gallager:96}
R.~G. Gallager.
\newblock {\em Graph Theory}.
\newblock Kluwer Academic Publishers, Dordrecht, The Netherlands, 1996.

\bibitem{Pavone.Frazzoli.ea.TAC11}
M.~Pavone, E.~Frazzoli, and F.~Bullo.
\newblock Adaptive and distributed algorithms for vehicle routing in a
  stochastic and dynamic environment.
\newblock {\em IEEE Trans.\ Aut.\ Control}, 56(6):1259--1274, 2011.

\bibitem{bertsimas1997introduction}
D.~Bertsimas and J.N. Tsitsiklis.
\newblock {\em Introduction to linear optimization}.
\newblock Athena Scientific Belmont, MA, 1997.

\end{thebibliography}


\appendices

\section{The Permutation Probability Assignment of Optimal Bipartite Matchings}
\label{appendix:numberofcycles}

In this section we provide the rigorous proof of Lemma~\ref{lemma:equi} in Section~\ref{subsec:asymptotic_optimality},
showing the equiprobability of permutations produced by an optimal bipartite matching algorithm $\algog$.
Let $\pickset_n = \{x_1, \ldots, x_n\}$ and $\delvset_n = \{y_1, \ldots, y_n\}$ be two sets of points in $\env \subset \reals^\probdim$,
e.g. forming an instance of the EBMP in environment $\env$.
Consider $\batchvar = \concat( \xvar_1, \yvar_1, \ldots, \xvar_\batchsize, \yvar_\batchsize )$, a column vector formed by vertical concatenation of $x_1, y_1, \ldots, x_n, y_n$.
Note that the set $\batchsetdesc$ of such vectors, i.e. the span of the instances of the EBMP, is a full-dimensional subset of $\reals^{d(2n)}$.
Let $\matchperm : \batchsetdesc \to 2^{\permset_n }$ denote the \emph{optimal permutation map} that maps a \emph{batch} $s \in \batchsetdesc$ into the \emph{set} of permutations that correspond to optimal bipartite matchings (recall that there might be multiple optimal bipartite matchings).
Let us denote the set of batches that lead to non-unique optimal bipartite matchings as:
\[
\zb := \Bigl \{ s\in \batchsetdesc \, \Big | \, |\matchperm(s)|>1 \Bigr \},
\]
where $|\matchperm(s)|$ is the cardinality of set $\matchperm(s)$.

In Lemma~\ref{lemma:equi}, $\algog$ may be any algorithm that computes an optimal bipartite matching (i.e., a permutation that solves an EBMP).
According to our definitions, the behavior of such an algorithm can be described as follows: given a batch $s \in \batchsetdesc$ it computes 
\begin{equation*}
\algog(s) =  
\begin{cases} 
\text{unique }
\permvar \in \matchperm(s) & \text{if $s\in \batchsetdesc \setminus \zb$,}
\\
\text{some } \sigma\in \matchperm(s) &\text{otherwise.}
\end{cases}
\end{equation*}
Thus, the behavior of a bipartite algorithm on the set $\zb$ can vary;
on the other hand, we now show that set $\zb$ has Lebesgue measure zero so that the behavior of an algorithm on this set is immaterial for our analysis.
%
\begin{lemma}[Measure of multiple solutions]
\label{lemma:zbmeasure}
The set $\zb$ has Lebesgue measure equal to zero.
\end{lemma}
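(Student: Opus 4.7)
The plan is to cover $\zb$ by a \emph{finite} union of easily-controlled ``tie sets''---one per ordered pair of distinct permutations---and then to show each individual tie set is negligible. Concretely, for each $\permvar_1 \neq \permvar_2 \in \permset_n$ define
\[
  A_{\permvar_1,\permvar_2} := \Bigl\{ s \in \batchsetdesc \,:\, \sum_{i=1}^n \vecnorm{x_{\permvar_1(i)} - y_i} \,=\, \sum_{i=1}^n \vecnorm{x_{\permvar_2(i)} - y_i} \Bigr\}.
\]
Any $s \in \zb$ admits at least two distinct optimal permutations, which by definition attain the same total cost, so $\zb \subseteq \bigcup_{\permvar_1 \neq \permvar_2} A_{\permvar_1,\permvar_2}$. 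Since this union indexes only $n!(n!-1)$ pairs, it will suffice to prove each $A_{\permvar_1,\permvar_2}$ has Lebesgue measure zero.

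Fixing such a pair, the next step is to single out an index $i^*$ with $\permvar_1(i^*) \neq \permvar_2(i^*)$ and integrate over $y_{i^*}$ via Fubini's theorem. With every other coordinate of $s$ frozen, the equation defining $A_{\permvar_1,\permvar_2}$ simplifies (all terms not involving $y_{i^*}$ cancel) to
\[
  \vecnorm{x_{\permvar_1(i^*)} - y_{i^*}} \,-\, \vecnorm{x_{\permvar_2(i^*)} - y_{i^*}} \,=\, C,
\]
for a constant $C \in \reals$ that depends only on the frozen data. Provided $x_{\permvar_1(i^*)} \neq x_{\permvar_2(i^*)}$, the left-hand side is a non-constant real-analytic function of $y_{i^*}$ on $\reals^d \setminus \{x_{\permvar_1(i^*)}, x_{\permvar_2(i^*)}\}$; its level sets are analytic hypersurfaces of codimension one (pieces of a confocal hyperboloid, degenerating to a perpendicular-bisector hyperplane when $C=0$), and thus have $d$-dimensional Lebesgue measure zero. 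Fubini then forces $A_{\permvar_1,\permvar_2}$ itself to have measure zero in $\batchsetdesc \cong \reals^{2dn}$.

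The only obstacle I anticipate is the degenerate case where the two pickup points $x_{\permvar_1(i^*)}$ and $x_{\permvar_2(i^*)}$ coincide, in which case the fiber analysis gives no information. The remedy is to pre-discard the exceptional set $\{ s : x_j = x_k \text{ for some } j \neq k \}$, which is a finite union of linear subspaces of $\batchsetdesc$ of strictly lower dimension, and hence already of Lebesgue measure zero; the Fubini argument then applies without modification on the complement, and a finite union over the $(n!)^2 - n!$ pairs closes the proof.
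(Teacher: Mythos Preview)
Your proposal is correct and follows the same overall strategy as the paper: cover $\zb$ by the finite union of ``tie sets'' $A_{\permvar',\permvar''}$ (the paper calls them $\mathcal{H}_{\permvar',\permvar''}$), observe that $\zb$ is contained in this union, and conclude by showing each tie set has Lebesgue measure zero.

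The only difference is in the justification of that last step. The paper simply asserts that the single equality constraint forces $\mathcal{H}_{\permvar',\permvar''}$ into a lower-dimensional set and hence gives it measure zero. Your argument is more careful: you slice via Fubini in the coordinate $y_{i^*}$, identify the fiber as a level set of a non-constant real-analytic function (a confocal hyperboloid or its degenerate bisecting hyperplane), and explicitly dispose of the exceptional locus $\{x_j = x_k\}$ before applying the fiber argument. This extra care is warranted---the paper's one-line dimension count is really a shorthand for exactly the kind of reasoning you supply---but the underlying idea is the same. One minor remark: in the paper $\permvar_1$ denotes the identity permutation, so you may want to rename your two permutations to avoid a clash.
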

\begin{proof}
The strategy of the proof is to show that $\zb$ is the subset of a set that has zero Lebesgue measure. 

For $\sigma^{\prime}, \sigma^{\prime \prime} \in \permset_n$, $\sigma^{\prime} \neq \sigma^{\prime \prime}$,
let us define the sets:
\[
\mathcal H_{\sigma^{\prime},  \sigma^{\prime \prime}}
  := \Bigl \{ s \in \batchsetdesc \Big |  \sum_{i=1}^n \| x_{\permvar^{\prime}(i)} - y_i \|  =  \sum_{i=1}^n \| x_{\permvar^{\prime \prime}(i)} - y_i \|  \, \Bigr\};
\]
let us also define the union of such sets:
\[
\mathcal H := \bigcup_{\substack{\sigma^{\prime}, \sigma^{ \prime\prime} \in \permset_n\\ \sigma^{\prime} \neq \sigma^{\prime \prime}}} \, \, \mathcal H_{\sigma^{\prime}, \, \sigma^{\prime \prime}}.
\]

The equality constraint in the definition of $\mathcal H_{\sigma^{\prime}, \, \sigma^{\prime \prime}}$ implies that $\mathcal H_{\sigma^{\prime}, \, \sigma^{\prime \prime}} \subseteq \reals^{d(2n)-1}$, which has zero Lebesgue measure in $\reals^{d(2n)}$.
Hence, the Lebesgue measure of $\mathcal H_{\sigma^{\prime}, \, \sigma^{\prime \prime}}$ is zero.
Since $\mathcal H$ is the union of finitely many sets of measure zero, it has zero Lebesgue measure as well. 

We conclude the proof by showing that $\zb \subseteq \mathcal H$. Indeed, if $s\in \zb$, there must exist two permutations $\sigma^{\prime} \neq \sigma^{\prime \prime}$ such that $ \sum_{i=1}^n \| x_{\permvar^{\prime}(i)} - y_i \| = \min_{\sigma} \sum_{i=1}^n \| x_{\permvar(i)} - y_i \|$ and $ \sum_{i=1}^n \| x_{\permvar^{\prime \prime}(i)} - y_i \| = \min_{\sigma} \sum_{i=1}^n \| x_{\permvar(i)} - y_i \|$, i.e.,  there must exist two permutations $\sigma^{\prime} \neq \sigma^{\prime \prime}$ such that
\[
\sum_{i=1}^n \| x_{\permvar^{\prime}(i)} - y_i \|  =  \sum_{i=1}^n \| x_{\permvar^{\prime \prime}(i)} - y_i \| ,
\]
which implies that $s \in \mathcal H$. Hence, $\zb \subseteq H$ and, therefore, it has zero Lebesgue measure.

\end{proof}

Now we present the proof of Lemma~\ref{lemma:equi}, which gives the probability that $\algog$ produces as a result the permutation $\sigma$ for $\pickset_n,\delvset_n \sim \ESCPinst$;
we call such probability $\prob{\sigma}$.
\begin{proof}[Proof of Lemma~\ref{lemma:equi}]
We start by observing that it is enough to consider a restricted sample space, namely $\batchsetdesc \setminus \zb$.
One can write $\prob{s \in \zb} = \int_{s \in \zb} \pdistr( s) d{ s}$,
where $\pdistr(s)$ denotes the product $\prod_{i=1}^n \pdistr_\picktag(x_i) \pdistr_\delvtag(y_i)$.
Because of our continuity assumptions on probability distributions, Lemma~\ref{lemma:zbmeasure} implies $\prob{s \in \zb} = 0$.
Thus, by the total probability law,
\begin{equation}\label{eq:remZ}
\begin{split}
\prob{\sigma}
&= \probcond{\algog(s) = \sigma}{s \in \batchsetdesc \setminus \zb}.
\end{split}
\end{equation}

For each permutation $\permvar \in   \permset_n$, let us define the set
\[ 
\batchset_{\permvar} := \left\{  \batchvar \ \in \batchsetdesc \setminus \zb \,|\,  \algog(\batchvar) = \permvar  \right\}.
\]
Collectively, sets $\batchset_{\permvar}$ form a partition of $\batchsetdesc \setminus \zb$. This fact, coupled with equation \eqref{eq:remZ}, implies
\[
\prob{\permvar} = \prob{s\in \batchset_{\permvar}}.
\]
For a permutation $\permvar \in \permset_n$, let us define the \emph{reordering function} $g_{\permvar}: \batchsetdesc \to \batchsetdesc$ as the function that maps a batch
$\batchvar = \concat(x_1,y_1,\ldots,x_n, y_n)$
into a batch
$\batchvar' = \concat(x_{\permvar(1)}, y_1,\ldots,x_{\permvar(n)}, y_n)$.
Alternatively, let $E_j \in \reals^{d \times 2nd}$ be a block row matrix of $2n$ $d\times d$ square blocks whose elements are equal to zero except the $(2j-1)$th block that is identity;
let $F_j \in \reals^{d \times 2nd}$ be such a block matrix, but whose elements are all zero except the $2j$th block that is identity.
Then in matrix form the reordering function can be written as
$g_{\permvar}(s) = P_{\permvar} \, s$,
where $P_{\permvar}$ is the $2nd\times 2nd$ matrix defined by
\[
P_{\permvar} :=
\left[ \begin{array}{ccc}
  E_{\permvar(1)}^\xpose	\
  F_1^\xpose			\
  E_{\permvar(2)}^\xpose	\
  F_2^\xpose			\
  \hdots			\
  E_{\permvar(n)}^\xpose	\
  F_n^\xpose	
\end{array} \right]^\xpose.
\] 
Note that $|\det(P_{\permvar})|=1$ for all permutations $\permvar$; also, Prop.~2 of Section~\ref{subsec:cycle} implies $P_{\permvar^{-1}} = P_\permvar^{-1}$.
%
We now show that $g_{\hat \permvar}(\batchset_{\hat \permvar}) = \batchset_{\permvar_1}$ for all permutations $\hat \permvar \in \permset_n$,
recalling that $\permvar_1$ denotes the \emph{identity} permutation (i.e., $\permvar(i) = i$ for $i=1,\ldots,\batchsize$):
%
\begin{itemize}
\item
$g_{\hat \permvar}(\batchset_{\hat \permvar}) \subseteq \batchset_{\permvar_1}$.
Let $\hat s \in \batchset_{\hat \permvar}$.
Then by definition
$\sum_{i=1}^n \vecnorm{ \hat x_{\hat \permvar(i)} - \hat y_i }
  = \min_{ \permvar \in \permset_n } \sum_{i=1}^n \vecnorm{ \hat x_{\permvar(i)} - \hat y_i }$; moreover $\hat \permvar$ is the unique minimizer.
We want to show that $g_{\hat \permvar}(\hat s) \in \batchset_{\permvar_1}$,
where $g_{\hat \permvar}(\hat s)$ has the form	\\
$\concat(\hat x_{\hat \permvar(1)}, \hat y_1,\ldots,\hat x_{\hat \permvar(n)}, \hat y_n)$.
Let $s = g_{\hat \permvar}(\hat s)$;
indeed, $\permvar_1$ is an optimal matching of $s$ (by inspection), i.e., $\permvar_1 \in \permset^*(s)$.
Suppose, however, there is another optimal matching ${\hat{\hat \permvar}} \neq \permvar_1$ such that ${\hat{\hat \permvar}} \in \permset^*(s)$.
Then ${\hat{\hat \permvar}} \hat \permvar$ is an optimal matching of $\hat s$ (Prop.~1);
yet this is a contradiction,
because ${\hat{\hat \permvar}} \hat\permvar \neq \hat\permvar$.
Therefore, we have that $s \in \batchset_{\permvar_1}$ for all $\hat s \in \batchset_{\hat \permvar}$.
%
\item
$\batchset_{\permvar_1} \subseteq g_{\hat \permvar}(\batchset_{\hat \permvar})$.
Let $s \in \batchset_{\permvar_1}$.
Then by definition
$\sum_{i=1}^n \vecnorm{ x_i - y_i }
  = \min_{ \permvar \in \permset_n } \sum_{i=1}^n
      \vecnorm{ x_{\permvar(i)} - y_i }$;
moreover $\permvar_1$ is the unique minimizer.
Note that $g_{\hat \permvar}$ is an injective function
(since the determinant of $P_{\hat \permvar}$ is nonzero);
let $\hat s$ be the unique batch such that $s = g_{\hat \permvar}(\hat s)$, i.e.,
$\hat s = \concat(x_{\hat \permvar^{-1}(1)}, y_1,\ldots,x_{\hat \permvar^{-1}(n)},y_n)$ (Prop.~2).
We want to show that $\hat s \in \batchset_{\hat \permvar}$.
%
Because $\sum_{i=1}^n \vecnorm{ x_i - y_i }
  = \sum_{i=1}^n \vecnorm{ x_{\hat \permvar( \hat \permvar^{-1}(i) ) } - y_i }$,
$\hat \permvar$ is an optimal matching of $\hat s$, i.e., $\hat \permvar \in \permset^*(\hat s)$.
Suppose there is another optimal matching
${\hat{\hat \permvar}} \neq \hat \permvar$
such that ${\hat{\hat \permvar}} \in \permset^*(\hat s)$.
Again, this is a contradiction, since $\hat{\hat \permvar} \hat \permvar^{-1} \neq \permvar_1$, and $\permvar_1$ is the \emph{unique} optimal matching for batch $s$. We conclude that $\hat s \in \batchset_{\hat \permvar}$ for all $s \in \batchset_{\permvar_1}$.
\end{itemize}

We are ready to evaluate the probabilities of permutations as follows:
For any permutation $\hat \permvar$ we have
$\prob{\hat \permvar}
  = \prob{\hat s \in \batchset_{\hat \permvar}}
  = \int_{\hat s \in \batchset_{\hat \permvar}} \pdistr(\hat s) d{\hat s}$,
where $\pdistr(\hat s)$ denotes 
$\prod_{i=1}^n \pdistr_\picktag(\hat x_i) \pdistr_\delvtag(\hat y_i)$.
We use variable substitution
$s = g_{\hat \permvar}(\hat s) = P_{\hat \permvar} \hat s$
and the property
$g_{\hat \permvar}(\batchset_{\hat \permvar}) = \batchset_{\permvar_1}$,
and we apply the rule of integration by substitution:
$\int_{\hat s \in \batchset_{\hat \permvar}} \pdistr(\hat s) d{\hat s}
  = \int_{s \in \batchset_{\permvar_1}} \pdistr( P_{\hat \permvar}^{-1} s )
    \underbrace{|\det(P_{\hat \permvar})|^{-1}}_{=1} \ ds$.
Observing that
\[
\pdistr( P_{\hat \permvar}^{-1} s )
  = \pdistr( P_{{\hat \permvar}^{-1}} s )
  = \prod_{i=1}^n
      \pdistr_\picktag( x_{ {\hat \permvar}^{-1}(i) } )
      \pdistr_\delvtag( y_i ),
\]
and that
\[\prod_{i=1}^n
    \pdistr_\picktag( x_{ {\hat \permvar}^{-1}(i) } )
    \pdistr_\delvtag( y_i )
  = \prod_{i=1}^n \pdistr_\picktag( x_i ) \pdistr_\delvtag( y_i )
  = \pdistr(s),\]
we obtain
\begin{equation*}
\begin{split}
\int_{s \in \batchset_{\permvar_1}}
    \pdistr( P_{\hat \permvar}^{-1} s ) \ ds
  &= \int_{s \in \batchset_{\permvar_1}} \pdistr( s ) \ ds
  = \prob{ s \in \batchset_{\permvar_1} } = \prob{\permvar_1}.
    \end{split}
 \end{equation*}
Combining these results,
we conclude $\prob{\sigma} = \prob{\sigma_1}$ for all $\permvar \in \permset_n$,
obtaining the lemma.
\end{proof}


\section{Proofs of Other Lemmas}
\label{appendix:lemmas}

\subsection{Lemmas of Section~\ref{sec:scpoptimal}}

\begin{proof}[Proof of Lemma~\ref{lemma:splicesvanish}]
For any $\epsilon>0$, consider the sequence $E$ of events, where
\begin{align*}
    E_\batchsize \doteq \quad\Bigr\{ (\mathcal X_n, \mathcal Y_n): \, 
		\subn_\batchsize/
		\batchsize
	 > \epsilon \Bigl \}
\end{align*}
or, equivalently,
$E_\batchsize
  = \Bigr\{ (\mathcal X_n, \mathcal Y_n) : \,
      \left( \subn_\batchsize  - \expect \subn_\batchsize \right)
	+ \left( \expect \subn_\batchsize - \log(\batchsize) \right)
	+ \log(\batchsize)
      > \epsilon \batchsize
  \Bigl\}$.
%
By Lemma~\ref{lemma:sub2perm}, the number of disconnected subtours is \emph{equal} to the number of cycles in the permutation $\permvar$ computed by the matching algorithm $\algog$ in line \ref{line:perm}. Since, by Lemma \ref{lemma:equi}, all permutations are equiprobable, the number of cycles has expectation and variance both equal to $\log(n) + O(1)$. Therefore, we conclude that $\subn_n $ has expectation and variance both $\log(n) + O(1)$.
Hence, we can rewrite the events $E_n$ as:
\[
E_\batchsize = \Bigr\{(\mathcal X_n, \mathcal Y_n):\, \subn_\batchsize  - \expect  \subn_\batchsize 
	> \epsilon\, \batchsize+ o(\batchsize) \Bigl \}.
\]
Applying Chebyshev's inequality, we obtain  (for $\batchsize'$ sufficiently large, yet finite)
\begin{align*}
    \sum_{\batchsize=0}^\infty \probability\left[ E_\batchsize \right]
    &\leq
    n' + \sum_{\batchsize=n'}^\infty
       \frac{ \log(\batchsize) + O(1) }{ \left[ \epsilon\, \batchsize + o(\batchsize) \right]^2 }.
\end{align*}
%
Since this summation is finite,
we can apply the Borel-Cantelli lemma to the sequence of events $E_n$ and conclude that $ \probability[\limsup_{n\to +\infty} E_n]=0$.
Finally, since $\epsilon$ can be chosen arbitrarily small, the upper limit of the claim follows (the lower limit holds trivially). 
\end{proof}

\subsection{Lemmas of Section~\ref{sec:scpbounds}}

\begin{proof}[Proof of Lemma~\ref{lemma:avgmatch_weaklowerbound}]
Let $\permvar$ denote the optimal bipartite matching of $Q_\numdem$.
For a particular partition $\cellset$, we define random variables
$\hat\alpha_{\ivar\jvar} := \left| \left\{ \kvar : \Yvar_\kvar \in \workcell^\ivar, \Xvar_{\permelem{\kvar}} \in \workcell^\jvar \right\} \right| / \numdem$
for every pair $(\workcell^\ivar,\workcell^\jvar)$ of cells;
that is, $\hat\alpha_{\ivar\jvar}$ denotes the fraction of matches under $\permvar$ whose $\delvset$-endpoints are in $\workcell^\ivar$ and whose $\pickset$-endpoints are in $\workcell^\jvar$.
Let $\hat\transportset_\numdem$ be the set of matrices with entries $\left\{ \alpha_{\ivar\jvar} \geq 0 \right\}_{\ivar,\jvar = 1,\ldots,|\cellset|}$,
such that $\sum_\ivar \alpha_{\ivar\jvar} = \left| \pickset_\numdem \cap \workcell^\jvar \right| / \numdem$ for all $\workcell^\jvar \in \cellset$
and $\sum_\jvar \alpha_{\ivar\jvar} = \left| \delvset_\numdem \cap \workcell^\ivar \right| / \numdem$ for all $\workcell^\ivar \in \cellset$;
note $\left\{ \hat\alpha_{\ivar\jvar} \right\}$ itself is an element of $\hat\transportset_\numdem$.
Then the average match length $\avglink_\text{M}(Q_\numdem)$ is bounded below by
\begin{equation*}
\begin{aligned}
\avglink_\text{M}(Q_\numdem) =
\oneover{\numdem} \sum_{\kvar=1}^\numdem \vecnorm{ \Xvar_{\sigma(\kvar)} - \Yvar_\kvar }
  & \geq \sum_{\ivar\jvar} \hat\alpha_{\ivar\jvar} \min_{y\in\workcell^\ivar, x\in\workcell^\jvar} \vecnorm{x-y}	\\
  & \geq \min_{A \in \hat\transportset_\numdem} \sum_{\ivar\jvar} \alpha_{\ivar\jvar} \min_{y\in\workcell^\ivar, x\in\workcell^\jvar} \vecnorm{x-y}.
\end{aligned}
\end{equation*}
The key observation is that
$\lim_{\numdem\to\infty} \left| \left\{ \pickset_\numdem \cap \workcell^\jvar \right\} \right| / \numdem = \pdistr_\picktag( \workcell^\jvar )$,
and $\lim_{\numdem\to\infty} \left| \left\{ \delvset_\numdem \cap \workcell^\ivar \right\} \right| / \numdem = \pdistr_\delvtag( \workcell^\ivar )$, almost surely.
Applying standard sensitivity analysis (see Chapter~5 of~\cite{bertsimas1997introduction}), we observe that the final expression converges almost surely to $\underline\avglink(\cellset)$
as $n \to +\infty$.

\newcommand{\picksense}{ {\lambda_\picktag} }
\newcommand{\pickpertb}{ {\Delta_\picktag} }
\newcommand{\pickbound}{ {\epsilon_\picktag} }
\newcommand{\delvsense}{ {\lambda_\delvtag} }
\newcommand{\delvpertb}{ {\Delta_\delvtag} }
\newcommand{\delvbound}{ {\epsilon_\delvtag} }
Specifically, for any finite partition $\cellset$ and $\epsilon > 0$, consider the sequence of events $E$, where
\[
  E_n =
  \left\{
    (\pickset_\numdem,\delvset_\numdem) \ : \
    \left|
      \min_{A \in \hat\transportset_\numdem} \sum_{\ivar\jvar} \alpha_{\ivar\jvar} \min_{y\in\workcell^\ivar, x\in\workcell^\jvar} \vecnorm{x-y}
      - \underline\avglink(\cellset)
    \right| > \epsilon
  \right\}.
\]
Consider Problem~\ref{problem:traffic_optimistic}; 
let $\delvsense_i$ for all $i$ be the dual variables associated with the constraints $\sum_j \alpha_{ij} = \pdistr_\delvtag(\workcell^i)$;
let $\picksense_j$ for all $j$ be the dual variables associated with the constraints $\sum_i \alpha_{ij} = \pdistr_\picktag(\workcell^j)$.
(These are all finite constants.)
Also, let $\delvpertb_i(n) = \left| \delvset_n \cap \workcell^i \right|/n - \pdistr_\delvtag(\workcell^i)$ for all $i$;
let $\pickpertb_j(n) = \left| \pickset_n \cap \workcell^j \right|/n - \pdistr_\picktag(\workcell^j)$ for all $j$.
Through sensitivity analysis we obtain
\begin{equation*}
\begin{aligned}
\min_{A \in \hat\transportset_\numdem} \sum_{\ivar\jvar} \alpha_{\ivar\jvar} \min_{y\in\workcell^\ivar, x\in\workcell^\jvar} \vecnorm{x-y}
      - \underline\avglink(\cellset)
= \sum_i \delvsense_i \delvpertb_i(n) + o(\delvpertb_i(n)) + \sum_j \picksense_j \pickpertb_j(n) + o(\pickpertb_j(n))
\end{aligned}
\end{equation*}
and so
\begin{equation}
\label{eq:sensitivity_bound}
\begin{aligned}
\left|
  \min_{A \in \hat\transportset_n} \sum_{\ivar\jvar} \alpha_{\ivar\jvar} \min_{y\in\workcell^\ivar, x\in\workcell^\jvar} \vecnorm{x-y}
      - \underline\avglink(\cellset)
\right|
&\leq
  \sum_i | \delvsense_i | | \delvpertb_i(n) | + o(\delvpertb_i(n))		\\
& \qquad\qquad {} + \sum_j | \picksense_j | | \pickpertb_j(n) | + o(\pickpertb_j(n)).
\end{aligned}
\end{equation}
Thus we can choose $\delvbound_i > 0$ for all $i$ and $\pickbound_j > 0$ for all $j$ so that the right-hand side of~\eqref{eq:sensitivity_bound} is less than or equal to $\epsilon$
as long as $| \delvpertb_i | \leq \delvbound_i$ for all $i$ and $| \pickpertb_j | \leq \pickbound_j$ for all $j$.
Let us now define the events $E_{\delvpertb_i}$ for all $i$ and $E_{\pickpertb_j}$ for all $j$, where
\[
\begin{aligned}
  E_{\delvpertb_i n} =
  \left\{
    (\pickset_\numdem,\delvset_\numdem) \ : \
    | \delvpertb_i | > \delvbound_i
  \right\}
&& \text{and} &&
  E_{\pickpertb_j n} =
  \left\{
    (\pickset_\numdem,\delvset_\numdem) \ : \
    | \pickpertb_j | > \pickbound_j
  \right\}.
\end{aligned}
\]
Note that the Strong Law of Large Numbers gives
$\prob{ \limsup_{n \to +\infty} E_{\delvpertb_i n} } = 0$ for all $i$, and
$\prob{ \limsup_{n \to +\infty} E_{\pickpertb_j n} } = 0$ for all $j$.
Observing that
\[
  \prob{ \limsup_{n \to +\infty} E_n } \leq
    \sum_i \prob{ \limsup_{n \to +\infty} E_{\delvpertb_i n} } 
    + \sum_j \prob{ \limsup_{n \to +\infty} E_{\pickpertb_j n} } = 0
\]
we obtain the claim.
\end{proof}

\begin{proof}[Proof of Lemma~\ref{lemma:avgmatch_lowerbound}]
First, we show that $\underline\avglink \leq W(\pdistr_\delvtag,\pdistr_\picktag)$.
Fix $\gamma^*$, some solution arbitrarily close to the infimum of~\eqref{eq:wasserstein}.
Given a partition $\cellset$ let us define the distance function
$\tilde d(x_1,x_2)
  = \min_{x \in \cellset(x_1), y \in \cellset(x_2)}
    \vecnorm{ y - x }$,
where $\cellset(\cdot)$ maps points to their containing cells;
note that $\tilde d$ is everywhere a lower bound for the Euclidean distance.
Noting that the matrix
$\left[ \alpha_{ij}
  = \int_{x_1 \in \cellset^i, x_2 \in \cellset^j}
    \ d\gamma^*(x_1,x_2) \right]_{ij}$
satisfies the constraints of Problem~\ref{problem:traffic_optimistic}, we have
\begin{align*}
  W(\pdistr_\delvtag,\pdistr_\picktag) + \epsilon
  &= \int_{x_1,x_2 \in \env} \vecnorm{ x_2 - x_1 } \ d \gamma^*(x_1,x_2) \\
  &\geq \int_{x_1,x_2 \in \env} \tilde d(x_1,x_2) \ d\gamma^*(x_1,x_2) \\
  &= \sum_{ij}
	\min_{x \in \cellset^i, y \in \cellset^j} \vecnorm{ y-x }
	\int_{ \substack{ x_1 \in \cellset^i \\ x_2 \in \cellset^j } }
	\ d \gamma^*(x_1,x_2) \\
  &\geq \underline\avglink(\cellset).
\end{align*}
Since this inequality holds for \emph{all} $\cellset$ and $\epsilon$ arbitrarily small, we conclude the first part.
Next, we show that for any $\epsilon > 0$,
there exists a partition $\cellset$ so that
$W(\pdistr_\delvtag,\pdistr_\picktag) 
  \leq \underline\avglink(\cellset) + \epsilon$.
For example, we may choose construction $\cellset_\sidecuts$ with resolution $\sidecuts$ sufficiently fine so that
$\vecnorm{x_2-x_1} < \tilde d(x_1,x_2) + \epsilon$ everywhere.
Solve for $\underline\transport(\cellset)$, the solution yielding $\underline\avglink(\cellset)$.
Choose $\hat \gamma \in \Gamma(\pdistr_\delvtag,\pdistr_\picktag)$ to be any distribution that satisfies
$\int_{x_1 \in \cellset^i, x_2 \in \cellset^j} \ d\hat\gamma(x_1,x_2)
  = \underline\alpha_{ij}$ for all $i$ and $j$ (in general, there are infinitely many).
Then we have
$\underline\avglink(\cellset)
  = \int \tilde d(x_1,x_2) \ d\hat\gamma(x_1,x_2)
  \geq \int \vecnorm{x_2-x_1} \ d\hat\gamma(x_1,x_2) - \epsilon
  \geq W(\pdistr_\delvtag,\pdistr_\picktag) - \epsilon$.
\end{proof}

\begin{proof}[Proof of Lemma~\ref{lemma:shadows}]
The delivery sites $Y_1,\ldots,Y_\numdem$ are i.i.d. and independent of $X_1,\ldots,X_\numdem$.
Thus, under the explicitly point-wise independent sampling of Algorithm~\ref{alg:randEBMP}, the $\numdem$ joint variables $(\Yvar_\kvar,J_\kvar,\Xvar'_\kvar)$ for $\kvar=1,\ldots,\numdem$, retain the same set of independencies.
This suffices to prove (i) and the mutual independence of (ii).
To complete the proof, we derive the marginal distribution of some $X'$ explicitly, by writing the joint distribution of $(Y,J,X')$, and then eliminating $(Y,J)$.
The joint distribution can be expressed as
$\pdistr_{YJX'}(y,\jvar,x')
  = \pdistr_\delvtag(y) \ \probtext(\jvar|y) \ \pdistr_{X'|J}(x'|\jvar)
  = \pdistr_\delvtag(y) \left[
      \sum_{\ivar=1}^{\sidecuts^\probdim} \frac{ \alpha_{\ivar\jvar} \indicator{\workcell^\ivar}{y} }{ \pdistr_\delvtag(\workcell^\ivar) } 
      \right]
    \pdistr_\picktag(x'| \Xvar' \in \workcell^\jvar)$;
here $\indicator{\workcell}{\xvar}$ is the indicator function accepting the condition $\xvar\in\workcell$.
Eliminating $Y$, we obtain
$\pdistr_{JX'}(\jvar,\xvar')
  = \int_{\yvar\in\env} \pdistr_{YJX'}(\yvar,\jvar,\xvar') \ d\yvar
  = \int_{y\in\env} \pdistr_\delvtag(y) \left[
      \sum_{\ivar=1}^{\sidecuts^\probdim}
      \frac{ \alpha_{\ivar\jvar} \indicator{\workcell^\ivar}{\yvar} }{ \pdistr_\delvtag(\workcell^\ivar) } 
      \right]
    \pdistr_\picktag(x'| \Xvar'\in\workcell^\jvar) \ dy
  = \pdistr_\picktag(x'| \Xvar'\in\workcell^\jvar) \sum_{\ivar=1}^{\sidecuts^\probdim}
      \alpha_{\ivar\jvar}
  = \pdistr_\picktag(x'|\Xvar'\in\workcell^\jvar) \pdistr_\picktag(\workcell^\jvar)$.
Finally, taking the sum over $j=1,\ldots,\sidecuts^\probdim$, we obtain $\pdistr_{X'}(\xvar') = \pdistr_\picktag(x')$, completing the proof.
\end{proof}

\begin{proof}[Proof of Lemma~\ref{lemma:sampletravel}]
The proof is by derivation:
\begin{equation*}
\begin{split}
\expect \vecnorm{X'-Y}
  &= \int_{\yvar,\xvar'} \vecnorm{\xvar'-\yvar}
      \sum_\jvar \pdistr_{YJX'}(\yvar,\jvar,\xvar') \ d\xvar' \ d\yvar\\
  &= \sum_{\ivar\jvar}
      \frac{ \alpha_{\ivar\jvar} }{ \pdistr_\delvtag(\workcell^\ivar) }
      \int_{\xvar',\yvar \in \workcell^\ivar}
	\vecnorm{\xvar'-\yvar} \pdistr_\delvtag(y)
      \pdistr_\picktag(x'|\Xvar'\in\workcell^\jvar) \ d\yvar \ d\xvar'\\
  &\leq \sum_{\ivar\jvar}
      \frac{ \alpha_{\ivar\jvar} }{ \pdistr_\delvtag(\workcell^\ivar) }
      \max_{\yvar\in\workcell_\ivar,\xvar\in\workcell_\jvar} \vecnorm{\xvar-\yvar}
      \int_{\xvar',\yvar \in \workcell^\ivar} \pdistr_\delvtag(y) \pdistr_\picktag(x'|\Xvar'\in\workcell^\jvar) \ d\yvar \ d\xvar'\\
 & \leq \sum_{\ivar\jvar}
      \alpha_{\ivar\jvar} \max_{\yvar\in\workcell_\ivar,\xvar\in\workcell_\jvar} \vecnorm{\xvar-\yvar}.
      \end{split}
      \end{equation*}
\end{proof}

\begin{proof}[Proof of Lemma~\ref{lemma:diffB}]
Algorithm~\ref{alg:randEBMP2} uses partition $\cellset_\sidecuts$, and optimal solution $\overline\transport(\cellset_\sidecuts)$ of Problem~\ref{problem:traffic_pessimistic},
as inputs to Algorithm~\ref{alg:randEBMP}.
From Lemma~\ref{lemma:sampletravel} we have
$\expect \vecnorm{X'-Y}
  \leq \sum_{\ivar,\jvar}
    \overline\alpha_{\ivar\jvar}
    \max_{\yvar \in \workcell^\ivar, \xvar \in \workcell^\jvar}
    \vecnorm{ \xvar - \yvar }$.
Let $\underline\transport(\cellset_\sidecuts)$ be an optimal solution to Problem~\ref{problem:traffic_optimistic} over partition $\cellset_\sidecuts$, i.e.
$
  \underline\avglink(\cellset_\sidecuts)
  = \sum_{\ivar\jvar}
      \underline\alpha_{\ivar\jvar}
      \min_{y\in\workcell^\ivar,x\in\workcell^\jvar}
      \vecnorm{x-y}$.
Note because $\overline\transport(\cellset_\sidecuts)$ is the optimal solution to Problem~\ref{problem:traffic_pessimistic}, we also have
$\expect \vecnorm{X'-Y}
  \leq \sum_{\ivar\jvar}
    \underline\alpha_{\ivar\jvar}
    \max_{\yvar \in \workcell^\ivar, \xvar \in \workcell^\jvar}
    \vecnorm{ \xvar - \yvar }$.
Finally, we have that
$\underline\avglink(\cellset) \leq W(\pdistr_\delvtag,\pdistr_\picktag)$
for any partition $\cellset$ (see proof of Lemma~\ref{lemma:avgmatch_lowerbound} in Appendix~\ref{appendix:lemmas}).
Combining these results, we obtain
\begin{align*}
  \expect\vecnorm{X'-Y} - W(\pdistr_\delvtag,\pdistr_\picktag)
  & \leq
  \sum_{\ivar\jvar} \underline\alpha_{\ivar\jvar}
  \left[
    \max_{ \substack{ y\in\workcell^\ivar \\ x\in\workcell^\jvar } }
    \vecnorm{x-y}
    -\min_{ \substack{ y\in\workcell^\ivar \\ x\in\workcell^\jvar } }
    \vecnorm{x-y}
  \right]	\\
  & \leq
  ( 2\sidelength\sqrt{\probdim} / \sidecuts )
  \sum_{\ivar\jvar} \underline\alpha_{\ivar\jvar}
  = 2\sidelength\sqrt{\probdim} / \sidecuts.
\end{align*}

\end{proof}

\begin{proof}[Proof of Lemma~\ref{lemma:ebmp_upper}]
We first focus on the case $d\geq 3$.
The proof relies on the characterization of the length of the bipartite matching produced by Algorithm~\ref{alg:randEBMP2} (which also bounds the length of the \emph{optimal} matching).
By the triangle inequality, the length $\tilde \tourlen_\text{M}(Q_\numdem)$ of its matching is at most the sum of the matches between $\pickset$ and $\pickset'$,
plus the distances from the sites in $\delvset$ to their shadows in $\pickset'$, i.e.
\newcommand{\shadowmatch}{\M((\pickset',\pickset))}
\newcommand{\shadowlong}{\tourlen_{\delvset \pickset'}}
\newcommand{\tosubtract}{ {\numdem W(\pdistr_\delvtag,\pdistr_\picktag)} }
\begin{align}\label{eq:fract}
	\tilde \tourlen_\text{M}(Q_\numdem) \leq \shadowmatch + \shadowlong ,
\end{align}
where
$\shadowlong = \sum_{(\Yvar,\Xvar') \in \overline M} \vecnorm{X'-Y}$.
By subtracting on both sides of equation \eqref{eq:fract} the term $\tosubtract$,
and dividing by $\numdem^{1-1/\probdim}$, we obtain
\begin{equation*}
\begin{aligned}
  \frac{
    \tilde\tourlen_\text{M}(Q_\numdem) - \tosubtract
  }{ \numdem^{1-1/\probdim} }
   & \leq
    \frac{ \shadowmatch }{ \numdem^{1-1/\probdim} }
    + \frac{ \shadowlong - \tosubtract 
      }{ \numdem^{1-1/\probdim} }
      \nonumber \\
  & =
    \frac{ \shadowmatch }{ \numdem^{1-1/\probdim} }
    \! +\! \frac{ \shadowlong - \numdem\, \expect\vecnorm{X'-Y} }{ \numdem^{1-1/\probdim} }
      \nonumber 
    \!+\! O\left( \frac{ \numdem^{1/\probdim} }{ \sidecuts } \right) ,
\end{aligned}
\end{equation*}
where the last equality follows from Lemma~\ref{lemma:diffB}.
Lemma~\ref{lemma:shadows} allows us to apply equation~\eqref{eq:EBMP_common}
to $\shadowmatch$, and so the limit of the first term is
\[
\lim_{n\to +\infty} \frac{ \shadowmatch }{ \numdem^{1-1/\probdim} } = \beta_{\text{M},d}\, \int_{\env} \den_{\text{P}}(x)^{1-1/d}\, dx
\]
almost surely.
We observe that
$\numdem \, \expect\vecnorm{X'-Y}$ is the expectation of $\shadowlong$,
and so the second term goes to zero almost surely (absolute differences law, Section~\ref{subsec:absolute_diff}).
The resolution function of Algorithm~\ref{alg:randEBMP2} ensures that the third term vanishes.
Collecting these results, we obtain the inequality in (\ref{eq:BPMdistinct}) with $\phi=\pdistr_\picktag$.
To complete the proof for the case $d\geq 3$, we observe that Algorithm~\ref{alg:randEBMP} could be alternatively defined as follows:
the points in $\pickset$ generate a set $\delvset'$ of shadow sites; the intermediate matching is now between $\delvset$ and $\delvset'$.
One can then prove results congruent with the results in Lemmas~\ref{lemma:shadows}, \ref{lemma:sampletravel}, and~\ref{lemma:diffB}.
By following the same line of reasoning, one can finally prove the inequality in (\ref{eq:BPMdistinct}) with $\phi=\pdistr_\delvtag$. This concludes the proof for the case $d\geq 3$. The proof for the case $d=2$ follows the same logic and is omitted in the interest of brevity.
\end{proof}


\end{document}